\newcommand{\INITIALLY}{\REQUIRE{}}
\newcommand{\ROUND}{\ENSURE{}}
\newtheoremstyle{newthm}
  {\topsep}   
  {\topsep}   
  {\itshape}  
  {0pt}       
  {\scshape} 
  {.}         
  {5pt}  
  {}          
\theoremstyle{newthm}
\newtheorem{thm}{Theorem}
\newtheorem{lem}[thm]{Lemma}
\DeclareMathOperator{\deaf}{deaf}
\DeclareMathOperator{\dist}{dist}
\DeclareMathOperator{\diam}{diam}
\newcommand{\IR}{\mathbb{R}}
\newcommand{\N}{\mathcal{N}}
\newcommand{\A}{\mathcal{A}}
\newcommand{\E}{_E}
\newcommand{\EP}{_{E'}}
\renewcommand{\leq}{\leqslant}
\renewcommand{\ge}{\geqslant}
\renewcommand{\geq}{\geqslant}
\newcommand{\R}{R}
\renewcommand{\P}{\mathcal{P}}
\DeclareMathOperator\In{In}
\DeclareMathOperator\Out{Out}
\newtheorem{theorem}{Theorem}
\newtheorem{lemma}[thm]{Lemma}
\newtheorem{corollary}[thm]{Corollary}
\theoremstyle{definition}
\newtheorem{definition}[thm]{Definition}
\title{Tight Bounds for Asymptotic and Approximate Consensus}
\author{Matthias F\"ugger$^1$ \and Thomas Nowak$^2$ \and Manfred Schwarz$^3$}
\date{\small
  ${}^1$ CNRS, LSV, ENS Paris-Saclay, Universit\'e Paris-Saclay, Inria\\
  \texttt{mfuegger@lsv.fr}\\
  ${}^2$ Universit\'e Paris-Sud\\
  \texttt{thomas.nowak@lri.fr}\\
  ${}^3$ ECS, TU Wien\\
  \texttt{mschwarz@ecs.tuwien.ac.at}}
\begin{document}
\maketitle

\begin{abstract}
We study the performance of asymptotic and approximate consensus
  algorithms under harsh environmental conditions.
The asymptotic consensus problem requires a set of agents to repeatedly set
  their outputs such that the outputs converge to a common value within the
  convex hull of initial values.
This problem, and the related approximate consensus problem, are fundamental
  building blocks in distributed systems where exact consensus among agents is not
  required or possible, e.g., man-made distributed control systems, and have
  applications in the analysis of natural distributed systems, such as flocking
  and opinion dynamics.
We prove tight lower bounds on the contraction rates of asymptotic consensus algorithms
  in dynamic networks, from which we deduce bounds on the time complexity of approximate
  consensus algorithms.
In particular, the obtained bounds show optimality of asymptotic and approximate
  consensus algorithms presented in [Charron-Bost et al., ICALP'16] for certain
  dynamic networks, including the weakest dynamic network model in which
  asymptotic and approximate consensus are solvable.
As a corollary we also obtain asymptotically tight bounds for asymptotic consensus
  in the classical asynchronous model with crashes.

Central to our lower bound proofs is an extended notion of valency, the set of
  reachable limits of an  asymptotic consensus algorithm starting from a given
  configuration.
We further relate topological properties of valencies to the solvability of
  exact consensus, shedding some light on the relation of
  these three fundamental problems in dynamic networks.
\end{abstract}

\section{Introduction}
In the {\em asymptotic consensus\/} problem a set of agents, each starting
  from an initial value in~$\IR^d$, update their values such that all agents' values
  converge to a common value within the convex hull of initial values.
The problem is closely related to the {\em approximate consensus\/} problem,
  in which agents have to irrevocably decide on values
  that lie within a predefined distance $\varepsilon > 0$ of each other.
The latter is weaker than the {\em exact consensus\/} problem in which 
  agents need to decide on the same value.
Both the asymptotic and the approximate consensus problems have not only a
  variety of applications in the design of man-made control systems
  like sensor fusion~\cite{BS92}, clock synchronization~\cite{LR06}, formation
  control~\cite{EH01},
  rendezvous in space~\cite{LMA05}, or load balancing~\cite{Cyb89}, but also for
  analyzing natural systems like flocking~\cite{VCBCS95}, firefly
  synchronization~\cite{MS90},
  or opinion dynamics~\cite{HK02}.
These problems often have to be solved under harsh environmental restrictions in which
  exact consensus is not achievable, or too costly to achieve:
  with limited computational power and local storage,
  under restricted communication abilities, and in presence of communication uncertainty.

In this work we study the performance of asymptotic and approximate consensus algorithms
  under such harsh conditions.
Specifically, we study algorithms in a {\em network model\/} $\N$
  with round-based computation and a dynamic communication topology whose
  {\em directed\/} communication graphs are chosen each round from a predefined set $\N$ of
  communication graphs.
While this model naturally captures highly unstable communication topologies, we
  later on show that it also allows to assess performance within classical, more stable,
  distributed fault models.

\medskip

\noindent
{\bf Solvability and Algorithms.}
In previous work~\cite{CBFN15}, Charron-Bost et al.\ showed that asymptotic
  consensus is solvable precisely within {\em rooted network models\/}
  in which all communication graphs contain rooted spanning trees.
These rooted spanning trees need not have any edges in common and can change
  from one round to the next.

An interesting special case of rooted network models are network models
  whose graphs are {\em non-split}, that is,
  any two agents have a common incoming neighbor.
Their prominent role is motivated by two properties: (i) They occur as
  communication graphs in benign classical distributed failure models. 
For example, in synchronous systems with crashes, in asynchronous systems with
  a minority of crashes, and synchronous systems with send omissions.
(ii) In~\cite{CBFN15}, Charron-Bost et al.\ showed that non-split graphs also
  play a central role in arbitrary rooted network models: they showed that any
  product of $n-1$ rooted graphs with $n$ nodes is non-split, allowing to
  transform asymptotic consensus algorithms for non-split network models
  into their {\em amortized\/} variants for rooted models.

Interestingly, solvability in any rooted network model is already provided by
  deceptively simple algorithms~\cite{CBFN15}:
  so-called {\em averaging\/} or {\em convex combination\/} algorithms, in which agents repeatedly broadcast
  their current value, and update it to some weighted average of
  the values they received in this round.
One instance, presented by Charron-Bost et al.~\cite{CBFN16} is the
  {\em midpoint algorithm}, in which agents update their value to the midpoint of
  the set of received values, i.e., the average of the smallest and the largest
  of the received values.

Regarding time complexity, for dimension $d=1$, the amortized midpoint algorithm was shown
  to have a contraction rate (of the range of reachable values; see Section~\ref{sec:valency} for a formal definition)
  of $\sqrt[n-1]{{1}/{2}}$ in arbitrary rooted network models with~$n$ agents, and the midpoint
  algorithm of $\frac{1}{2}$ in non-split network models~\cite{CBFN16}.
The latter is \emph{optimal\/} for ``memoryless'' averaging algorithms, which
  only depend on the values received in the current round~\cite{CBFN16}.

A natural question is whether {\em non-averaging or non-memoryless algorithms}, i.e.,
  algorithms that (i) do not necessarily set their output values to within the
  convex hull of previously received values or (ii) whose output is a function not only of the previously received values,
  allow faster contraction rates.
In the context of classical failure models, deriving lower bounds independent of such assumptions, is a long-standing open problem
  raised by Dolev et al.\ in~\cite{DLPSW86}.
As an example for (i), consider the algorithm where each agent sends an equal fraction of its current output value to all out-neighbors
  and sets its output to the sum of values received in the current round.
Note that the algorithm is not a convex combination algorithm as its output may lie outside the convex hull of the values of its in-neighbors.
However, it solves asymptotic consensus algorithm for a fixed directed communication graph.  
Other examples of algorithms that violate (i) and (ii) are from control theory, where the usage of overshooting fast second-order controllers
  is common; see, e.g.,~\cite{Ast10}.
  
\medskip

\noindent
{\bf Contribution.} We prove asymptotically tight lower bounds on the contraction rate of any
  asymptotic consensus algorithm regardless of the structure of the algorithm: algorithms can be full-information and
  agents can set their outputs outside the convex hull of received values.  
This, e.g., includes using higher-order filters in contrast to the $0$-order filters of averaging algorithms.
In particular, the following lower bounds hold for a network model $\N$ with $n$ agents:
If exact consensus is solvable in $\N$, an optimal contraction rate
  of~$0$ can be achieved.
Otherwise:

In a system with $n=2$ agents, the contraction rate is lower bounded by
  $1/3$ (Theorem~\ref{thm:2}).
  This is tight~\cite{CBFN16}.

For an arbitrary communication graph~$G$, we define the set $\deaf(G) = \{F_1,\dots,F_n\}$,
  where $F_i$ is derived from $G$ by making agent $i$ {\em deaf\/}
  in $F_i$, i.e., removing the incoming edges of $i$ in $G$.
  In a system with $n\ge 3$ agents, if $\N$ contains $\deaf(G)$,
  then the contraction rate is lower bounded by $1/2$ (Theorem~\ref{thm:3}).
  This is tight in non-split network models because of the midpoint algorithm~\cite{CBFN16}.

We then show that if $\N$ contains certain rooted graphs $\Psi$,
  the contraction rate is lower bounded by $\sqrt[n-2]{1/2}$ (Theorem~\ref{thm:4}).
  This is asymptotically tight in rooted network models because of the amortized midpoint algorithm~\cite{CBFN16}.
  Specifically, this proves optimality of the amortized midpoint algorithm for the {\em weakest}, i.e., largest,
  network model in which asymptotic and approximate consensus is solvable: the set of all directed rooted communication graphs.
  
For arbitrary network models we show that in a system with $n\ge 3$ agents,
  any asymptotic consensus algorithm must have a contraction rate of at least $1/(D+1)$, where $D$,
  the so-called $\alpha$-diameter of $\N$, i.e.,
  the smallest value which allows a connection of any pair of communication graphs in $\N$ via an indistinguishability chain 
  of length at most $D$ (Theorem~\ref{thm:diam}).
  
We demonstrate how to apply the above mentioned bound to obtain new lower bounds on contraction rates
  for classical failure models as an immediate corollary.
  Specifically, we consider asynchronous message passing system of size $n$ with up to $f < n/2$ crashes.
  For such systems, {\em algorithms operating in asynchronous rounds\/} are widely used~\cite{Lyn96,DLPSW86,CS09}:
    each agent waits for $n-f$ messages corresponding to the current round, updates its state based on the received messages
    and its previous state, and broadcasts the next round's messages.
  
  We show that no algorithm operating in asynchronous rounds can achieve a contraction rate better
    than $\frac{1}{\lceil n/f \rceil+1}$ (Theorem~\ref{thm:crash}).
  This shows that the asynchronous algorithms for systems of size $n > 5f$ with up to $f$ Byzantine failures by Dolev et al.~\cite{DLPSW86}
    and for systems of size $n > 2f$ with up to $f$ crashes by Fekete~\cite{Fek94} have asymptotically optimal contraction rates
    for round-based algorithms.

  We then present an algorithm for $n > f$ that does {\em not operate in asynchronous rounds\/} and achieves a contraction rate of
    $0$, demonstrating a large gap between round-based and non round-based algorithms for asymptotic consensus.
  
Table~\ref{tab:sum} summarizes lower and upper bounds.
Central to the proofs is the concept of the {\em valency of a configuration}
  of an asymptotic consensus algorithm, defined as the set of limits reachable
  from this configuration.
By studying the changes in valency along executions, we infer bounds
  on the contraction rate.

\begin{table*}
\centering
\begin{tabular}{| c | c | c | c | c | c | c|}
\hline
&  \multicolumn{3}{ c| }{network model} &  \multicolumn{2}{ c| }{asynchronous + $f$ crashes}\\
\cline{2-6}
agents & general & non-split with & general & round-based alg.\ & arbitrary alg.\ \\
& non-split & $\alpha$-diameter $D$ & rooted & $0 < f < \frac{n}{2}$  & $0 < f < n$\\
\hline
$n=2$ & $\frac{1}{3}^*$ & $0$ or $\frac{1}{3}^*$ & $\frac{1}{3}^*$ & N/A & \\
\cline{1-5}
$n\ge 3$ & $\frac{1}{2}^*$ & $0$ or $\left[\frac{1}{D+1}^*,\frac{1}{2}\right]$ & $\left[\sqrt[n-2]{\frac{1}{2}}^*,\sqrt[n-1]{\frac{1}{2}}\right]$ & $\left[\frac{1}{\lceil n/f\rceil+1}^*,\frac{1}{\lceil n/f\rceil-1}\right]$ & $0^{*}$\\
\hline
\end{tabular}
\caption{Summary of lower and upper bounds on contraction rates.
  New bounds proved in this work are marked with an ${}^*$. The three left columns are worst-case
  contraction rates for the case the network model is (i) a general non-split,
  (ii) a non-split network model with $\alpha$-diameter $D$, and (iii) a general rooted network model.
  For (ii) contraction rates are $0$ iff exact consensus is solvable.
  The right two columns summarize the bounds for the classical model of an asynchronous system with
    crashes.}
\label{tab:sum}
\end{table*}

We extend the above results on contraction rates to
derive new lower bounds on the decision time of any approximate consensus
algorithm: Let $\Delta > 0$ be the largest distance between initial values.
For $n=2$ we obtain $\lceil\log_3\frac{\Delta}{\varepsilon}\rceil$ (Theorem~\ref{thm:2:approx}).
For $n\ge 3$ and models that include $\deaf(G)$ for a communication graph $G$, we show
  $\lceil\log_2\frac{\Delta}{\varepsilon}\rceil$ (Theorem~\ref{thm:3:approx}),
  and
for $n\ge 4$ and models that include certain $\Psi$ graphs, we obtain
  $(n-2)\lceil\log_2\frac{\Delta}{\varepsilon}\rceil$ (Theorem~\ref{thm:4:approx}).
For arbitrary network models in which exact consensus is not solvable, we show
  $\log_{D+1}\frac{\Delta}{\varepsilon n}$ (Theorem~\ref{thm:diam:approx}).
Again, deciding versions of the asymptotic consensus algorithms
  from~\cite{CBFN16} have matching time complexities;
  showing optimality of these algorithms also for solving approximate consensus. 

\medskip

\noindent
{\bf Related work.} The problem of asymptotic consensus in dynamic networks has been extensively studied in distributed computing
  and control theory,
  see, e.g.,~\cite{EB13,Mor05,Cha11,AB06,CMA08b,BNO08}.
The question of guaranteed convergence rates and decision times of the corresponding approximate consensus problems,
  naturally arise in this context.
Algorithms with convergence times exponential in the number of agents have been
  proposed, e.g., in~\cite{CMA08b}.

Olshevsky and Tsitsiklis~\cite{OT11}, proposed an algorithm with polynomial
  convergence time in bidirectional networks
  with certain stability assumptions on the occurring communication graphs.
The bounds on convergence times were later on refined in~\cite{NOOT09}.
Chazelle~\cite{Cha11} proposed an averaging algorithm with polynomial
  convergence time, which works in any bidirectional connected network model.

To speed up convergence times, algorithms where agents set their output 
  based on values that have been received in rounds prior to the previous round
  have also been considered in literature:
Olshevsky~\cite{Ols15} proposed a linear convergence time algorithm that uses
  messages from two rounds, however, being restricted to
  fixed bidirectional communication graphs.
In~\cite{YSSBG13}, a linear convergence time algorithm for a possibly
  non-bidirectional fixed topology was proposed.
It requires storing all received values.
In previous work~\cite{CBFN16}, Charron-Bost et al.\  presented the midpoint
  algorithm, which has constant convergence time in non-split network models and
  the amortized midpoint algorithm with linear convergence time in rooted
  network models.

To the best of our knowledge, the only lower bound on convergence rate in
  dynamic networks has been shown in~\cite{CSM05}:
  the authors proved that the convergence rate of a specific averaging algorithm
  in a non-split network model with $n$ agents is at least $1-\frac{1}{n}$.

In the context of classical distributed computing failure scenarios, Dolev et al.~\cite{DLPSW86} studied the related approximate consensus problem:
  they considered fully-connected synchronous distributed systems with up to $f$ Byzantine agents, and its asynchronous variant.
The two presented algorithms require $n \ge 3f+1$ for the synchronous and $n \ge 5f+1$ for the asynchronous distributed system,
  the first of which is optimal in terms of resilience~\cite{FLM86}.
The latter result was improved to $n \ge 3f+1$ in~\cite{AAD04}.
Both papers also address the question of optimal contraction rate in such systems.
Since, however, in synchronous systems with $n \ge 3f+1$ exact consensus is solvable, leading to a contraction rate of $0$,
  the authors consider bounds for round-by-round contraction rates.
In~\cite{DLPSW86} they showed that the achieved round-by-round contraction rate of $\frac{1}{2}$ is actually tight for a certain class of algorithms
  that repeatedly set their output to the image of a so-called cautious function applied to the multiset of received values.
A lower bound for arbitrary algorithms, however, remained an open problem.
In higher dimensions, i.e, for any $d\geq 1$, Mendes et al.~\cite{MHVG15}
proposed algorithms with
  convergence time of $d\cdot \lceil \log_2\frac{\sqrt{d}\Delta}{\varepsilon}\rceil$
  under the optimal resiliency condition $n\geq f \cdot (d+2)+ 1$.

Fekete~\cite{Fek90} also studied round-by-round contraction rates for several failure scenarios, again, all in which exact consensus is
    solvable.
He proved asymptotically tight lower bounds for synchronous distributed systems in presence of crashes, omission, and Byzantine agents.
The bounds hold for approximate consensus algorithms that potentially take into account information from all previous rounds.    
In~\cite{Fek94}, Fekete presented an algorithm for asynchronous message-passing systems with minority of crashes, also proving a
  tight lower bound on the contraction rate of any algorithm operating in asynchronous rounds for such systems.
   
\section{Dynamic System Model}\label{sec:model}
 
We consider a set $[n]= \{1,\dots,n \}$ of $n$ agents (also classically called processes).
We assume a distributed, round-based computational model in the spirit
	of the Heard-Of model~\cite{CS09}.
Computation proceeds in {\em rounds}: In every round, each agent sends
	its state to its outgoing neighbors, receives messages from its
        incoming neighbors, and finally updates its state according to
	a deterministic local algorithm, i.e., a transition function that maps the collection
	of incoming messages  to a new state.
Rounds are communication closed in the sense that no agent receives
	messages in round~$t$ that are sent in a round different from~$t$.

Communications that occur in a round are modeled by a {\em directed graph\/} with a node 
	for each agent.
Since an agent can obviously communicate with itself instantaneously, every 
	communication graph contains  a self-loop at each node.	
In the following, we use the {\em product\/} of two communication graphs $G$ and $H$, 
	denoted $G \circ H$, which  is the directed graph  with an edge from $i$ to $j$ 
	 if there exists~$k $ such that $(i,k)$ and $ (k,j) $ are two edges in $G$ and $H$, 
	 respectively.  

We fix a nonempty set of communication graphs $\N$ which determines the 
	{\em network model}.
To fully model dynamic networks in which topology may change continually and 
	unpredictably,  the communication graph at each round  is chosen arbitrarily 
	among~$\N$.
Thus we form the infinite sequences of graphs in~$\N$ which we call 
	{\em communication patterns in\/} $\N$.
In each communication pattern, the communication graph at round~$t$ is denoted 
	by~$G_t$, and $\In_i(t) = \In_i(G_t)$ and $\Out_i(t)=\Out_i(G_t)$ are the sets of incoming 
	and outgoing neighbors (in-neighbors and out-neighbors for short) of
	agent~$i$ in $G_t$.
        
Let us fix an algorithm $\A$; a {\em configuration\/} is a collection of $n$ agent states, one per agent. 
We assume that all agents pick their initial state from the same set of states.
Obviously the picks can be different for different agents.
Since agents are deterministic, given some configuration~$C$ and some communication graph~$G$, 
	the algorithm~$\A$ uniquely determines a new configuration, which  we simply denote~$G.C$ 
	if no confusion can arise.
Then the  {\em execution $E$ of\/} $\A$ from the initial configuration $C_0$ and with the communication
	pattern $\big( G_t \big )_{t \geq 1}$ is the sequence 
	$C_0, G_1,  \dots, C_{t-1}, G_t, C_t, \dots $
	of alternating configurations and communication graphs such that for each round~$t$,
	\mbox{$C_t = G_t. C_{t-1}$}.
The set of executions with communication patterns in $\N$, denoted $\mathcal{E}^{\N}_{\A}$,
	with the distance
	$ \dist(E,E') = 1/2^{\theta}$,  
	where $\theta$ is the first index at which $E$ and $E'$ differ,
	is a compact metric space (e.g., see~\cite{LM95}).
	
Finally, any configuration that occurs in some execution with a communication pattern in $\N$
	is said to be {\em reachable from $C_0$ by $\A$ in\/} $\N$.
In the sequel, the algorithm and the network model are omitted if no confusion can arise.

\subsection{Asymptotic Consensus}

We assume that the local state of agent~$i$ includes a variable $y^i$ in Euclidean 
	$d$-space, and we	
	let $y_{\E}^i (t)\in \IR^d$ denote the value of~$y^i$ at the end of round~$t$
	in execution $E$.
Then we let 
	$ y_{\E}(t) = \big( y_{\E}^1(t), \dots, y_{\E}^n(t) \big)$. 
We write 
\[
\displaystyle\diam(A) = \sup_{x,y\in A} \lVert x-y \rVert
\]
for the diameter of
$A\subseteq \IR^d$ and $\Delta(y(t)) = \diam\{y^1(t),\dots, y^n(t)\}$ for
the diameter of the set of values in round~$t$.

We say an algorithm {\em solves the asymptotic consensus problem\/} in a network model $\N$ 
	if the following holds for every execution $E$ with a communication pattern in $\N$: 
	\begin{itemize}
	\item{\em Convergence.\/} Each sequence $\big(y_{\E}^i (t)  \big)_{t\geq 0}$ converges.
	
	\item{\em Agreement.\/} If $y_{\E}^i(t)$ and $y_{\E}^j(t)$ converge, then they have a common limit.
	
	\item{\em Validity.\/} If $y_{\E}^i(t)$ converges, then its limit is in the convex hull of the
	initial values $y_{\E}^1(0), \dots, y_{\E}^n(0)$.
	
	\end{itemize}
Observe that the {\em consensus function\/} defined by
	$y^* \ : \ E\in (\mathcal{E} , \dist) \mapsto y_{\E}^* \in (\IR^d, \lVert . \lVert)$,
	where $y_{\E}^* $ denotes the common limit of the $n$ sequences $\big( y_{\E}^i(t) \big)_{t\geq 0}$,
	is a priori not continuous.
And indeed, there exist asymptotic consensus algorithms whose consensus
functions are not continuous.

\subsection{Solvability of Asymptotic Consensus with Convex Combination
Algorithms}

In a previous paper~\cite{CBFN15}, Charron-Bost et al.\ proved the following characterization of network models in which 
	asymptotic consensus is solvable.

\begin{thm}[\cite{CBFN15}]\label{thm:CBFN15}
In any dimension $d$, the asymptotic consensus problem is solvable in a network model $\N$ 
	if and only if each graph in $\N$  has a rooted spanning tree.
\end{thm}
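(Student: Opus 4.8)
The plan is to prove both implications, reducing throughout to dimension $d=1$: a one-dimensional counterexample disproves solvability in every dimension, and a one-dimensional algorithm that converges in each coordinate yields a $d$-dimensional one, since coordinatewise convex combinations stay in the convex hull.

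For \emph{necessity} I would argue by contraposition. Suppose some $G \in \N$ has no rooted spanning tree. Then its condensation (the DAG of strongly connected components) has at least two source components, i.e.\ two disjoint nonempty node sets $S_1, S_2$, each receiving no edges from outside itself. Fix the constant communication pattern $G_t = G$. Since every in-neighbor of a node in $S_1$ lies in $S_1$, an induction on rounds shows that the trajectory of each $S_1$-node depends only on the initial values assigned to $S_1$, and symmetrically for $S_2$. Now initialize every node in $S_1$ to $0$ and every node in $S_2$ to $1$. Comparing with the all-zero execution and invoking \emph{Validity}, the $S_1$-nodes must converge to $0$; comparing with the all-one execution, the $S_2$-nodes must converge to $1$. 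This violates \emph{Agreement}, so no algorithm solves asymptotic consensus in $\N$.

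For \emph{sufficiency} I would exhibit an averaging algorithm in which each agent updates $y^i(t) = \sum_{j \in \In_i(t)} w_{ij}(t)\, y^j(t-1)$ with strictly positive weights summing to $1$ and bounded below by some $\alpha > 0$ (equal weights $1/|\In_i(t)| \ge 1/n$ suffice). Validity is immediate, as each value is a convex combination of the previous ones. The sequence of convex hulls $\hull\{y^1(t),\dots,y^n(t)\}$ is then a nested decreasing family of compact convex sets; if its diameter tends to $0$, the family shrinks to a single point, which simultaneously yields Convergence, Agreement, and Validity. It therefore remains to show $\diam\{y^1(t),\dots,y^n(t)\} \to 0$ under every communication pattern of rooted graphs.

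The heart of the argument, and the step I expect to be the main obstacle, is this contraction estimate. Writing the update as $y(t) = A(t)\,y(t-1)$ for row-stochastic matrices $A(t)$ whose positive entries are at least $\alpha$ and occur exactly on the edges of $G_t$, I would use the fact (established in \cite{CBFN15} and recalled in the excerpt) that any product of $n-1$ rooted graphs on $n$ nodes is \emph{non-split}. Consequently the block product $A(t+n-2)\cdots A(t)$ is \emph{scrambling}: every pair of rows shares a column in which both entries are positive, hence at least $\alpha^{n-1}$, which forces its coefficient of ergodicity below $1 - \alpha^{n-1}$. This contracts the diameter of the value set by a constant factor every $n-1$ rounds, giving geometric convergence to $0$. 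Establishing the graph-product-to-non-split step and the scrambling-to-contraction bound carefully is where the real work lies; the rest is assembly.
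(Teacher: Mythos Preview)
This theorem is not proved in the present paper; it is quoted from \cite{CBFN15}, and the paper only remarks that sufficiency is obtained there via convex combination (weighted-average) algorithms. Your proposal is correct and reconstructs exactly that argument: necessity via two disjoint source components of a non-rooted graph combined with an indistinguishability-plus-Validity argument, and sufficiency via equal-weight averaging together with the fact (also from \cite{CBFN15}) that any product of $n-1$ rooted graphs is non-split, so the corresponding stochastic product matrix is scrambling with ergodicity coefficient at most $1-\alpha^{n-1}$.
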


For the proof of the sufficient condition, Charron-Bost et al.\ focused on {\em convex combination algorithms\/} where each
   agent $i$ updates its variable $y^i$ to a value within the convex hull of values~$y^j(t-1)$ it has just received.
In particular, they showed in~\cite{CBFN15} that convex combination algorithms where agents update their $y^i$
   via a weighted average of the received values, where weights only depend on the currently received values,
   solve asymptotic consensus in rooted network models.
Such algorithms are memoryless, require little computational overhead and, more importantly, have the benefit of
   working in anonymous networks.
Interestingly, their consensus function $y^*$ is continuous.

\begin{thm}\label{thm:continuity}
The consensus function of every convex combination algorithm
that solves asymptotic consensus is continuous on
the set of its executions.
\end{thm}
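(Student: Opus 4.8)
The plan is to exploit the defining property of convex combination algorithms, namely that every updated value lies in the convex hull of the values received in that round, together with the ultrametric structure of the execution space $(\mathcal{E}, \dist)$. First I would show that along any execution $E$ the convex hulls $H_E(t) = \hull\{y_E^1(t),\dots,y_E^n(t)\}$ form a nested, non-increasing family of compact convex sets. Indeed, each $y_E^i(t)$ is a convex combination of the values $y_E^j(t-1)$ received from the in-neighbors of $i$ in round $t$, so $y_E^i(t) \in H_E(t-1)$ for every $i$, and therefore $H_E(t) \subseteq H_E(t-1)$. In particular the diameters $\Delta(y_E(t)) = \diam H_E(t)$ are non-increasing. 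Since $\A$ solves asymptotic consensus, all $n$ coordinate sequences converge to the common limit $y_E^*$, so $\Delta(y_E(t)) \to 0$; and because $y_E^i(t) \in H_E(T)$ for every $t \ge T$ and $H_E(T)$ is closed, the limit satisfies $y_E^* \in H_E(T)$ for every round $T$.

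Next I would translate closeness in $\dist$ into agreement of configurations. If $\dist(E,E')$ is small, then $E$ and $E'$ share a long common prefix, so there is a round $T$, with $T \to \infty$ as $\dist(E,E') \to 0$, for which the communication graphs $G_1,\dots,G_T$ and all configurations through round $T$ coincide; in particular $y_{E'}^i(T) = y_E^i(T)$ for every agent $i$, and hence $H_{E'}(T) = H_E(T)$. Applying the previous step to both executions confines both limits to this one set, $y_E^*, y_{E'}^* \in H_E(T)$, so that $\lVert y_E^* - y_{E'}^* \rVert \le \diam H_E(T) = \Delta(y_E(T))$.

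Finally I would assemble these facts into a continuity argument at an arbitrary execution $E$: given $\varepsilon > 0$, I choose $T$ with $\Delta(y_E(T)) < \varepsilon$, which is possible by the convergence of the diameter, and then choose $\delta > 0$ small enough that $\dist(E,E') < \delta$ forces the two executions to agree through round $T$; the bound above then yields $\lVert y_E^* - y_{E'}^* \rVert < \varepsilon$, establishing continuity at $E$. The only nontrivial ingredient, and the real heart of the argument, is the nestedness of the hulls: it is exactly what the convex combination hypothesis buys, and it is what lets agreement of the configurations at a single round $T$ trap \emph{both} limits in a set of arbitrarily small diameter. For arbitrary asymptotic consensus algorithms this nestedness fails and the trapping breaks down, consistent with the remark above that such consensus functions can be discontinuous.
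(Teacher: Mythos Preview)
Your proof is correct and follows essentially the same line as the paper's: both hinge on the fact that the convex combination property forces the limit $y_E^*$ to lie in the convex hull $H_E(T)$ of the round-$T$ values, so once two executions share a prefix through round~$T$, their limits are trapped in a common set of diameter $\Delta(y_E(T))$. The only organizational difference is that you make the nestedness $H_E(t)\subseteq H_E(t-1)$ explicit and bound $\lVert y_E^*-y_{E'}^*\rVert$ directly by $\diam H_E(T)$, whereas the paper argues sequentially and routes through an $\varepsilon/3$ triangle inequality; neither approach buys anything the other does not.
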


\newcommand{\proofthmcontinuity}{
Let $(E_s)_{s\geq 0} $ be a sequence of executions that converges to $E$. 
By definition of the distance on the execution space, this in particular means
that 
\begin{equation}\label{eq:thm:continuity:top}
\forall t\geq 0\   \exists s_t\  \forall s \geq s_t\colon\quad   y_s(0)
= y(0),\ y_s(1) = y(1),\ \dots,\  y_s(t) = y(t) 
\end{equation}
where $y_s(t)$ and $ y(t) $ denote $y_{{\E}_s}(t)$ and $ y_{\E}(t) $,
respectively.

Let $\varepsilon >0$. By definition of the limit~$y^*$ of execution~$E$, there
exists some~$t$ such that
\[
\forall i \in [n]\colon\quad    \lVert y^i (t) - y^*\rVert
\leq
\varepsilon/3 
\enspace.
\]
By~\eqref{eq:thm:continuity:top}, there is an~$s_t$ such that
\[
\forall s\geq s_t\  \forall i \in [n]\colon\quad    \lVert y_s^i (t) - y^*\rVert
\leq
\varepsilon/3 
\enspace.
\]
By the triangle inequality, this means
\[
\forall s\geq s_t\  \forall i,j \in [n]\colon\quad    
\lVert y_s^i (t) - y_s^j(t)\rVert
\leq
2\varepsilon/3 
\enspace.
\]
Because the algorithm is a convex combination algorithm, the limit~$y_s^*$ lies
in the convex hull of the points $y_s^1(t), \dots, y_s^n(t)$.
That is,
\[
\forall s\geq s_t\  \forall i \in [n]\colon\quad    
\lVert y_s^i (t) - y_s^*\rVert
\leq
2\varepsilon/3 
\enspace.
\]
Combining these inequalities gives
\[
\forall s\geq s_t\colon\quad
\lVert y^*_s - y^*\rVert \leq  
\lVert y^*_s - y^i_s(t) \rVert  + 
\lVert y_s^i(t) - y^* \rVert  
\leq \frac{2\varepsilon}{3} + \frac{\varepsilon}{3}
= \varepsilon 
\]
where~$i$ is any agent.
This proves $\displaystyle\lim_{s\to\infty} y^*_s = y^*$ as required.
}
\iftoggle{CONF}{

}{ 
\begin{proof}  
\proofthmcontinuity
\end{proof}
}
  
\section{Valency and Contraction Rate}\label{sec:valency}

We now extend the notion of valency for a consensus algorithm to
	asymptotic consensus algorithms.
We fix an asymptotic consensus algorithm~${\A}$ that solves $d$-dimensional
	asymptotic consensus in a certain network model $\N$ with $n \ge 2$ agents.
Let~$C$ be a configuration reachable by~${\A}$ in $\N$.
Then we define the \emph{valency}  of $C$ by 
  $Y^*_{\N,\A}(C) = \{ y_{\E}^* \in \IR^d \mid  C \mbox{ occurs in  } E \in \mathcal{E}^{\N}_{\A} \}$.

If the algorithm $\A$ is clear from the context, we skip it from the
subscript.
Observe that if $\A$ is a convex combination algorithm, 
	then the valency of a configuration~$C$ is a compact 
	set of $\IR^d$ since the  consensus function is continuous and 
	the set of executions in which $C$ occurs is a compact set.
Set $\delta_{\N}(C) = \diam(Y_{\N}^*(C))$ the diameter of the set
        of reachable limits starting from configuration~$C$.

We have $\delta_{\N}(C_t)\to 0$ in any execution $E = G_0, C_1, G_1, C_2, \dots$
        by Convergence and Agreement.
To study the speed of convergence, we introduce the {\em contraction rate\/} of
algorithm~$\A$ in network model~$\N$ as
\begin{equation*}
\sup_{E\in\mathcal{E}_\A^\N}\limsup_{t \to \infty}\sqrt[t]{\delta_{\N}(C_t)}
\end{equation*}
where $E = C_0, G_1, C_1, G_2, \dots$
In particular, any algorithm that guarantees
  $\delta_{\N}(C_t) \leq \gamma^t \delta_{\N}(C_0)$ for all $t \ge 0$
  has a contraction rate of at most $\gamma$.

We obtain the following for subsets of network models:

\begin{lemma}\label{lem:submodel}
Let $\mathcal{N}, \mathcal{N}'$ be two network models with $\N' \subseteq \N$.
If~$\mathcal{A}$ is an algorithm that solves asymptotic consensus in~$\N$, then 
	(i) it also solves  asymptotic consensus in~$\N'$, (ii) 
	for every configuration~$C$ reachable by~$\mathcal{A}$ in~$\N'$, we have 
	$Y^*_{\N'} (C) \subseteq Y^*_{\N}  (C)$,
        (iii) $\delta_{\N'}(C) \leq \delta_{\N}(C)$, and
        (iv) the contraction rate in $\N'$ is less or equal to the contraction rate in $\N$.
\end{lemma}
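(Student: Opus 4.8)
The plan is to reduce all four parts to a single structural observation and then let them cascade. First I would observe that since $\mathcal{N}' \subseteq \mathcal{N}$, every communication pattern in $\mathcal{N}'$ is also a communication pattern in $\mathcal{N}$; because an execution is uniquely determined by its initial configuration and communication pattern together with the deterministic transition function of $\mathcal{A}$, this gives the inclusion $\mathcal{E}^{\mathcal{N}'}_{\mathcal{A}} \subseteq \mathcal{E}^{\mathcal{N}}_{\mathcal{A}}$. Part (i) then falls out immediately: Convergence, Agreement, and Validity are universally quantified over the executions of a network model, so holding for every $E \in \mathcal{E}^{\mathcal{N}}_{\mathcal{A}}$ forces them to hold for every $E$ in the smaller set $\mathcal{E}^{\mathcal{N}'}_{\mathcal{A}}$.

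For part (ii), I would first note that a configuration $C$ reachable in $\mathcal{N}'$ is reachable in $\mathcal{N}$ as well, since the prefix of the pattern witnessing reachability already lies in $\mathcal{N}' \subseteq \mathcal{N}$; in particular $Y^*_{\mathcal{N}}(C)$ is defined. Both valencies are sets of consensus limits $y^*_E$ indexed by the executions in which $C$ occurs, taken over $\mathcal{E}^{\mathcal{N}'}_{\mathcal{A}}$ and $\mathcal{E}^{\mathcal{N}}_{\mathcal{A}}$ respectively; shrinking the indexing set of executions can only shrink the set of attained limits, which yields $Y^*_{\mathcal{N}'}(C) \subseteq Y^*_{\mathcal{N}}(C)$. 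Part (iii) is then a one-line consequence of the monotonicity of the diameter under inclusion, $A \subseteq B \Rightarrow \diam(A) \le \diam(B)$, applied to the two valencies.

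Finally, for part (iv) I would combine (iii) with the execution inclusion. Fixing any $E \in \mathcal{E}^{\mathcal{N}'}_{\mathcal{A}}$ with configurations $C_0, C_1, \dots$, part (iii) gives $\delta_{\mathcal{N}'}(C_t) \le \delta_{\mathcal{N}}(C_t)$ for every $t$, so monotonicity of the $t$-th root and of $\limsup$ yields $\limsup_{t\to\infty}\sqrt[t]{\delta_{\mathcal{N}'}(C_t)} \le \limsup_{t\to\infty}\sqrt[t]{\delta_{\mathcal{N}}(C_t)}$. Since this same $E$ also lies in $\mathcal{E}^{\mathcal{N}}_{\mathcal{A}}$, the right-hand side is at most the contraction rate in $\mathcal{N}$; taking the supremum of the left-hand side over $E \in \mathcal{E}^{\mathcal{N}'}_{\mathcal{A}}$ then reproduces the contraction rate in $\mathcal{N}'$ and preserves the bound.

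I do not expect a genuine obstacle here, as this is a monotonicity lemma whose entire content is the inclusion $\mathcal{E}^{\mathcal{N}'}_{\mathcal{A}} \subseteq \mathcal{E}^{\mathcal{N}}_{\mathcal{A}}$. The only point demanding a little care is part (iv), where both the index set of the supremum and the summand $\delta_{(\cdot)}(C_t)$ change when passing from $\mathcal{N}$ to $\mathcal{N}'$; one must keep the two subscripted diameters distinct rather than silently identifying them. Both changes, however, push the quantity in the same downward direction, so they reinforce rather than conflict.
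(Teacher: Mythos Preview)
Your proposal is correct and follows essentially the same approach as the paper: the paper dispatches (i)--(iii) in one sentence as ``immediate from the definition of valency'' and then proves (iv) exactly as you do, by combining the execution inclusion $\mathcal{E}_{\mathcal{A}}^{\mathcal{N}'}\subseteq\mathcal{E}_{\mathcal{A}}^{\mathcal{N}}$ with part~(iii) to bound the supremum of the $\limsup$. Your write-up simply spells out the details the paper leaves implicit, including the observation you flag about both the index set and the summand changing in~(iv).
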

\begin{proof}
Statements (i), (ii), and (iii) immediately follow from the definition of
valency.
It remains to show statement (iv).
From $\mathcal{E}_\A^{\N'} \subseteq \mathcal{E}_\A^{\N}$ and~(iii), we deduce
\[
\sup_{E\in\mathcal{E}_\A^{\N'}}\limsup_{t \to \infty}\sqrt[t]{\delta_{\N'}(C_t)}
\leq
\sup_{E\in\mathcal{E}_\A^\N}\limsup_{t \to \infty}\sqrt[t]{\delta_{\N}(C_t)}
\enspace,
\]
which concludes the proof.
\end{proof}
	
We establish two branching properties of valency of configurations in
execution trees.

\begin{lemma}\label{lem:shrink}
Let $C$ be a configuration reachable by algorithm $\A$ in network model $\N$.
Then
\begin{equation*}
Y_{\N}^*(C) = \bigcup_{G\in \N} Y_{\N}^*(G.C)
\enspace.
\end{equation*}
\end{lemma}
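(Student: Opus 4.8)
The plan is to establish the set equality by proving the two inclusions separately, both resting on a single structural observation that I would isolate first. Since agents are deterministic and rounds are communication closed, the consensus limit of an execution depends only on the part of the execution that follows any occurrence of a given configuration: if a configuration $D$ occurs at some round of an execution $E \in \mathcal{E}_\A^\N$, then $y_E^*$ is determined entirely by $D$ together with the communication graphs applied in the subsequent rounds. I will call this property \emph{suffix-determinacy}; it lets me identify $Y_\N^*(D)$ with the set of limits produced by all infinite communication patterns over $\N$ started from $D$, and it is the tool used in both inclusions.

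For the inclusion $\subseteq$, I would take $y^* \in Y_\N^*(C)$ and fix a witnessing execution in which $C$ occurs, say at round $t$, with limit $y^*$. Letting $G \in \N$ be the graph applied in round $t+1$, the next configuration is $G.C$, which is reachable (as $C$ is reachable and $G \in \N$), so its valency is defined. The tail of the execution from round $t+1$ onward then witnesses that $G.C$ occurs in an execution with limit $y^*$, giving $y^* \in Y_\N^*(G.C)$ and hence membership in $\bigcup_{G \in \N} Y_\N^*(G.C)$.

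For the reverse inclusion $\supseteq$, I would fix $G \in \N$ and $y^* \in Y_\N^*(G.C)$, witnessed by an execution in which $G.C$ occurs with limit $y^*$. The construction is to splice together three pieces: a finite prefix of some execution in $\mathcal{E}_\A^\N$ reaching $C$ (which exists because $C$ is reachable), the single round with graph $G$ carrying $C$ to $G.C$, and the suffix of the witnessing execution starting at the occurrence of $G.C$. This yields a valid execution in $\mathcal{E}_\A^\N$ in which $C$ occurs, and by suffix-determinacy its limit equals $y^*$, so $y^* \in Y_\N^*(C)$.

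I expect the only genuine subtlety to be the careful handling of the valency definition, which quantifies over executions in which a configuration \emph{occurs} --- possibly at a round other than the initial one --- rather than over infinite continuations directly; suffix-determinacy is exactly what bridges this gap, and I would state and use it explicitly in both directions. Beyond this bookkeeping the argument is a purely combinatorial one-step unfolding of the reachability structure, so none of the analytic input (compactness or continuity of the consensus function) noted earlier for convex combination algorithms is needed here.
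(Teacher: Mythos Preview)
Your proposal is correct and follows essentially the same approach as the paper: prove both inclusions directly, using splicing of a reaching prefix, the single step $G$, and the witnessing suffix for $\supseteq$. One minor simplification the paper makes for $\subseteq$: suffix-determinacy is not needed there, since $G.C = C_{t+1}$ already occurs in the \emph{same} execution~$E$, so $y^*_E \in Y^*_\N(G.C)$ follows straight from the definition of valency.
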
    
\begin{proof}
First let $y^*\in Y^*_{\N}(C)$.
By definition of $Y^*_{\N}(C)$, there exists an execution $E = C_0, G_1,
C_1, G_2, \dots $ in $\mathcal{E}^{\N}_{\A}$ and a $t\geq 0$ such that
$y^* = y^*_E$ and $C = C_t$.
Set $G=G_{t+1}$.
Hence we have $C_{t+1} = G.C$.
But this shows that $y^* \in Y_\mathcal{N}^*(G.C)$ since $G.C$ occurs in execution~$E$ whose limit is~$y^*$.
This shows inclusion of the left-hand side in the right-hand side.

Now let $G\in \mathcal{N}$ and $y^*\in Y_\mathcal{N}^*(G.C)$.
Then there is an execution $E = C_0, G_1, C_1, G_2, \dots $ in $\mathcal{E}^{\mathcal{N}}_{\mathcal{A}} $
and a $t\geq 0$ such that $y^*= y_{E}^*$ and $G.C = C_t$.
Since $C$ is a reachable configuration, there exists an execution 
$E' = C_0', G_1', C_1', G_2',\dots$ in $\mathcal{E}^{\mathcal{N}}_{\mathcal{A}}$ and an
$s \geq 0$ such that $C_s' = C$.
Then the sequence
$$ E'' = C_0', G_1', \dots, C_s', G, C_t, G_{t+1}, \dots $$
is an execution in $\mathcal{E}^{\mathcal{N}}_{\mathcal{A}} $ with $y^*_{E''} = y^*_E = y^*$.
Hence $y^* \in Y^*_\mathcal{N}(C)$ because~$C$ occurs in~$E''$.
This shows inclusion of the right-hand side in the left-hand side and concludes
the proof.
\end{proof}

\begin{lemma}\label{lem:finiteness}
Let $C$ be a configuration reachable by algorithm $\mathcal{A}$ in network model $\N$.
Then there exist $G,H\in\N$ such that
\begin{equation*}
\diam\big(Y_\mathcal{N}^*(C)\big) = 
\diam\big(Y^*_\N(G.C) \cup Y^*_\N(H.C)\big)
\enspace.
\end{equation*}
\end{lemma}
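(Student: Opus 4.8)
The plan is to reduce the union over all of $\N$ from Lemma~\ref{lem:shrink} to a single pair of graphs, using that the diameter of any set is witnessed by pairs of its points together with the finiteness of $\N$. First I would establish the identity
\[
\diam\big(Y_\N^*(C)\big) = \sup_{G,H\in\N}\diam\big(Y_\N^*(G.C)\cup Y_\N^*(H.C)\big)
\]
by a two-sided inequality. The direction ``$\ge$'' is immediate from Lemma~\ref{lem:shrink}: each $Y_\N^*(G.C)\subseteq Y_\N^*(C)$, so $Y_\N^*(G.C)\cup Y_\N^*(H.C)\subseteq Y_\N^*(C)$, and hence every term of the supremum is bounded by $\diam(Y_\N^*(C))$. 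For ``$\le$'', I would take points $x,y\in Y_\N^*(C)$ with $\lVert x-y\rVert$ arbitrarily close to $\diam(Y_\N^*(C))$; by Lemma~\ref{lem:shrink} there are $G,H\in\N$ with $x\in Y_\N^*(G.C)$ and $y\in Y_\N^*(H.C)$, so $\lVert x-y\rVert \le \diam(Y_\N^*(G.C)\cup Y_\N^*(H.C))$, and letting $\lVert x-y\rVert$ approach the diameter gives the claim.

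The decisive second step is to observe that $\N$ is finite: every communication graph is a directed graph on the fixed node set $[n]$, of which there are only finitely many, so $\N$ is a subset of a finite set and therefore $\N\times\N$ is finite. A supremum of finitely many reals is attained, so the identity above is realized by a concrete pair $(G,H)\in\N\times\N$, which is exactly the statement of the lemma. (Finiteness of the diameters themselves is guaranteed by Validity, which confines every reachable limit to the convex hull of the values in $C$, a bounded set; but even without this the argument goes through in the extended reals.)

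The point to be careful about --- and the reason I route the argument through pairs of \emph{graphs} rather than pairs of \emph{points} --- is that I must not assume $Y_\N^*(C)$ is compact. Compactness of valencies is established in the excerpt only for convex combination algorithms, whereas this lemma is stated for an arbitrary asymptotic consensus algorithm. For a convex combination algorithm one could instead pick a maximizing pair $x^*,y^*$ directly, read off two graphs whose valencies contain them, and finish in one line; the finiteness argument above is the correct replacement when no topological structure on the valency is available, and the attainment of the maximum then rests solely on $\lvert\N\rvert<\infty$.
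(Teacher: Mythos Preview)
Your proof is correct and follows essentially the same approach as the paper's: both invoke Lemma~\ref{lem:shrink} to write $Y^*_\N(C)$ as a union indexed by~$\N$ and then exploit the finiteness of~$\N$ (hence of $\N\times\N$) to pass from an approximating sequence of point-pairs to a single pair $(G,H)$, without assuming compactness of the valency. The only cosmetic difference is that the paper phrases the finiteness step as a pigeonhole/subsequence extraction on pairs of points, whereas you phrase it as attainment of a supremum over the finite index set $\N\times\N$.
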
    
\begin{proof}
Set $Y = Y^*_\N(C)$, and $Y_G = Y^*_\N(G.C)$ for $G\in \N$.
By Lemma~\ref{lem:shrink} it is $Y = \bigcup_{G\in\N} Y_G$,
which means that every sequence of pairs of points in~$Y$ whose distances
converge to $\diam(Y)$ includes an infinite subsequence in some product
$Y_G\times Y_H$ because there are only finitely many.
Thus $\diam(Y) \leq \diam(Y_G\cup Y_H)$.
The other inequality follows from $Y_G\cup Y_H\subseteq Y$.
\end{proof}

Two configurations $C$ and $C'$ are called \emph{indistinguishable for agent~$i$},
  denoted $C \sim_i C'$, if $i$ is in the same state in $C$ and in $C'$.
  
As an immediate consequence of the above definition, we obtain:
\begin{lemma}\label{lem:indis}
Let $C$ and $C'$ be two reachable configurations, and let $G$ and $G'$ be communication 
	graphs in $\mathcal{N}$.
If some agent~$i$ has the same in-neighbors in $G$ and $G'$ 
	and if $C \sim_j C'$ for each of $i$'s in-neighbors $j$, then 
	$G.C \sim_i G'.C'$.
\end{lemma}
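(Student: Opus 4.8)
The plan is to unfold the definitions of the transition function and of indistinguishability; no serious machinery is needed. Recall from the model that in a single round each agent broadcasts its current state to its out-neighbors and then computes its new state by applying a deterministic transition function to the collection of messages it receives. Since the message sent by any agent is a function of that agent's current state alone, the collection of messages that agent~$i$ receives in a round depends only on (a)~which agents are $i$'s in-neighbors, and (b)~the states of those in-neighbors in the current configuration.

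First I would make this dependence explicit. Let $M_i(G,C)$ denote the collection of messages received by agent~$i$ when graph~$G$ is applied to configuration~$C$; by the observation above, $M_i(G,C)$ is determined by the set $\In_i(G)$ together with the states in~$C$ of the agents in $\In_i(G)$. Under the hypotheses of the lemma we have $\In_i(G)=\In_i(G')$, and for every in-neighbor $j\in\In_i(G)$ the relation $C\sim_j C'$ says that $j$ is in the same state in $C$ and in $C'$, hence sends the same message in both rounds. Therefore $M_i(G,C)=M_i(G',C')$.

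Next I would invoke determinism: since $i$'s new state is the image of its received messages under a fixed transition function, equal inputs yield equal outputs. Consequently agent~$i$ is in the same state in $G.C$ and in $G'.C'$, which is precisely the assertion $G.C\sim_i G'.C'$.

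I do not expect a genuine obstacle here, as this is a direct consequence of the definitions; the only point requiring care is the bookkeeping of self-loops. Because every communication graph contains a self-loop at each node, $i$ is always one of its own in-neighbors, so the hypothesis $C\sim_j C'$ ranging over $j\in\In_i(G)$ already covers the case $j=i$. This guarantees that $i$'s own message agrees across the two configurations, so that even if one adopts the variant of the model in which the transition function may also read $i$'s own previous state, no separate argument for $i$'s own contribution is required.
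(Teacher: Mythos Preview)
Your argument is correct and is exactly what the paper has in mind: it states the lemma ``as an immediate consequence of the above definition'' and gives no further proof. Your unfolding of the definitions, including the remark about self-loops, is precisely the intended justification.
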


An agent $i$ is said to be \emph{deaf in a communication graph} $G$ if~$i$ has a unique 
	in-neighbor in~$G$, namely~$i$ itself.
We are now in position to relate valencies of successor configurations.

\begin{lemma}\label{lem:intersect}
If agent~$i$ has the same in-neighbors in two communication graphs $G$ and
$G'$ in $\N$, and if there exists a communication graph in $\N$ in which~$i$
is deaf and $C \sim_{j} C'$ for the in-neighbors $j$ of $i$, then
$Y^*_{\N}(G.C) \cap Y^*_{\N}(G'.C')  \neq \emptyset$ .
\end{lemma}
\begin{proof}
From Lemma~\ref{lem:indis}, we have $G.C \sim_i G'.C'$.

Let $D_i$ be a communication graph in $\N$ in which the agent~$i$ is deaf.
Then we consider  an execution~$E$ in which  $C$ occurs at some round~$t_0-1$, $G$ is the communication
	graph at round~$t_0$, 
	and from there on all communication graphs are equal to $D_i$.
Analogously, let $E'$ be an execution identical to $E$ except that the communication graph at round~$t_0$ 
	is $G'$ instead of $G$. 
By inductive application of Lemma~\ref{lem:indis}, we show that for all $t \geq t_0$, we have
	$C_t \sim_i C'_t $. 
In particular, we  obtain $ y_{\E}^i (t) = y_{\EP}^i(t)$.
Thus $y_{\E}^*= y_{\EP}^*$, which shows that  $Y^*_\mathcal{N}(G.C)$ and
$Y^*_\mathcal{N}(G'.C')$
	intersect.
\end{proof}

From Lemma~\ref{lem:intersect} we determine the valency of any initial configuration
  when the network model contains certain communication graphs.
If every agent is deaf in some communication graph of the network model~$\N$,
  then the next lemma shows that the diameter of the valency of any initial
  configuration is equal to the diameter of the set of its initial values.

\begin{lemma}\label{lem:initial} 
If, for every agent~$i$, there is a communication graph in $\N$ in
  which~$i$ is deaf,
  then each initial configuration $C_0$ satisfies 
$\delta_{\N}(C_0) = \Delta(y(0))$.
In particular, there is an initial configuration for which
$\delta_{\N}(C_0)>0$.
\end{lemma}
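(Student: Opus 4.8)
The plan is to prove the two inequalities $\delta_{\N}(C_0) \le \Delta(y(0))$ and $\delta_{\N}(C_0) \ge \Delta(y(0))$ separately, and then dispatch the ``in particular'' clause by exhibiting a spread-out initial configuration.

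For the upper bound I would invoke Validity directly. Every reachable limit $y^*_E \in Y^*_{\N}(C_0)$ lies in the convex hull of the initial values, so $Y^*_{\N}(C_0) \subseteq \hull\{y^1(0),\dots,y^n(0)\}$. Since the diameter of the convex hull of a finite point set is attained at two of its vertices, $\diam\big(\hull\{y^i(0)\}\big) = \Delta(y(0))$, and hence $\delta_{\N}(C_0) = \diam\big(Y^*_{\N}(C_0)\big) \le \Delta(y(0))$.

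For the lower bound the key is to show that each initial value $y^i(0)$ is itself a reachable limit, i.e.\ $y^i(0) \in Y^*_{\N}(C_0)$. Fix an agent $i$ and let $D_i \in \N$ be a graph in which $i$ is deaf, which exists by hypothesis. I would consider the execution $E_i$ from $C_0$ driven by the constant communication pattern $(D_i, D_i, \dots)$. Because $i$ only ever hears itself, its state at round $t$ is a deterministic function of its own initial state $s^i(0)$ alone, independent of the other agents; by Convergence $y^i_{E_i}(t)$ converges to some limit $L_i$ depending only on $s^i(0)$, and by Agreement $L_i$ equals the common limit $y^*_{E_i}$. To identify $L_i$ with $y^i(0)$, I would run the same deaf pattern from the auxiliary initial configuration $C_0^{(i)}$ in which every agent copies agent $i$'s state from $C_0$: agent $i$'s trajectory, hence its limit $L_i$, is unchanged, while Validity forces the common limit to lie in $\hull\{y^i(0)\} = \{y^i(0)\}$, whence $L_i = y^i(0)$. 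Returning to $E_i$, this shows $y^i(0) = y^*_{E_i} \in Y^*_{\N}(C_0)$. As $i$ was arbitrary, $\{y^1(0),\dots,y^n(0)\} \subseteq Y^*_{\N}(C_0)$, so $\delta_{\N}(C_0) \ge \Delta(y(0))$, and combining with the upper bound gives the claimed equality.

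The main obstacle is the identification $L_i = y^i(0)$: for general, non-averaging algorithms a deaf agent's value need not be constant, so one cannot simply assert $y^i_{E_i}(t) = y^i(0)$; the auxiliary-configuration argument, leaning on Validity applied to a configuration with collapsed initial spread, is precisely what makes this work and is where I expect the real content to sit. Finally, for the ``in particular'' clause I would pick any initial configuration with $y^1(0) \ne y^2(0)$, possible since the value set contains more than one point of $\IR^d$; then $\Delta(y(0)) > 0$ and the established equality yields $\delta_{\N}(C_0) = \Delta(y(0)) > 0$.
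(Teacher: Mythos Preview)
Your proposal is correct and follows essentially the same approach as the paper: the upper bound via Validity and convex hulls is identical, and for the lower bound both you and the paper run the constant deaf pattern for agent~$i$, compare with an auxiliary configuration in which every agent starts with $i$'s data, and use indistinguishability plus Validity on the collapsed configuration to pin down the limit as $y^i(0)$. The only cosmetic difference is that the paper argues just for the two agents $i,j$ realizing $\Delta(y(0))$, whereas you show $y^i(0)\in Y^*_{\N}(C_0)$ for every~$i$; your version of the auxiliary configuration (copying $i$'s full state rather than merely its initial value) is in fact the cleaner way to make the indistinguishability step go through.
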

\begin{proof}  
Since $Y_{\N}^*(C_0)$ is a subset of the convex hull of
the set of points
$\{y^1(0),\dots,y^n(0)\}$ by the Validity property of asymptotic consensus and
since the diameter of the convex hull of the set $\{y^1(0),\dots,y^n(0)\}$ is equal to
$\Delta(y(0))$, we have the inequality $\delta_\mathcal{N}(C_0) \leq \Delta(y(0))$.

To show the converse inequality, let~$i$ and~$j$ be two agents such that
$\lVert y^i(0) - y^j(0)\rVert = \Delta(y(0))$.
Let $E$ be the execution with initial configuration $C_0$
and a constant communication graph in which agent~$i$ is deaf.
Now consider $C^{(i)}_0$, an initial configuration such that all initial values
are set to $y^i(0)$,
and the execution $E^{(i)}$ from $C^{(i)}_0$ with the same communication pattern as in~$E$.

By a repeated application of Lemma~\ref{lem:indis}, we see that 
at each round~$t$, we have $C_t \sim_i C^{(i)}_t$.
Hence, $ y_{\E}^*= y_{E^{(i)}}^*$.

From the Validity condition, we deduce that  $y^*(E^{(i)}) = y^i(0)$.
It then follows that $y^i(0)  \in Y^*_\mathcal{N} (C_0)$.
By a similar argument, we see
$y^j(0)  \in Y^*_\mathcal{N} (C_0)$.
Hence 
$$\delta_\mathcal{N}(C_0) \geq \lVert y^i(0) - y^j(0)\rVert = \Delta(y(0)) \enspace,$$
which concludes the proof.
\end{proof}

\section{Tight Bound for Two Agents}\label{sec:n2}

In this section, we prove a lower bound of~$1/3$ on the contraction rate
  of algorithms that solve asymptotic consensus in the network model of all rooted (and here also non-split)
  communication graphs with two agents.
Combined with Algorithm~\ref{algo:2:procs}, which achieves this lower bound~\cite{CBFN16}, we
  have indeed identified a tight bound on the contraction rate for $n=2$.
Moreover, the algorithm also shows that the lower bound is achieved by a
  simple convex combination algorithm.

\begin{algorithm}[ht]
\small
\begin{algorithmic}[1]
\REQUIRE{}
  \STATE $y^i \in \IR$
\ENSURE{}
\STATE send $y^i$ to other agent
  \IF{$y^j$ was received from other agent}
    \STATE $y^i \gets y^i/3 + 2y^j/3$
  \ENDIF
\end{algorithmic}
\caption{Algorithm with contraction rate $1/3$ for $n=2$}
\label{algo:2:procs}
\end{algorithm}

A straightforward analysis of Algorithm~\ref{algo:2:procs} shows that its
  contraction rate is equal to~$1/3$.

Note that for $n=2$, there are~$3$ possible rooted communication graphs that
  may occur, all of which are non-split; see Figure~\ref{fig:n2graphs}: (i)~$H_0$ in which all messages are received,
  (ii) $H_1$ in which agent~$2$ receives agent~$1$'s message but not vice versa,
  and (iii) $H_2$ in which agent~$1$ receives agent~$2$'s message but not vice
  versa.

\begin{figure}
\centering
\begin{tikzpicture}[>=latex]
\tikzset{every loop/.style={min distance=5mm,in=0,out=60,looseness=5}}

\node at (-1.2,1.0) {$H_0$};

\node[draw, circle] (n1) at (180:1.1) {$1$};
\node[draw, circle] (n2) at (0:1.1) {$2$};

\draw[->] (n1) edge   [bend left=10] (n2);
\draw[<-] (n1) edge   [bend right=10] (n2);
\path[->] (n1) edge [in=110,out=70,looseness=6] (n1);
\path[->] (n2) edge [in=110,out=70,looseness=6] (n2);

\end{tikzpicture}
\hspace{1.5cm}
\begin{tikzpicture}[>=latex]
\tikzset{every loop/.style={min distance=5mm,in=0,out=60,looseness=5}}

\node at (-1.2,1.0) {$H_1$};

\node[draw, circle] (n1) at (180:1.1) {$1$};
\node[draw, circle] (n2) at (0:1.1) {$2$};

\draw[->] (n1) edge    (n2);
\path[->] (n1) edge [in=110,out=70,looseness=6] (n1);
\path[->] (n2) edge [in=110,out=70,looseness=6] (n2);

\end{tikzpicture}
\hspace{1.5cm}
\begin{tikzpicture}[>=latex]
\tikzset{every loop/.style={min distance=5mm,in=0,out=60,looseness=5}}

\node at (-1.2,1.0) {${H_2}$};

\node[draw, circle] (n1) at (180:1.1) {$1$};
\node[draw, circle] (n2) at (0:1.1) {$2$};

\draw[->] (n2) edge    (n1);
\path[->] (n1) edge [in=110,out=70,looseness=6] (n1);
\path[->] (n2) edge [in=110,out=70,looseness=6] (n2);

\end{tikzpicture}
\caption{The rooted communication graphs $H_0$, $H_1$, and $H_2$ for $n=2$}
\label{fig:n2graphs}
\end{figure}
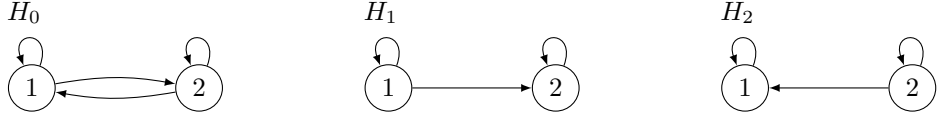

\begin{theorem}\label{thm:2}
The contraction rate of any asymptotic consensus algorithm for $n=2$ agents in
a network model that includes the three graphs $H_0$, $H_1$, and~$H_2$ is
greater or equal to~$1/3$.
\end{theorem}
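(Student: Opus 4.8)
The plan is to reduce to the three-graph model and then construct a single execution whose valency diameter shrinks by a factor of at most $3$ per round. First I would invoke Lemma~\ref{lem:submodel}(iv): since $\N' = \{H_0,H_1,H_2\} \subseteq \N$, it suffices to prove the contraction rate in $\N'$ is at least $1/3$. Note that agent~$1$ is deaf in $H_1$ and agent~$2$ is deaf in $H_2$, so every agent is deaf in some graph of $\N'$; Lemma~\ref{lem:initial} then supplies an initial configuration $C_0$ with $\delta_{\N'}(C_0) > 0$, which I fix as the start of the execution to be built.

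The key step is the one-round contraction bound $\delta_{\N'}(C) \le 3\max_{G\in\N'}\delta_{\N'}(G.C)$ for every reachable $C$. Writing $Y_G = Y^*_{\N'}(G.C)$, Lemma~\ref{lem:shrink} gives $Y^*_{\N'}(C) = Y_{H_0}\cup Y_{H_1}\cup Y_{H_2}$. Next I would establish the crucial \emph{star}-shaped intersection pattern via Lemma~\ref{lem:intersect}: applying it with $i=2$, $G=H_0$, $G'=H_1$ (agent~$2$ has the same in-neighbors $\{1,2\}$ in both, and is deaf in $H_2$) yields $Y_{H_0}\cap Y_{H_1}\neq\emptyset$, and symmetrically with $i=1$, $G=H_0$, $G'=H_2$ (agent~$1$ is deaf in $H_1$) yields $Y_{H_0}\cap Y_{H_2}\neq\emptyset$. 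Thus $Y_{H_0}$ meets each of the other two valencies. Picking $p\in Y_{H_0}\cap Y_{H_1}$ and $q\in Y_{H_0}\cap Y_{H_2}$, any two points of $Y^*_{\N'}(C)$ can be joined by a three-segment path through $p$ and $q$ whose segments lie successively in $Y_{H_1}$, $Y_{H_0}$, $Y_{H_2}$, so the triangle inequality bounds their distance by $3\max_{G}\diam(Y_G)$. This is the factor $3$ that produces the bound $1/3$, and it works in any dimension since it only uses the norm.

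Given the one-round bound, I would build the execution inductively: at configuration $C_t$ choose $G_{t+1}\in\N'$ attaining $\max_{G}\delta_{\N'}(G.C_t)$ and set $C_{t+1}=G_{t+1}.C_t$, which is again reachable; then $\delta_{\N'}(C_{t+1})\ge \delta_{\N'}(C_t)/3$, so $\delta_{\N'}(C_t)\ge 3^{-t}\delta_{\N'}(C_0)$ for all $t$. Taking $t$-th roots, $\sqrt[t]{\delta_{\N'}(C_t)} \ge \tfrac13\sqrt[t]{\delta_{\N'}(C_0)} \to \tfrac13$ because $\delta_{\N'}(C_0)>0$, hence $\limsup_{t}\sqrt[t]{\delta_{\N'}(C_t)}\ge 1/3$, so the contraction rate in $\N'$, and therefore in $\N$, is at least $1/3$.

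The main obstacle is the second paragraph: obtaining the two nonempty intersections in exactly the right star arrangement, since the factor $3$ — and thus the tightness of the bound — hinges on $Y_{H_0}$ overlapping both $Y_{H_1}$ and $Y_{H_2}$ while $Y_{H_1}$ and $Y_{H_2}$ themselves may be disjoint. This is precisely where the deafness hypotheses encoded in $H_1$ and $H_2$, via Lemma~\ref{lem:intersect}, are indispensable; everything else is a routine induction and a limit computation.
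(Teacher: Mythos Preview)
Your proposal is correct and follows essentially the same approach as the paper: reduce to $\N'=\{H_0,H_1,H_2\}$ via Lemma~\ref{lem:submodel}, obtain a nontrivial initial valency via Lemma~\ref{lem:initial}, and at each step use Lemma~\ref{lem:shrink} together with the two intersections $Y_{H_0}\cap Y_{H_1}\neq\emptyset$ and $Y_{H_0}\cap Y_{H_2}\neq\emptyset$ from Lemma~\ref{lem:intersect} to force some successor valency to have diameter at least $\diam(Y)/3$. The only cosmetic difference is that the paper phrases the one-round step as a contradiction (assume all three successor diameters are below $\diam(Y)/3$ and chain the diameter inequalities for intersecting sets), whereas you give the equivalent direct bound $\diam(Y)\le 3\max_G\diam(Y_G)$ via the three-segment path; your path description is slightly loose for pairs not in $Y_{H_1}\times Y_{H_2}$, but those cases need at most two segments and the bound holds a fortiori.
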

\begin{proof}  
We show the stronger statement that for every initial configuration~$C_0$ there
is an execution $E=C_0,G_1,C_1,G_2,\dots$ starting from~$C_0$ such that
\begin{equation}\label{eq:thm:2:goal}
\delta_\mathcal{N}(C_t) \geq \frac{1}{3^t} \cdot \delta_\mathcal{N}(C_0)
\end{equation}
for $t\geq 0$.
This, applied to an initial configuration with $\delta_{\N}(C_0)>0$, which exists by
Lemma~\ref{lem:initial}, then shows the theorem. 

Note that it suffices to show~\eqref{eq:thm:2:goal} for the specific network
model $\mathcal{N}' = \{H_0, H_1, H_2\}$ shown in Figure~\ref{fig:n2graphs}
because
$\delta_\mathcal{N}(C_t) \geq \delta_{\mathcal{N'}}(C_t)$
by Lemma~\ref{lem:submodel}
and $\delta_{\mathcal{N}'}(C_0) = \delta_\mathcal{N}(C_0)$ by
Lemma~\ref{lem:initial} whenever $\mathcal{N} \supseteq \mathcal{N}'$.
We hence suppose $\mathcal{N} = \mathcal{N}'$ in the rest of the proof.

The proof is by inductive construction of an execution $E = C_0, G_1,
C_1, G_2, \dots$ whose configurations~$C_t$
satisfy~\eqref{eq:thm:2:goal}.
Equation~\eqref{eq:thm:2:goal} is trivial for $t=0$.

Now assume $t\geq 0$ and that Equation~\eqref{eq:thm:2:goal} holds for
$t$.  There are three possible successor configurations of~$C_{t}$, one
for each of the communication graphs $H_0$, $H_1$, and~$H_2$ in~$\mathcal{N}'$.
Set $C_{t+1}^k = H_k . C_{t}$.
Further let $Y = Y_{\mathcal{N}'}^*(C_{t})$, and $Y_k = Y_{\mathcal{
		N}'}^*(C^k_{t+1})$.

We will show that there is some $\hat{k}\in \{0,1,2\}$ with
$\diam(Y_{\hat{k}}) \ge \diam(Y)/3$.
We then define $G_{t+1} = H_{\hat{k}}$ and $C_{t+1} = C^{\hat{k}}_{t+1}$.
By the induction hypothesis, we then have 
\begin{equation*}
\delta_{\mathcal{N}'}(C_{t+1}) \geq \delta_{\mathcal{N}'}(C_{t})/3 \geq
\delta_{\mathcal{N}'}(C_0)/3^{t+1}
\enspace,
\end{equation*}
i.e., Equation~\eqref{eq:thm:2:goal} holds for~$t+1$.

Assume by contradiction that $\diam(Y_k) < \diam(Y)/3$ for all $k \in
\{0,1,2\}$. 
From Lemma~\ref{lem:shrink} we have $Y=Y_0\cup Y_1\cup Y_2$.
Noting that agent~$1$ is deaf in~$H_1$ and agent~$2$ has the same incoming edges
as in~$H_0$, and that agent~$2$ is deaf in~$H_2$ and agent~$1$ has the same
incoming edges as in~$H_0$,
we obtain from Lemma~\ref{lem:intersect} that
\begin{equation*}
Y_0 \cap Y_1 \neq \emptyset 
\quad
\text{and}
\quad
Y_0 \cap Y_2  \neq \emptyset
\enspace.
\end{equation*}
The sets~$Y_0$ and~$Y_1$ intersecting means
\begin{equation*}
\diam(Y_0\cup Y_1) \leq \diam(Y_0) + \diam(Y_1) < \frac{2}{3} \diam(Y)
\enspace.
\end{equation*}
Further, the sets $Y_0\cup Y_1$ and~$Y_2$ intersecting means
\begin{equation*}
\begin{split}
\diam(Y) = &\diam(Y_0\cup Y_1\cup Y_2)\\ \leq &\diam(Y_0\cup Y_1) + \diam(Y_2) < \diam(Y)
\enspace,
\end{split}
\end{equation*}
a contradiction.
This concludes the proof.
\end{proof}

\section{Tight Bound for Non-split Model: Contraction in Presence of Deaf Graphs}\label{sec:n3}

In this section, we prove a lower bound of~$1/2$ on the contraction rate
   of asymptotic consensus algorithms for $n\ge 3$ agents, in a network model that includes
   graphs derived from a communication graph $G$, where agents are made deaf
   in the derived graphs.
As a special case this includes the network model of all non-split communication graphs.
Charron-Bost et al.~\cite{CBFN16} presented the midpoint algorithm (given in
  Algorithm~\ref{algo:mid}) for dimension one with contraction rate $1/2$ for non-split
  communication graphs.
Together this shows tightness of our lower bound in dimension one. 

\begin{algorithm}
\small
\begin{algorithmic}[1]
\INITIALLY{}
  \STATE $y^i \in \IR$
\ROUND{}
  \STATE send $y^i$ to all agents 
  \STATE $m^i \gets \min\big\{ y^j \mid j\in \In_i(t)\big\}$
  \STATE $M^i \gets \max\big\{ y^j \mid j\in \In_i(t)\big\}$
  \STATE $y^i \gets (m^i+M^i)/2$
\end{algorithmic}
\caption{Midpoint algorithm}
\label{algo:mid}
\end{algorithm}

Let $G$ be an arbitrary communication graph.
Consider a system with $n \geq 3$ agents, and the $n$ communication graphs 
	$F_1,\dots,F_n$ where $F_i$ is obtained by making $i$ deaf in $G$, i.e., by removing all the edges towards~$i$
	except the self-loop $(i,i)$: let $\deaf(G)=\{F_1,\dots,F_n\}$ with
	$F_i = G \setminus\big\{(j , i) \ : \ j \in[n] \setminus \{i\} \big\}$.

With a proof similar to that of Theorem~\ref{thm:2} but noting that
the valencies of all pairs of successor configurations intersect, we get:

\begin{theorem}\label{thm:3}
The contraction rate of any asymptotic consensus algorithm for $n\geq 3$ agents
in a network model that includes $\deaf(G)$
is greater or equal to~$1/2$.
\end{theorem}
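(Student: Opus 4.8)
The plan is to follow the inductive skeleton of the proof of Theorem~\ref{thm:2}, replacing the factor $1/3$ by $1/2$. Concretely, I would establish the stronger claim that for every initial configuration $C_0$ there is an execution $E = C_0, G_1, C_1, G_2, \dots$ with $\delta_{\N}(C_t) \geq \delta_{\N}(C_0)/2^t$ for all $t \geq 0$; applied to an initial configuration with $\delta_{\N}(C_0) > 0$, whose existence is guaranteed by Lemma~\ref{lem:initial} (every agent $i$ is deaf in $F_i \in \deaf(G)$), this yields the contraction-rate lower bound of $1/2$. Exactly as in Theorem~\ref{thm:2}, it suffices to prove the per-execution bound for the network model $\N = \deaf(G)$ itself: Lemma~\ref{lem:submodel} gives $\delta_{\N}(C_t) \geq \delta_{\deaf(G)}(C_t)$ on common reachable configurations, and Lemma~\ref{lem:initial} gives $\delta_{\deaf(G)}(C_0) = \Delta(y(0)) = \delta_{\N}(C_0)$, so the bound transfers to any $\N \supseteq \deaf(G)$. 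I may therefore assume $\N = \deaf(G)$.

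For the inductive step I would assume $\delta_{\N}(C_t) \geq \delta_{\N}(C_0)/2^t$ and set $Y = Y^*_{\N}(C_t)$ and $Y_k = Y^*_{\N}(F_k . C_t)$ for $k \in [n]$, so that $Y = \bigcup_k Y_k$ by Lemma~\ref{lem:shrink}. The goal is to exhibit some $\hat{k}$ with $\diam(Y_{\hat{k}}) \geq \diam(Y)/2$; setting $G_{t+1} = F_{\hat{k}}$ and $C_{t+1} = F_{\hat{k}} . C_t$ then propagates the invariant to $t+1$ and, iterated, produces the desired execution.

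The decisive new ingredient, and the step I expect to be the crux, is to show that every pair of successor valencies intersects, i.e.\ $Y_k \cap Y_l \neq \emptyset$ for all $k, l \in [n]$. Here the graphs $F_k$ and $F_l$ are applied to the \emph{same} configuration $C_t$, so the indistinguishability hypothesis $C_t \sim_j C_t$ of Lemma~\ref{lem:intersect} holds trivially for every $j$. It then remains to find an agent $i$ that has the same in-neighbors in $F_k$ and $F_l$ and is deaf in some graph of $\N$. This is precisely where $n \geq 3$ enters: since $|\{k, l\}| \leq 2 < n$, I can pick an agent $i \notin \{k, l\}$; because $F_k$ (resp.\ $F_l$) alters only the incoming edges of $k$ (resp.\ $l$), agent $i$ retains its $G$-in-neighbors in both graphs, and $i$ is deaf in $F_i \in \N$. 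Lemma~\ref{lem:intersect} then yields $Y_k \cap Y_l \neq \emptyset$. (For $n = 2$ no such third agent exists, which is exactly why Theorem~\ref{thm:2} obtains only $1/3$: there, one merely knows that $Y_0$ meets each of $Y_1, Y_2$.)

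Finally I would close the induction by contradiction, considerably simplified relative to Theorem~\ref{thm:2} thanks to the stronger intersection property. Suppose $\diam(Y_k) < \diam(Y)/2$ for all $k$. By Lemma~\ref{lem:finiteness} there are $F_a, F_b \in \N$ with $\diam(Y) = \diam(Y_a \cup Y_b)$. Since $Y_a \cap Y_b \neq \emptyset$ by the previous paragraph, the diameter of a union of two intersecting sets is at most the sum of their diameters, giving
\[
\diam(Y) = \diam(Y_a \cup Y_b) \leq \diam(Y_a) + \diam(Y_b) < \frac{\diam(Y)}{2} + \frac{\diam(Y)}{2} = \diam(Y),
\]
a contradiction. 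Hence some $\diam(Y_{\hat{k}}) \geq \diam(Y)/2$, which completes the inductive step and thus the proof.
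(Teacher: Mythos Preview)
Your proposal is correct and follows essentially the same approach as the paper's proof: reduce to $\N=\deaf(G)$ via Lemmas~\ref{lem:submodel} and~\ref{lem:initial}, inductively build an execution witnessing $\delta_\N(C_t)\geq \delta_\N(C_0)/2^t$, and in the inductive step use the third agent $i\notin\{k,l\}$ (available since $n\geq 3$) together with Lemma~\ref{lem:intersect} to get $Y_k\cap Y_l\neq\emptyset$ for all pairs, then combine with Lemma~\ref{lem:finiteness} to derive the contradiction. The only cosmetic difference is that you establish the pairwise intersection property before entering the contradiction argument, whereas the paper invokes it inside; the logic is identical.
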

\newcommand{\proofthmthree}{
We show the stronger statement that for every initial configuration~$C_0$ there
is an execution $E=C_0,G_1,C_1,G_2,\dots$ starting at~$C_0$ such that
\begin{equation}\label{eq:thm:7:rate}
\delta_{\cal N}(C_t) \ge \frac{1}{2^t}\delta_{\cal N}(C_0) 
\end{equation}
for all $t\geq 0$.
It suffices to show~\eqref{eq:thm:7:rate} for the specific network
model ${\cal N'} = \deaf(G)$ because
$\delta_{\cal N}(C_t) \geq \delta_{{\cal N'}}(C_t)$
by Lemma~\ref{lem:submodel}
and $\delta_{{\cal N'}}(C_0) = \delta_{\cal N}(C_0)$ by
Lemma~\ref{lem:initial} whenever ${\cal N} \supseteq {\cal N'}$.
We hence suppose ${\cal N} = {\cal N'}$ in the rest of the proof.
The proof is by inductive construction of an execution $E = C_0, G_1,
C_1, G_2,\dots$ whose configurations~$C_t$
satisfy~\eqref{eq:thm:7:rate}.
This, applied to an initial configuration with $\delta_{\N}(C_0)>0$, which exists by
Lemma~\ref{lem:initial}, then shows the theorem. 

For $t=0$ the inequality holds trivially.

Now let $t $ be any positive integer and assume that Equation~\eqref{eq:thm:7:rate} holds
	for $t$.
There are $n$ possible successor configurations based on the  
applicable communication graphs $F_1,\dots, F_n$.
We denote  $C^k_{t+1}=F_k.C_{t}$, for any agent  $k$.
Further let $Y = Y_{{\cal N'}}^*(C_{t})$, and $Y_k = Y_{{\cal
		N'}}^*(C^k_{t+1})$.
  
We will show that there exists some agent $\hat{k}\in [n]$ such that
\begin{equation}\label{diam}
\diam(Y_{\hat{k}}) \ge \diam(Y)/2\enspace.
\end{equation}
We then define $G_{t+1} = F_{\hat{k}}$ 
and $C_{t+1} = C^{\hat{k}}_{t+1}$.
By \eqref{diam} and the induction hypothesis, we have 
\begin{equation}
\delta_{{\cal N'}}(C_{t+1}) \geq \frac{\delta_{{\cal N'}}(C_{t})}{2} 
 \geq \frac{1}{2^{t+1}}\delta_{{\cal N'}}(C_{0})
\enspace,
\end{equation}
i.e., Equation~\eqref{eq:thm:7:rate} holds for~$t+1$.

Assume by contradiction that for all~$k\in [n]$ $\diam(Y_k) < \diam(Y)/2$.
Recall that agent $i$ is deaf in $F_i$ and has the same in-neighbors in all the communication
   graphs $F_j$ with $j\neq i$.
Since $n\geq 3$, for any pair of agents $i,j$ we may select an agent $\ell$ different
	from $i$ and $j$ such that $\ell$ has the same in-neighbors in $F_i$ as in $F_j$.
Lemma~\ref{lem:intersect} with the assumption that $F_{\ell}$ is in~$\N$ shows
	that for any pair of agents $i,j $, we have
\begin{align}
  Y_i \cap Y_j & \neq \emptyset\enspace.
\end{align}
By Lemma~\ref{lem:finiteness}, there exist $k,k' \in [n]$ such that 
$\diam(Y_k \cup Y_{k'}) = \diam(Y)$.
In particular, we can choose $i=k$ and $j=k'$, which implies that  
\begin{equation}
\begin{aligned}
	 \diam(Y) = &\diam(Y_k \cup Y_{k'}) \leq \diam(Y_k) + \diam( Y_{k'})\\ <
	 &\diam(Y)
\end{aligned}
\end{equation}
which is a contradiction and concludes the proof.
}
\begin{proof}  
\proofthmthree
\end{proof}

Note that the network model $\deaf(K_n)$, where~$K_n$ is the complete digraph
on~$n$ nodes, is a subset of the network model
that contains all non-split communication graphs.
Hence the lower bound holds and, since Algorithm $2$ is applicable,
a tight bound follows.
In fact, it would even be sufficient to
reduce $\deaf(G)$ to the graphs $F_i,F_j,F_l$ for three agents $i,j,l\in [n]$.

\section{Tight Bound for Rooted Model: Contraction in Presence of $\Psi$ Graphs}

We next prove a lower bound of $\sqrt[n-2]{1/2}$ on the contraction rate of 
  asymptotic consensus algorithms for $n\geq 4$ agents.

\begin{figure}
\centering
\begin{tikzpicture}[>=latex,scale=0.8,rotate=0]
\tikzset{every loop/.style={min distance=5mm,in=0,out=60,looseness=5}}

\node[draw, circle] (n1) at (2,-0) {$i$};
\node[draw, circle] (n2) at (1,-0) {$j$};
\node[draw, circle] (n3) at (3,-0) {$l$};
\node[draw, circle] (n4) at (2,-1) {$4$};
\node[draw, circle] (n5) at (2,-2) {$5$};
\node[draw, circle] (n6) at (2,-3) {$6$};

\path[->] (n1) edge (n4);
\path[->] (n4) edge (n5);
\path[->] (n5) edge (n6);
\path[->] (n6) edge [out=130,in=-90] (n2);
\path[->] (n6) edge [out=50,in=-90] (n3);
\path[->] (n2) edge (n4);
\path[->] (n3) edge (n4);

\end{tikzpicture}
\caption{Rooted communication graph $\Psi_i$ for $n=6$}
\label{fig:rootedgraphs}
\end{figure}
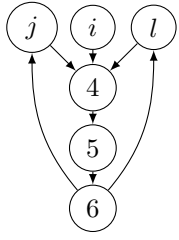

For $i\in \{1,2,3\}$, let $\Psi_i$ (see Figure~\ref{fig:rootedgraphs}) be the communication graph where agents $4 \leq j \leq n-1$ form a path
  with edges from $j$ to $j+1$, agents $\{1,2,3\}\setminus i$ have $n$ as their in-neighbor and $4$ as their out-neighbor,
  and $i$ has $4$ as its out-neighbor.
For $i\in \{1,2,3\}$, let $\sigma_i$ be the sequence of graphs $\Psi_i$ of length $n-2$.

First observe that any communication pattern arising from the concatenation of
$\sigma_i$ sequences necessarily is a communication pattern of the network
model of $\Psi_i$ graphs, which are rooted.
The analysis of the set of these communication patterns necessitates a generalization of our system model:
  generalizing from sets of allowed graphs to arbitrary sets of allowed communication patterns.

\begin{theorem}\label{thm:4}
The contraction rate of any asymptotic consensus algorithm
in a network model including the $\Psi$ graphs
is greater or equal to~$\sqrt[n-2]{1/2}$.
\end{theorem}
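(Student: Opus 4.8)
The plan is to replay the argument of Theorem~\ref{thm:3} one level up, working with the \emph{blocks} $\sigma_1,\sigma_2,\sigma_3$ (each of length $n-2$) in place of single graphs. To this end I pass to the generalized model whose set of admissible communication patterns is $\P$, the set of all infinite concatenations of $\sigma_1,\sigma_2,\sigma_3$; every pattern in $\P$ is a communication pattern of $\N$, so $Y^*_\P(C)\subseteq Y^*_\N(C)$ and $\delta_\P(C)\le\delta_\N(C)$ for every reachable $C$ (the block-level analog of Lemma~\ref{lem:submodel}). For a reachable $C$ write $Y=Y^*_\P(C)$ and $Y_k=Y^*_\P(\sigma_k.C)$. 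I will show that some block $\sigma_{\hat k}$ satisfies $\diam(Y_{\hat k})\ge\diam(Y)/2$. Iterating this over blocks produces an execution with $\delta_\N(C_{m(n-2)})\ge\delta_\P(C_{m(n-2)})\ge 2^{-m}\delta_\P(C_0)$; taking $(n-2)m$-th roots and letting $m\to\infty$ (so that $\delta_\P(C_0)^{1/((n-2)m)}\to1$) yields $\limsup_t\sqrt[t]{\delta_\N(C_t)}\ge(1/2)^{1/(n-2)}=\sqrt[n-2]{1/2}$, which is the claimed contraction rate.

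Three ingredients drive the block step, each a block-level restatement of a lemma from Section~\ref{sec:valency}. First, a block version of Lemma~\ref{lem:shrink}: since every pattern in $\P$ begins with one of the three blocks, $Y=Y_1\cup Y_2\cup Y_3$. Second, a block version of Lemma~\ref{lem:finiteness}: as there are only three blocks, there exist $k,k'$ with $\diam(Y_k\cup Y_{k'})=\diam(Y)$. Third, and crucially, pairwise intersection $Y_i\cap Y_j\neq\emptyset$ for all $i\neq j$ in $\{1,2,3\}$. Granting these, the contradiction of Theorem~\ref{thm:3} applies verbatim: if $\diam(Y_k)<\diam(Y)/2$ for all $k$, then using $Y_k\cap Y_{k'}\neq\emptyset$ one gets $\diam(Y)=\diam(Y_k\cup Y_{k'})\le\diam(Y_k)+\diam(Y_{k'})<\diam(Y)$, a contradiction.

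The heart of the proof is the pairwise intersection. Fix $i\neq j$ and let $\ell$ be the third agent of $\{1,2,3\}$. The structural facts I would exploit are that $\Psi_i$ and $\Psi_j$ have the same in-neighbor sets at every node except $i$ and $j$, and that agent~$\ell$ has in-neighbors exactly $\{\ell,n\}$ in both graphs. Running two executions $E,E'$ from a common $C$ with $\Psi_i$ resp.\ $\Psi_j$ for the whole block, I track the set of agents whose states may differ: it is empty initially, equals $\{i,j\}$ after the first round, is passed to agent~$4$ (the common out-neighbor of $1,2,3$) in the second round, and then travels along the path $4\to5\to\cdots\to n$, reaching agent~$k$ only in round $k-2$. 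Thus the disagreement reaches agent~$n$ only in round $n-2$, one round too late to affect agent~$\ell$; by an inductive application of Lemma~\ref{lem:indis} agent~$\ell$ is therefore in the same state in $E$ and $E'$ throughout the block of length $n-2$, i.e.\ $\sigma_i.C\sim_\ell\sigma_j.C$. I then extend both executions by $\sigma_\ell^{\infty}=\Psi_\ell^{\infty}\in\P$, in which $\ell$ is deaf: its value is frozen from the end of the first block on and is the \emph{same} in both executions, so by Agreement both limits equal this common frozen value, giving a point of $Y_i\cap Y_j$. Note that this is exactly why the block length is $n-2$: a longer block would let the disagreement reach $\ell$ and destroy the intersection.

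Finally I must start from a configuration with $\delta_\P(C_0)>0$. Choosing $C_0$ with $y^1(0)\neq y^2(0)$ and running the constant pattern $\Psi_k^{\infty}=\sigma_k^{\infty}\in\P$ for $k\in\{1,2,3\}$, the deafness of~$k$ in $\Psi_k$ together with Validity forces the limit to be $y^k(0)$, exactly as in Lemma~\ref{lem:initial}; hence $y^1(0),y^2(0)\in Y^*_\P(C_0)$ and $\delta_\P(C_0)>0$. Combining this with the iterated block step gives the desired execution and the bound. I expect the main obstacle to be the bookkeeping in the divergence-wave argument (verifying round by round, via Lemma~\ref{lem:indis}, that the disagreement front advances one path-edge per round) together with the careful set-up of the generalized pattern model $\P$, namely restating Lemmas~\ref{lem:submodel}, \ref{lem:shrink}, and \ref{lem:finiteness} at the level of blocks.
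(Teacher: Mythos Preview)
Your proposal is correct and follows essentially the same approach as the paper: the paper also passes to the generalized property $\P_{\mathrm{seq}}$ of all concatenations of blocks $\sigma_1,\sigma_2,\sigma_3$, proves the indistinguishability $\sigma_i.C\sim_\ell\sigma_j.C$ via the same ``divergence-wave'' induction you describe (its Lemma~\ref{lem:rooted}), obtains pairwise intersection by extending with $\sigma_\ell^\omega$, and derives the $\tfrac12$-per-block contraction by the same diameter contradiction. The only notable difference is presentational: the paper formalizes the block-level argument via \emph{snapshots} $S=(C,\pi)$ and restates Lemmas~\ref{lem:submodel}--\ref{lem:initial} in that setting (Section~\ref{sec:properties}), whereas you work directly with configurations at block boundaries and bound the contraction rate via the subsequence $t=m(n-2)$; both routes yield the same bound.
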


From~\cite{CBFN16} we have that the amortized midpoint algorithm guarantees
  a contraction of $\sqrt[n-1]{1/2}$ for rooted network models.
Theorem~\ref{thm:4} shows that this is asymptotically optimal.

\subsection{From Network Models to Sequences}
\label{sec:properties}

To prove Theorem~\ref{thm:4},
we generalize the system model from Section~\ref{sec:model}
and some of the basic lemmas we proved for the specific case
of network models..
While we previously allowed the adversary to choose any sequence of
communication graphs
from the network model, we next consider more general properties on graph
sequences,
including safety and liveness properties.

A {\em property\/} is a set of communication patterns.
A {\em snapshot\/} is a pair $S=({C},\pi)$ where~${C}$ is a
configuration, i.e., a collection of states of the agents, and~$\pi$ is a
finite sequence of communication graphs.
Given a snapshot $S=(C, \pi)$ and a communication graph~$G$,
define $G.S = (G.C, \pi\cdot G)$ where $\pi\cdot G$ is the addition of~$G$ 
to the end of~$\pi$.
We extend this definition to a finite sequence~$\sigma$ of communication graphs. 
We write $S\sim_i S'$ if agent~$i$ has the same local state in both~$S$
and~$S'$.

A {\em trace\/} of an algorithm~$\A$ in a property~$\P$ is an
infinite sequence $T = (S_0, S_1, \dots)$ of snapshots such that
there exists a
communication pattern $P\in\P$ with $S_t = G_t . S_{t-1}$ for
all~$t\geq 1$.
We denote by~$\mathcal{T}_\A^\P$ the set of all traces of~$\A$ in~$\P$.
If~$\A$ solves asymptotic consensus in~$\P$, then we write~$y_T^*$ for the 
common limit of the agents' values in trace $T \in \mathcal{T}_\A^\P$.

We define the valency of snapshots and the contraction rate of an
algorithm in~$\mathcal{P}$ analogously to the case of network models as
\begin{equation*}
Y_{\P}^*(S) = \big\{ y_T^* \in \IR^d \mid S \text{ occurs in } T\in \mathcal{T}_{\A}^{\P} \big\}
\end{equation*}
and the contraction rate as
\begin{equation*}
\sup_{T\in\mathcal{T}_\A^\P}\limsup_{t \to \infty}\sqrt[t]{\delta_{\P}(S_t)}
\end{equation*}
where
$\delta_{\P}(S_t) = \diam\big( Y_\P^*(S_t) \big)$.

\begin{lemma}\label{lem:safety:submodel}
Let $\P, \P'$ be two properties with $\P' \subseteq \P$.
If  $\mathcal{A}$ is an algorithm that solves asymptotic consensus in $\P$, then 
	(i) it also solves  asymptotic consensus in $\P'$, (ii) 
	for every snapshot $S$ reachable by $\mathcal{A}$ in $\P'$, we have 
	$Y^*_{\P'} (S) \subseteq Y^*_{\P}  (S)$,
        (iii) $\delta_{\P'}(S) \leq \delta_{\P}(S)$, and
        (iv) the contraction rate in $\P'$ is less or equal to the contraction rate in $\P$.
\end{lemma}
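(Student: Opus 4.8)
The plan is to follow the proof of Lemma~\ref{lem:submodel} essentially verbatim, substituting the property pair $(\P',\P)$ for the network-model pair $(\N',\N)$ and snapshots~$S$ for configurations~$C$. The structural fact underlying every part is the inclusion $\mathcal{T}_\A^{\P'} \subseteq \mathcal{T}_\A^{\P}$, which follows immediately from $\P' \subseteq \P$: by definition a trace $T \in \mathcal{T}_\A^{\P'}$ is generated by some communication pattern $P \in \P'$, and since $\P' \subseteq \P$ we have $P \in \P$, so the very same snapshot sequence $T$ is a trace in $\mathcal{T}_\A^{\P}$.

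For (i), I would note that $\A$ solving asymptotic consensus in $\P$ means that Convergence, Agreement, and Validity hold along every trace generated by a pattern in $\P$; restricting to patterns in $\P' \subseteq \P$ preserves these three conditions, so $\A$ solves asymptotic consensus in $\P'$ as well. For (ii), unfolding the definition $Y^*_{\P'}(S) = \{ y_T^* : S \text{ occurs in } T \in \mathcal{T}_\A^{\P'} \}$ and using the trace inclusion shows that any limit reachable from $S$ in $\P'$ is also reachable in $\P$, giving $Y^*_{\P'}(S) \subseteq Y^*_{\P}(S)$. Statement (iii) is then just the monotonicity of the diameter under set inclusion applied to (ii).

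The only part deserving any care is (iv). Here I would combine the two ingredients already established: the trace inclusion $\mathcal{T}_\A^{\P'} \subseteq \mathcal{T}_\A^{\P}$ and the pointwise bound $\delta_{\P'}(S_t) \leq \delta_{\P}(S_t)$ from (iii). Since the contraction rate is a supremum of $\limsup_{t\to\infty} \sqrt[t]{\delta(S_t)}$ taken over the relevant trace set, taking the supremum over the smaller set $\mathcal{T}_\A^{\P'}$ of a quantity that is itself bounded by the corresponding $\P$-quantity yields
\[
\sup_{T\in\mathcal{T}_\A^{\P'}}\limsup_{t \to \infty}\sqrt[t]{\delta_{\P'}(S_t)}
\leq
\sup_{T\in\mathcal{T}_\A^{\P}}\limsup_{t \to \infty}\sqrt[t]{\delta_{\P}(S_t)}
\enspace.
\]

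I do not anticipate any genuine obstacle. Because properties are defined directly as sets of communication patterns, the inclusion $\P' \subseteq \P$ transfers to traces without any of the intermediate reasoning about graph products that a purely graph-level argument might require, so the generalized statement is if anything cleaner than its network-model counterpart.
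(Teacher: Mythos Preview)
Your proposal is correct. The paper actually states Lemma~\ref{lem:safety:submodel} without proof, treating it as a direct analog of Lemma~\ref{lem:submodel}; your write-up is precisely the expected adaptation of that proof, with traces and snapshots in place of executions and configurations, and the key inclusion $\mathcal{T}_\A^{\P'}\subseteq\mathcal{T}_\A^{\P}$ playing the role of $\mathcal{E}_\A^{\N'}\subseteq\mathcal{E}_\A^{\N}$.
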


For a snapshot $S=(C,\pi)$ reachable by algorithm~$\A$ in property~$\P$,
we define $\Sigma(S)$ to be the set of communication graphs~$G$ such that
$\pi\cdot G$ is a prefix of a communication pattern in~$\P$.

\begin{lemma}\label{lem:safety:shrink}
Let $S$ be a snapshot reachable by algorithm $\A$ in property~$\P$.
Then
\begin{equation*}
Y_{\P}^*(S) = \bigcup_{G\in \Sigma(S)} Y_{\P}^*(G.S)
\enspace.
\end{equation*}
\end{lemma}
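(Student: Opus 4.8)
The plan is to establish the two inclusions separately, mirroring the proof of Lemma~\ref{lem:shrink}, while accounting for the fact that the adversary is now confined to patterns in~$\P$ instead of arbitrary sequences of graphs. The device that drives the whole argument is that a snapshot $S=(C,\pi)$ records the complete history $\pi$ of communication graphs that produced it: in any trace $T=(S_0,S_1,\dots)$ starting from an initial snapshot with empty history, the snapshot at time~$t$ carries a history of length~$t$. Hence $S$ can occur in a trace only at the single time $t=|\pi|$, and only inside a trace whose underlying pattern has $\pi$ as a prefix.

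For the inclusion $Y_\P^*(S)\subseteq\bigcup_{G\in\Sigma(S)} Y_\P^*(G.S)$ I would take $y^*\in Y_\P^*(S)$, witnessed by a trace $T\in\mathcal{T}_\A^\P$ arising from a pattern $P=(G_1,G_2,\dots)\in\P$ with $S=S_{|\pi|}$ and $y_T^*=y^*$. Putting $G=G_{|\pi|+1}$ gives $S_{|\pi|+1}=G.S$; since $\pi\cdot G=(G_1,\dots,G_{|\pi|+1})$ is a prefix of $P\in\P$, we have $G\in\Sigma(S)$, and because $G.S$ occurs in~$T$ with limit $y^*$ we obtain $y^*\in Y_\P^*(G.S)$. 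This direction is the verbatim analogue of the first half of Lemma~\ref{lem:shrink}.

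The reverse inclusion is where the restriction to~$\P$ must be handled carefully, since---unlike in the network-model setting of Lemma~\ref{lem:shrink}---one cannot freely splice the prefix of one pattern onto the tail of another and remain inside~$\P$. Fix $G\in\Sigma(S)$ and $y^*\in Y_\P^*(G.S)$; the latter is witnessed by a trace $T''$ from a pattern $P''\in\P$ in which $G.S$ occurs. Because the history of $G.S$ is $\pi\cdot G$, of length $|\pi|+1$, this forces $G.S$ to occur at time $|\pi|+1$ in $T''$ and hence forces $P''$ to begin with $\pi\cdot G$. On the other hand, reachability of~$S$ supplies a trace whose initial snapshot $S_0'$ (with empty history) satisfies $\pi.S_0'=S$. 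I would then run the already-legitimate pattern $P''$ from this initial snapshot $S_0'$: after $|\pi|$ rounds it reaches $\pi.S_0'=S$ (as $P''$ starts with $\pi$), and after one further round it reaches $G.S$, precisely the snapshot $T''$ occupies at time $|\pi|+1$. From that moment on the new trace and $T''$ are driven by the same pattern, so by determinism they coincide and share the limit $y^*$. Since $S$ occurs in the new trace, $y^*\in Y_\P^*(S)$, yielding $\bigcup_{G\in\Sigma(S)} Y_\P^*(G.S)\subseteq Y_\P^*(S)$ and completing the proof.

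The main obstacle, and the only genuine departure from Lemma~\ref{lem:shrink}, is exactly this splicing step: the delicate point is producing a legitimate trace of a pattern in~$\P$, which is why I reuse the valid pattern $P''$---guaranteed by the history bookkeeping to begin with $\pi\cdot G$---rather than concatenating two unrelated patterns, substituting only the initial configuration coming from the reachability of~$S$.
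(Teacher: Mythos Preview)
The paper states Lemma~\ref{lem:safety:shrink} without proof, as one of several generalizations of the network-model lemmas (here, Lemma~\ref{lem:shrink}) to the property setting; your argument is correct and is precisely the intended adaptation. In particular, the point you isolate---that the history component~$\pi$ forces any witnessing pattern $P''\in\P$ for $G.S$ to begin with $\pi\cdot G$, so one may reuse $P''$ itself from the initial snapshot~$S_0'$ supplied by the reachability of~$S$ rather than attempt to concatenate two unrelated patterns---is exactly where the naive transcription of the proof of Lemma~\ref{lem:shrink} would fail, and your fix is the clean one.
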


\begin{lemma}\label{lem:safety:finiteness}
Let $S$ be a configuration reachable by algorithm $\mathcal{A}$ in property~$\P$.
Then there exist $G,H\in\Sigma(S)$ such that
\begin{equation*}
\diam\big(Y_{\P}^*(S)\big) = 
\diam\big(Y^*_{\P}(G.S) \cup Y^*_{\P}(H.S)\big)
\enspace.
\end{equation*}
\end{lemma}

\begin{lemma}\label{lem:safety:intersect}
Let $S = ({C}, \pi)$ and $S'=(C',\pi')$ be two snapshots with $S\sim_i S'$.
If there exist sequences of communication
graphs~$\alpha$ and~$\alpha'$ such that $\pi\cdot\alpha\in\P$,
$\pi'\cdot\alpha' \in\P$ and~$i$ is deaf in all communication graphs
in~$\alpha$ and~$\alpha'$,
then
$Y^*_{\P}(S) \cap Y^*_{\P}(S')  \neq \emptyset$.
\end{lemma}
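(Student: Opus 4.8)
The plan is to mimic the proof of Lemma~\ref{lem:intersect}, but in the more general setting of properties and snapshots. The key idea is that if agent~$i$ is in the same state in~$S$ and~$S'$ and we henceforth only apply communication graphs in which~$i$ is deaf, then agent~$i$'s state evolves identically in both resulting executions, so the two traces have the same limit value for agent~$i$, hence (by Agreement) the same consensus limit. That common limit will witness the nonemptiness of the intersection $Y^*_{\P}(S) \cap Y^*_{\P}(S')$.

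Concretely, I would first extend the snapshot~$S = (C,\pi)$ to a full trace by appending the sequence~$\alpha$ of graphs in which~$i$ is deaf, using the hypothesis $\pi\cdot\alpha\in\P$ to guarantee this yields a valid communication pattern; I then extend this finite prefix to an infinite communication pattern in~$\P$ in any admissible way, obtaining a trace $T\in\mathcal{T}^{\P}_{\A}$ in which~$S$ occurs. Analogously, I build a trace~$T'$ from~$S'$ by appending~$\alpha'$ and then \emph{the same continuation} used to complete~$T$ (this is where some care is needed—one wants both traces to share an identical tail of deaf-for-$i$ graphs so that the inductive argument below goes through; the cleanest route is to take $\alpha = \alpha'$ to be a common prefix, or to append a further common infinite tail of graphs in which~$i$ remains deaf, which is feasible precisely because deafness of~$i$ is a per-round local condition and we only need it for agent~$i$). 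The second main step is the inductive claim that along these two traces the snapshots remain indistinguishable for~$i$, i.e.\ $S_t \sim_i S'_t$ for every~$t$ past the branching point. The base case is $S\sim_i S'$ by hypothesis; the inductive step follows from Lemma~\ref{lem:indis}, applied in its snapshot form: since~$i$ is deaf in the graph applied at each step, $i$'s only in-neighbor is itself, and $S_t\sim_i S'_t$ gives $G.S_t \sim_i G'.S'_t$ even when the applied graphs differ, because the only relevant in-neighbor of~$i$ (namely~$i$) is in the same state.

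From $S_t\sim_i S'_t$ for all~$t$ I conclude $y^i_T(t) = y^i_{T'}(t)$ for all~$t$, hence the limits $y^i_T(\infty) = y^i_{T'}(\infty)$ coincide. By Agreement these agent-$i$ limits equal the respective consensus limits $y^*_T$ and $y^*_{T'}$, so $y^*_T = y^*_{T'}$. Since~$S$ occurs in~$T$ and~$S'$ occurs in~$T'$, this common value lies in both $Y^*_{\P}(S)$ and $Y^*_{\P}(S')$, establishing $Y^*_{\P}(S)\cap Y^*_{\P}(S')\neq\emptyset$.

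The main obstacle I anticipate is a subtlety in synchronizing the two continuations: Lemma~\ref{lem:indis} in the snapshot setting requires that at each round the graph applied keeps~$i$ deaf, but the two traces may a priori diverge in the graphs used. The hypothesis only supplies the finite prefixes~$\alpha,\alpha'$ making~$i$ deaf, so after exhausting them I must ensure the infinite tails still keep~$i$ deaf in both traces simultaneously—otherwise the indistinguishability induction breaks. The fix is to observe that I am free to \emph{choose} any admissible infinite extension within~$\P$, and in particular I can extend both prefixes by one and the same infinite continuation (so that the applied graphs agree from some point on), provided~$\P$ admits such a continuation; the lemma statement's hypotheses, together with the fact that appending deaf-for-$i$ graphs preserves membership in~$\P$ along the relevant prefix, are exactly what make this possible. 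Once the traces share a common deaf-for-$i$ tail, the indistinguishability induction and the limit argument go through verbatim.
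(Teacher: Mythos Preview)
Your core argument is correct and is exactly the intended generalization of the proof of Lemma~\ref{lem:intersect}: extend $S$ and $S'$ to traces in which agent~$i$ stays deaf forever, show by induction that~$i$'s local state evolves identically in both, and conclude that the consensus limits coincide. The paper in fact states Lemma~\ref{lem:safety:intersect} without proof, as the property-setting analog of Lemma~\ref{lem:intersect}, so your sketch is filling in what the paper omits.

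However, your last paragraph invents a non-obstacle and proposes an unjustified fix. You treat~$\alpha$ and~$\alpha'$ as finite prefixes that must then be extended to infinite patterns, but the hypothesis is $\pi\cdot\alpha\in\P$, and~$\P$ is a set of \emph{infinite} communication patterns; since~$\pi$ is finite, $\alpha$ is already infinite (and likewise~$\alpha'$). The two traces are therefore given outright by the patterns $\pi\cdot\alpha$ and $\pi'\cdot\alpha'$, with no further continuation needed. More importantly, no \emph{common} tail is needed, and none is guaranteed by the hypotheses: $\P$ is an arbitrary property, and nothing forces $\pi$ and $\pi'$ to admit a shared infinite extension in~$\P$. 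You yourself already noted, in describing the inductive step, that the graphs applied in the two traces may differ round by round: since~$i$ is deaf in both $G_t$ and $G_t'$, agent~$i$'s sole in-neighbor is~$i$ in each, so Lemma~\ref{lem:indis} yields $G_t.S_{t-1}\sim_i G_t'.S_{t-1}'$ from $S_{t-1}\sim_i S_{t-1}'$ irrespective of whether $G_t=G_t'$. Once you drop the misreading of~$\alpha,\alpha'$ as finite and the attendant synchronization worry, the proof is complete as written.
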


\begin{lemma}\label{lem:safety:initial} 
Let $\Delta\geq 0$.
If there exist agents $i\neq j$ and communication patterns $P_i,P_j\in\P$ such that
agent~$i$ is deaf in~$P_i$ and agent~$j$ is deaf in~$P_j$,
then there is an initial snapshot $S_0$ with
$\delta_{\P}(S_0) = \Delta$.
In particular, there is an initial snapshot for which
$\delta_{\P}(S_0)>0$.
\end{lemma}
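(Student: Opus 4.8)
The plan is to mirror the proof of Lemma~\ref{lem:initial}, but to replace its explicit indistinguishability-and-Validity comparison by a single appeal to Lemma~\ref{lem:safety:intersect}, the snapshot-level statement that already encapsulates the deaf-agent reasoning. First I would fix two points $a,b\in\IR^d$ with $\lVert a-b\rVert=\Delta$ and define the initial snapshot $S_0=(C_0,\pi_0)$, where $\pi_0$ is the empty communication prefix, by setting agent~$i$'s initial value to $a$, agent~$j$'s initial value to $b$, and every remaining agent's value to any point of the segment $[a,b]$ (for instance $a$). This construction is arranged precisely so that the set of initial values has diameter exactly~$\Delta$.

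For the upper bound $\delta_\P(S_0)\le\Delta$ I would invoke Validity: every reachable limit of a trace starting from $S_0$ lies in the convex hull of the initial values. Hence $Y_\P^*(S_0)$ is contained in the convex hull of $\{y^1(0),\dots,y^n(0)\}$, which is the segment $[a,b]$, and the diameter of a convex hull of finitely many points equals the diameter of the point set, namely $\lVert a-b\rVert=\Delta$.

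For the lower bound $\delta_\P(S_0)\ge\Delta$ I would show $a,b\in Y_\P^*(S_0)$. To obtain $a\in Y_\P^*(S_0)$, introduce the auxiliary initial snapshot $S_0^{(i)}$ in which \emph{every} agent starts with value~$a$; then $S_0\sim_i S_0^{(i)}$ since agent~$i$ holds~$a$ in both. Taking $\alpha=\alpha'=P_i$, where $P_i\in\P$ is the pattern in which $i$ is deaf, the hypotheses of Lemma~\ref{lem:safety:intersect} are met (the empty prefix composed with $P_i$ is $P_i\in\P$), so $Y_\P^*(S_0)\cap Y_\P^*(S_0^{(i)})\neq\emptyset$. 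By Validity applied to the constant configuration $C_0^{(i)}$, every limit from $S_0^{(i)}$ equals~$a$, so $Y_\P^*(S_0^{(i)})=\{a\}$; the nonempty intersection then forces $a\in Y_\P^*(S_0)$. The symmetric argument using agent~$j$, value~$b$, and pattern~$P_j$ gives $b\in Y_\P^*(S_0)$, whence $\delta_\P(S_0)\ge\lVert a-b\rVert=\Delta$. Combining the two bounds yields $\delta_\P(S_0)=\Delta$, and taking $\Delta>0$ gives the final assertion.

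The argument contains no deep obstacle; the only points demanding care are definitional bookkeeping. I must read ``agent~$i$ is deaf in $P_i$'' as ``$i$ is deaf in every communication graph of $P_i$'', so that $\alpha=P_i$ genuinely satisfies the deafness requirement of Lemma~\ref{lem:safety:intersect}, and I must confirm that $S_0^{(i)}$ admits at least one trace---which it does, since $P_i\in\P$---so that $Y_\P^*(S_0^{(i)})$ is the nonempty singleton $\{a\}$ and not empty. Notably, unlike the network-model version in Lemma~\ref{lem:initial}, no matching of in-neighbours in a first graph is needed, because Lemma~\ref{lem:safety:intersect} processes an entire deaf suffix at once.
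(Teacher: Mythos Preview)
Your proof is correct. The paper omits an explicit proof of this lemma, presenting it as the snapshot-level analogue of Lemma~\ref{lem:initial}; your argument is precisely the natural adaptation of that proof, with the small but clean improvement of invoking Lemma~\ref{lem:safety:intersect} in place of the direct repeated-indistinguishability construction used in Lemma~\ref{lem:initial}. The definitional checks you flag (reading ``deaf in~$P_i$'' as deaf in every graph of~$P_i$, and nonemptiness of $Y_\P^*(S_0^{(i)})$) are exactly the right ones, and your handling of the weaker hypothesis---only two agents with deaf patterns rather than all---by placing the extremal values~$a,b$ specifically at agents~$i$ and~$j$ is the correct adjustment.
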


\subsection{Proof of Theorem~\ref{thm:4}}

\begin{lemma}\label{lem:rooted}
For $i,j,\ell \in\{1,2,3\}$ with $\ell \neq i,j:$ $\sigma_i.C_t
\sim_{\ell} \sigma_j.C_t$.
\end{lemma}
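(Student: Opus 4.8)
The plan is to run the two executions side by side: let $A_0 = B_0 = C_t$ and, for $0 \le r < n-2$, set $A_{r+1} = \Psi_i . A_r$ and $B_{r+1} = \Psi_j . B_r$, so that $A_{n-2} = \sigma_i . C_t$ and $B_{n-2} = \sigma_j . C_t$. I would then track, round by round, the set of agents that are guaranteed to be in the same state in both executions, and drive this bookkeeping with Lemma~\ref{lem:indis}. The enabling structural observation is that $\Psi_i$ and $\Psi_j$ differ \emph{only} in the in-neighborhoods of $i$ and $j$: the edge $(n,i)$ is present in $\Psi_j$ but absent in $\Psi_i$, and $(n,j)$ is present in $\Psi_i$ but absent in $\Psi_j$. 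Every other agent has identical in-neighbors in both graphs; in particular $\In_\ell(\Psi_i) = \In_\ell(\Psi_j) = \{\ell,n\}$, agent $4$ has $\In_4 = \{1,2,3,4\}$ in both, and each path agent $m\in\{5,\dots,n\}$ has $\In_m = \{m-1,m\}$ in both.

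The heart of the argument is an inductive invariant on the ``contaminated'' set. I would prove by induction on $r$ that, for $0 \le r \le n-2$, every agent $m \notin D_r$ satisfies $A_r \sim_m B_r$, where
\begin{equation*}
D_r = \{i,j\} \cup \{4,5,\dots,r+2\}
\end{equation*}
(with the path segment read as empty when $r \le 1$, and noting $r+2 \le n$ throughout this range). The base case $r=0$ is immediate since $A_0 = B_0$. For the step, fix $m \notin D_{r+1}$. Then $m \notin \{i,j\}$, so $m$ has the same in-neighbors in $\Psi_i$ and $\Psi_j$; by Lemma~\ref{lem:indis} it suffices to check that all in-neighbors of $m$ lie outside $D_r$, i.e.\ agree at round $r$. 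For $m = \ell$ the in-neighbors are $\{\ell,n\}$, and $n \in D_r$ would force $r+2 \ge n$, impossible for $r \le n-3$; for a path agent $m \ge r+4$ the in-neighbors $\{m-1,m\}$ satisfy $m-1 \ge r+3 > r+2$ and are both $\ge 4 > \max(i,j)$, hence avoid $D_r$; the case $m=4$ only arises at $r=0$, where $D_0 = \{i,j\}$ is disjoint from nothing relevant since all agents agree at round $0$. This closes the induction and confirms that the disagreement propagates along the path $4\to 5\to\cdots\to n$ exactly one node per round.

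Evaluating the invariant at $r = n-2$ gives $D_{n-2} = \{i,j\}\cup\{4,\dots,n\}$, and since $\ell \in \{1,2,3\}\setminus\{i,j\}$ we have $\ell \notin D_{n-2}$, yielding $A_{n-2} \sim_\ell B_{n-2}$, which is the claim $\sigma_i.C_t \sim_\ell \sigma_j.C_t$. I expect the main obstacle to be the timing at the boundary: the contamination wave reaches agent $n$ precisely at round $n-2$, and $\ell$ depends on $n$ with a one-round delay, so $\ell$ would only become contaminated at round $n-1$. The delicate points are therefore verifying that $\sigma_i$ has exactly the length $n-2$ that stops one round before $\ell$ is affected, and correctly handling the small/degenerate cases (notably $n=4$, where the path is a single node and the distance from $i$ to $\ell$ through $n$ is exactly $2 = n-2$), which I would check explicitly against the invariant.
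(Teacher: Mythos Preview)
Your proof is correct and follows essentially the same approach as the paper: both arguments track, by induction on the number of applied $\Psi$-rounds, a set of agents guaranteed to be in the same state in the two executions, and drive the step with Lemma~\ref{lem:indis}. The paper phrases the invariant positively (after $k$ rounds, agents in $\{\ell\}\cup\{k+3,\dots,n\}$ agree) while you phrase it via the complementary contaminated set $D_r=\{i,j\}\cup\{4,\dots,r+2\}$; these are the same statement, and your handling of the boundary case $m=4$ at $r=0$ is the only place where the two write-ups differ in bookkeeping.
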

\begin{proof}  
We inductively show the following stronger statement.
Let $\sigma^k_i$ be the sequence of graphs $\Psi_i$ of length $k\in[n-2]$.
For agents $i,j,\ell \in \{1,2,3\}$ with $\ell \neq i,j$, and $m \in \{k+3, \dots, n\}$, we have
$\sigma^k_i.C_t \sim_{\ell,m} \sigma^k_j.C_t$.

Observe that agents $\ell$ and $\{4,\dots,n\}$ have the same in-neighbors
  in $\Psi_i$ and $\Psi_j$.
The base case ($k=1$) follows from the observation and Lemma~\ref{lem:indis}.
For the inductive step ($k \mapsto k+1$), observe that agent~$\ell$ 
and $\{k+4,\dots,n\}$ have only incoming edges 
  from agents $\ell$ and $\{k+3,\dots,n\}$. From the hypothesis and Lemma~\ref{lem:indis},
  the inductive step follows.
\end{proof}

Let property $\mathcal{P}_{\mathrm{seq}}$ contain any communication pattern arising
from the concatenation of $\sigma_i$ sequences defined at the start of
the section and property $\mathcal{P}$ contain all communication patterns
generated by rooted graphs.
We show the stronger statement that for every initial snapshot~$S_0$ there
is a trace $T= S_0,S_1,\dots$ starting at~$S_0$ such that
\begin{equation}\label{eq:thm:8:rate}
\delta_\mathcal{P}(S_{t}) \ge
\frac{1}{2^{\lceil\frac{t}{n-2}\rceil}}\delta_\mathcal{P}(S_0) 
\end{equation}
for all $t\geq 0$.
It suffices to show~\eqref{eq:thm:8:rate} for $\P_\mathrm{seq}$ because
$\delta_\mathcal{P}(S_t) \geq \delta_{\P_\mathrm{seq}}(S_t)$
  by Lemma~\ref{lem:safety:submodel}
  and $\delta_{\mathcal{P}_\mathrm{seq}}(S_0) = \delta_\mathcal{P}(S_0)$ by
Lemma~\ref{lem:safety:initial} whenever $\mathcal{P} \supseteq \P_\mathrm{seq}$.
We hence suppose $\mathcal{P} = \P_\mathrm{seq}$ in the rest of the proof.
The proof is by inductive construction of a trace
$T = S_0, S_1, \dots$ whose snapshots~$S_t$
  satisfy~\eqref{eq:thm:8:rate}.
This, applied to an initial snapshot with $\delta_{\P}(S_0)>0$, which exists by
  Lemma~\ref{lem:safety:initial}, then shows the theorem. 

The base case ($t=0$) is trivially fulfilled.

For the inductive step ($t = (n-2)k \mapsto t \leq (n-2)(k+1)$) assume that
  Equation~\eqref{eq:thm:8:rate} holds for $t = (n-2)k$.  
First observe that, by construction of any $P \in \P$, there are
  three possible successor patterns until round $t+n-2$: $\sigma_1, \sigma_2, \sigma_3$.
We thus have,
$Y_{\P}^*(S_{t}) = Y_{\P}^*(S^1_{t+1}) \cup Y_{\P}^*(S^2_{t+1}) \cup Y_{\P}^*(S^3_{t+1})= \dots =
Y_{\P}^*(S^1_{t+n-2}) \cup Y_{\P}^*(S^2_{t+n-2}) \cup Y_{\P}^*(S^3_{t+n-2})$,
where $S^u_{t+n-2}=\sigma_u.S_{t}$ for agent $u\in\{1,2,3\}$.

Abbreviate $Y = Y_{\P}^*(S_{t})$ and $Y_u = Y_{\P}^*(S^u_{t+n-2})$.
We will show that there exists a $\hat{u}\in \{1,2,3\}$ with
\begin{equation}\label{diam:new}
\diam(Y_{\hat{u}}) \ge \diam(Y)/2\enspace.
\end{equation}
We then define $S_{t+n-2} = S^{\hat{u}}_{t+n-2}$.
By \eqref{diam:new} and the induction hypothesis, we then have 
\begin{equation*}
\delta_{\mathcal{P}'}(S_{t+n-2}) = \dots = \delta_{\mathcal{P}'}(S_{t+1}) \geq \frac{\delta_{\mathcal{P}'}(S_{t})}{2} 
 \geq \frac{1}{2^{\lceil\frac{t}{n-2}\rceil+1}}\delta_{\mathcal{P}'}(S_{0})
\enspace,
\end{equation*}
i.e., Equation~\eqref{eq:thm:8:rate} holds up to round~$t+n-2$.

Assume by contradiction that for
  all~$u\in \{1,2,3\}$ $\diam(Y_u) < \diam(Y)/2$.
Since $n\geq 3$ and, by Lemma~\ref{lem:rooted}, $(C_t,\pi.\sigma_i)
\sim_{\ell} (C_t,\pi.\sigma_j)$ together with the fact that
$\pi.\sigma_i.\sigma^{\omega}_{\ell} \in
\P_\mathrm{seq}$ and $\pi.\sigma_j.\sigma^{\omega}_{\ell} \in
\P_\mathrm{seq}$ we can apply Lemma~\ref{lem:safety:intersect} which shows that,
for any pair $i,j\in\{1,2,3\}$
we have
\begin{align*}
  Y_i \cap Y_j & \neq \emptyset\enspace.
\end{align*}
By Lemma~\ref{lem:safety:finiteness}, there exist $u,u' \in \{1,2,3\}$ such that 
$\diam(Y_u \cup Y_{u'}) = \diam(Y)$.
In particular, we can choose $i=u$ and $j=u'$, which implies that  
\begin{align}
  \diam(Y) = &\diam(Y_u \cup Y_{u'}) \leq \diam(Y_u) + \diam( Y_{u'})\notag\\
           < &\diam(Y)\notag
\end{align}
which is a contradiction and concludes the proof.

\section{Relation to Exact Consensus and Generalized Bounds}
\label{sec:rel_exact_consensus}

In~\cite{CGP15}, Coulouma et al.\ characterized the network models in which exact consensus is solvable.
In~\cite{CBFN15}, Charron-Bost et al.\ showed that asymptotic consensus is solvable in a significantly broader class:
  it is solvable if and only if a network model is rooted.
In this section we aim to shed light on the deeper relation between these two problems by studying valencies
   and convergence rates.
Our main results are a characterization of the topological structure of valencies with respect to solvability of exact consensus
  (Theorem~\ref{thm:characterize}) and nontrivial lower bounds on the contraction rates whenever exact consensus is not
  solvable (Theorem~\ref{thm:diam} and Corollary~\ref{cor:diam}).
 
We start with recalling some definitions from Coulouma et al.~\cite{CGP15}.
In the following, we denote by~$\R(G)$ the set of roots of a communication
graph~$G$, i.e., the set of agents that have a directed path to all other
agents in~$G$.
For a set $S \subseteq [n]$, let $\In_S(G) = \bigcup_{j \in S}\In_j(G)$.
The set $\Out_S(G)$ is defined analogously.

\begin{definition}[Definition 4.7 in \cite{CGP15}]
Let~$\N$ be a network model.
Given $G,H,K \in \N$, we define $G \alpha_{\N,K} H$ if
  $\In_{\R(K)}(G) = \In_{\R(K)}(H)$.
The relation $\alpha_\N^*$ is the transitive closure of the union of
  the relations~$\alpha_{\N,K}$ where~$K$ varies in~$\N$.
\end{definition}

\begin{definition}[Definition 4.8 in \cite{CGP15}]
Let~$\N$ be a network model.
We define $\beta_\N$ to be the coarsest equivalence relation included in
$\alpha_\N^*$ such
that for all $G,H$ holds:\\
{\bf(Closure Property)} If $G \beta_\N H$, then there exists
a nonnegative integer~$q$ and communication graphs
$H_0,\dots,H_q\in\N$ and
$K_1, \dots,K_q \in \N$ such that
\begin{itemize}
\item (i) $G= H_0$ and $H=H_q$
\item (ii) $\forall r \in [q] \colon\ H_r \beta_\N G 
	\text{ and }K_r \beta_\N G$
\item (iii) $\forall r \in [q] \colon\ H_{r-1}\alpha_{\N,K_r} H_r$
\end{itemize}
\end{definition}

We next show properties of subsets of network model $\N$ that
  are $\beta_\N$-classes.

\begin{lem}\label{lem:beta:class}
Let~$\N$ be a network model and let $\N'\subseteq \N$ be a
$\beta_\N$-class.
Then 
$G\alpha_{\N'}^* H$ 
and
$G\beta_{\N'} H$ 
for all $G,H\in\N'$.
\end{lem}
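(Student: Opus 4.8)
The plan is to establish the two assertions in order, deriving the $\beta_{\N'}$-statement from the $\alpha_{\N'}^*$-statement by a short formal argument about the coarsest defining relation. The whole argument rests on unwinding the Closure Property of $\beta_\N$ in the ambient model and observing that it keeps the relevant graphs inside the single class $\N'$.

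For the first assertion, fix $G,H\in\N'$. Since $\N'$ is a $\beta_\N$-class we have $G\,\beta_\N\,H$, so the Closure Property supplies graphs $H_0,\dots,H_q\in\N$ and witnesses $K_1,\dots,K_q\in\N$ with $G=H_0$, $H=H_q$, with $H_{r-1}\,\alpha_{\N,K_r}\,H_r$ for every $r\in[q]$, and with $H_r\,\beta_\N\,G$ and $K_r\,\beta_\N\,G$ for every $r$. The last two conditions are the crux: because $\N'$ is exactly the $\beta_\N$-class of $G$, they force every $H_r$ and every $K_r$ to lie in $\N'$ (and $H_0=G\in\N'$ as well). Now the relation $\alpha_{\cdot,K}$ is defined purely through $\In_{\R(K)}(\cdot)$, which depends only on the graphs involved and not on the ambient model; hence for $K_r\in\N'$ and $H_{r-1},H_r\in\N'$ the step $H_{r-1}\,\alpha_{\N,K_r}\,H_r$ is literally the step $H_{r-1}\,\alpha_{\N',K_r}\,H_r$. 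Concatenating the $q$ steps, all of which use witnesses from $\N'$, yields $G\,\alpha_{\N'}^*\,H$; the degenerate case $q=0$ gives $G=H$ and is covered by reflexivity.

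For the second assertion, I would show that the full relation $R=\N'\times\N'$ meets the two requirements defining $\beta_{\N'}$, which by maximality forces $\beta_{\N'}=R$. First, $R\subseteq\alpha_{\N'}^*$ is exactly the first assertion, which in fact shows $\alpha_{\N'}^*$ is already all of $R$. Second, $R$ satisfies the Closure Property when $\beta_{\N'}$ is read as $R$: for any $G,H\in\N'$ the first assertion gives an $\alpha_{\N'}$-chain inside $\N'$ witnessing conditions (i) and (iii), while conditions (ii) --- that each $H_r$ and $K_r$ be $R$-related to $G$ --- hold vacuously since $R$ relates all pairs. Thus $R$ is an equivalence relation included in $\alpha_{\N'}^*$ obeying the Closure Property. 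Since $\beta_{\N'}$ is coarser than every relation meeting these requirements, in particular $\beta_{\N'}\supseteq R$; as $R=\N'\times\N'$ is the coarsest relation possible on $\N'$, we conclude $\beta_{\N'}=R$, i.e.\ $G\,\beta_{\N'}\,H$ for all $G,H\in\N'$.

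The single real obstacle is the bookkeeping step in the first assertion: recognizing that condition (ii) of the Closure Property is precisely what confines the connecting chain and its witnesses to $\N'$, so that the $\alpha_{\N,K_r}$-steps may be reread as $\alpha_{\N',K_r}$-steps. Once that is in hand, the second assertion is a routine consequence of the coinductive, coarsest-relation shape of the definition of $\beta$: collapsing $\alpha_{\N'}^*$ to the full relation makes the full relation a candidate, and coarsest-ness finishes the job.
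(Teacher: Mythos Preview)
Your proposal is correct and follows essentially the same approach as the paper's proof: you use condition~(ii) of the Closure Property to confine the chain and its witnesses to~$\N'$, reread the $\alpha_{\N,K_r}$-steps as $\alpha_{\N',K_r}$-steps, and then argue that the full relation $\N'\times\N'$ is a candidate satisfying the defining requirements of $\beta_{\N'}$, whence it coincides with $\beta_{\N'}$ by coarseness. The paper's version is terser in the second part (it asserts without detail that the full relation satisfies the closure property in~$\N'$), so your explicit verification of conditions~(i)--(iii) for $R$ is a welcome elaboration rather than a deviation.
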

\newcommand{\prooflembetaclass}{
Let $G,H\in\N'$.
Since $G\beta_{\N}H$, there is a~$q$ and $H_0,\dots,H_q\in\N$ and 
$K_1,\dots,K_q\in\N$ such that 
(i) $G=H_0$ and $H=H_q$
(ii) $H_r\beta_{\N}G$ and $K_r\beta_{\N}G$ for all $r\in [q]$,
and 
(iii) $H_{r-1}\alpha_{\N,K_r}H_r$ for all $r\in [q]$.
Condition~(ii) implies $H_0,\dots,H_q\in\N'$ and $K_1,\dots,K_q\in\N'$ since
they belong to the same $\beta_{\N}$-class as~$G$, i.e., $\N'$. 
Since all~$H_r$ are in~$\N'$, condition~(iii) can be strengthened to
$H_{r-1}\alpha_{\N',K_r}H_r$ for all $r\in [q]$.

But this means that the pair $(G,H)$ is in the transitive closure of the union
of the relations $\alpha_{\N',K_1}$, \dots, $\alpha_{\N',K_q}$, and thus
in~$\alpha_{\N'}^*$. Hence $\alpha_{\N'}^* = \N'\times \N'$, i.e., the first
part of the lemma.

To show the second part, define relation $\tilde{\beta} = \N'\times \N'$,
which, as we just proved, is included in~$\alpha_{\N'}^*$.
But it also satisfies the closure property in~$\N'$.
Since~$\tilde{\beta}$ is the coarsest equivalence relation on~$\N'$, we thus
have $\beta_{\N'} = \tilde{\beta} = \N'\times \N'$, i.e., the second part of
the lemma.
}
\iftoggle{CONF}{

}{ 
\begin{proof}
\prooflembetaclass
\end{proof}
}

\begin{definition}[Definition 4.5 in \cite{CGP15}]
A network model~$\N$ is called {\em source-incompatible\/} if
\[
\bigcap_{G\in\N} \R(G) = \emptyset
\enspace.
\]
\end{definition}

The proof of Coulouma et al.~\cite{CGP15} actually shows a stronger version
of their theorem (they focus on binary consensus), stated below:

\begin{thm}[Generalization of Theorem 4.10 in \cite{CGP15}]\label{thm:genCG}
Let~$\N$ be a network model.
Exact consensus is solvable in~$\N$ if and only if each
$\beta_{\N}$-class is not source-incompatible.
\end{thm}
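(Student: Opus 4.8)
The plan is to derive Theorem~\ref{thm:genCG} from Coulouma et al.'s characterization of binary consensus (Theorem~4.10 of \cite{CGP15}) using two bridging observations together with Lemma~\ref{lem:beta:class}. The first observation is that binary consensus is exactly the special case of exact consensus in which only two input values are used, so any impossibility proof for binary consensus immediately yields impossibility for exact consensus, and any exact consensus algorithm restricts to a binary one. The second observation---the ``stronger version'' alluded to above---is that the solvability construction of \cite{CGP15} always decides on one of the agents' input values and is therefore agnostic to the input domain; hence the very same algorithm solves exact consensus for arbitrary (in particular $\IR^d$-valued) inputs, not merely binary ones. I would also record the exact-consensus analogue of Lemma~\ref{lem:submodel}(i): since the executions of a fixed algorithm over a submodel $\N'\subseteq\N$ form a subset of its executions over $\N$, any algorithm solving exact consensus in $\N$ also solves it in every $\N'\subseteq\N$.

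For the necessity direction I would argue by contraposition. Suppose some $\beta_\N$-class $\N'$ is source-incompatible; the goal is to show exact consensus is unsolvable in $\N$. By the submodel observation it suffices to show unsolvability already in the smaller model $\N'$. Here Lemma~\ref{lem:beta:class} does the essential work: it tells us that $\beta_{\N'}=\N'\times\N'$, so that $\N'$, viewed as a network model in its own right, consists of a single $\beta_{\N'}$-class, namely $\N'$ itself. Since this unique class is source-incompatible by assumption, Coulouma et al.'s characterization applied to the model $\N'$ rules out binary consensus in $\N'$, and \emph{a fortiori} exact consensus in $\N'$ by the first bridging observation. This contradicts solvability in $\N$, completing the direction.

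For the sufficiency direction, assume that no $\beta_\N$-class is source-incompatible. Then Coulouma et al.'s characterization guarantees a binary consensus algorithm for $\N$, and by the second bridging observation the same algorithm in fact solves exact consensus in $\N$ for arbitrary input domains. This gives solvability and closes the equivalence.

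The step I expect to be the main obstacle is the second bridging observation underlying sufficiency, namely the claim that the construction of \cite{CGP15} is genuinely domain-agnostic. To justify it rigorously I would have to revisit their decision rule and confirm that (i) every deciding agent outputs an actual input value determined by a persistent common root of the eventually-stable class---whose existence is exactly what non-source-incompatibility of each $\beta_\N$-class provides---and that (ii) Agreement, Validity, and Termination continue to hold verbatim when the two-element value set is replaced by an arbitrary domain. By contrast, the necessity direction is comparatively routine once Lemma~\ref{lem:beta:class} has localized the obstruction to a single source-incompatible $\beta_{\N'}$-class, since there the impossibility is inherited directly from the binary case.
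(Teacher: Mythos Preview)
The paper does not supply its own proof of Theorem~\ref{thm:genCG}; it merely states that the argument of Coulouma et al.\ \cite{CGP15}, written for binary consensus, already establishes the general statement. Your proposal is essentially a fleshed-out version of this remark and is correct in outline: the sufficiency direction is exactly the observation that the algorithm from \cite{CGP15} decides on an input value and hence works for arbitrary domains, and you rightly flag that verifying this requires re-inspecting their decision rule.

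Your necessity argument, however, is more elaborate than needed. You pass to the $\beta_\N$-class $\N'$, invoke Lemma~\ref{lem:beta:class} to conclude $\beta_{\N'}=\N'\times\N'$, and then apply the characterization of \cite{CGP15} to the submodel $\N'$. This detour is unnecessary: Theorem~4.10 of \cite{CGP15} already applies directly to $\N$ and says that binary consensus is solvable in $\N$ if and only if every $\beta_\N$-class of $\N$ is not source-incompatible. Since an exact consensus algorithm restricted to two input values is a binary consensus algorithm, solvability of exact consensus in $\N$ immediately gives solvability of binary consensus in $\N$, and hence, by \cite{CGP15}, that no $\beta_\N$-class is source-incompatible. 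Lemma~\ref{lem:beta:class} plays no role here; in the paper it is used only later, in the proof of Theorem~\ref{thm:characterize}.
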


We start with showing a generalization of Lemma~\ref{lem:intersect}, that allows us to induce
  non-empty intersection of valencies.

\begin{lem}\label{lem:intersect:alpha}
Let $C$ be a configuration of an asymptotic consensus algorithm $\A$ for $\N$.
For all configurations $C$ in an execution of $\A$ in $\N$,
  and for all $G,H,K\in \N$,
  if $G \alpha_{\N,K} H$ then $Y^*_{\N}(G.C) \cap Y^*_{\N}(H.C) \neq \emptyset$.
\end{lem}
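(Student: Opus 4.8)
The plan is to mimic the intersection argument from Lemma~\ref{lem:intersect}, but replace its hypotheses (same in-neighbors for~$i$ plus a deaf graph for~$i$) with the more flexible combinatorial condition coming from $G\,\alpha_{\N,K}\,H$. Recall that $G\,\alpha_{\N,K}\,H$ means $\In_{\R(K)}(G) = \In_{\R(K)}(H)$, i.e., the roots of~$K$ see exactly the same set of agents in~$G$ as in~$H$. The idea is to build, from a common configuration~$C$, two executions that agree forever on the states of all agents in $\R(K)$, so that their consensus limits coincide. First I would set $S = \R(K)$ and record that, by the $\alpha_{\N,K}$ hypothesis, every agent in~$S$ has the same set of in-neighbors in~$G$ as in~$H$ (more precisely, the collective in-neighborhood $\In_S(G)=\In_S(H)$, which is what lets us keep all of~$S$ in lockstep simultaneously).

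Next I would construct the two executions explicitly. Starting from~$C$, let the first execution apply~$G$ and the second apply~$H$, and from the following round onward let \emph{both} executions apply the \emph{same} communication graph~$K$ repeatedly (so $E = \dots, C, G, \cdot, K, K, \dots$ and $E' = \dots, C, H, \cdot, K, K, \dots$). Because $K$ is a single fixed graph in~$\N$ whose roots are~$S$, a straightforward induction via Lemma~\ref{lem:indis} should show that $C_t \sim_j C'_t$ for every $j\in S$ and every round~$t$ after the branching point. The base of the induction is exactly the $\alpha_{\N,K}$ condition together with $C\sim_j C$ (trivially), giving $G.C \sim_j H.C$ for all $j\in S$; the inductive step uses that under~$K$ each root's in-neighbors are again roots (roots form a closed set under taking in-neighbors along~$K$ in the sense needed), so indistinguishability of the root states propagates. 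Once the root states agree for all time, the values $y^j_E(t)=y^j_{E'}(t)$ for $j\in S$ agree, and hence the common limits satisfy $y^*_E = y^*_{E'}$, placing this shared limit in $Y^*_\N(G.C)\cap Y^*_\N(H.C)$.

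The main obstacle I anticipate is justifying the inductive step cleanly: Lemma~\ref{lem:indis} keeps agent~$i$ indistinguishable only if \emph{all} of~$i$'s in-neighbors were already indistinguishable, so I need that the in-neighbors (under~$K$) of any root of~$K$ are themselves among the agents we are tracking. This requires the observation that, in~$K$, every in-neighbor of a root is again a root---equivalently $\In_{\R(K)}(K)\subseteq \R(K)$---which holds because a node with a directed path to a root of~$K$ also reaches everything the root reaches and is therefore itself a root. I would state this as a small preliminary claim before the induction. With that claim in hand, applying~$K$ at every subsequent round preserves $\{C_t\sim_j C'_t : j\in\R(K)\}$, and the rest is the routine limit-equality argument. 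A secondary subtlety is making sure both tails $G,K,K,\dots$ and $H,K,K,\dots$ are genuinely executions in~$\N$ (they are, since $G,H,K\in\N$) and that~$C$ is reachable so both branches extend a valid execution prefix, exactly as in the proof of Lemma~\ref{lem:shrink}.
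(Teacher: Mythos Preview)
Your proposal is correct and follows essentially the same approach as the paper: branch at~$C$ with~$G$ versus~$H$, then run~$K$ forever in both executions, and use Lemma~\ref{lem:indis} inductively to keep the roots of~$K$ indistinguishable so that the two limits coincide. You are in fact more explicit than the paper on the inductive step, correctly isolating and proving the needed closure property $\In_{\R(K)}(K)\subseteq \R(K)$ (every in-neighbor of a root of~$K$ is itself a root), which the paper leaves implicit.
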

\newcommand{\prooflemintersectalpha}{
By the definition of $G \alpha_{\N,K} H$ it is $\In_{\R(K)}(G) = \In_{\R(K)}(H)$.
Hence, together with Lemma~\ref{lem:indis}, it follows 
that $G.C \sim_i H.C$ for all nodes $i$ in $\R(K)$.
We consider an execution $E$ in which $C$ occurs at some $t_0-1$, $G$ is
the communication graph at $t_0$ and all following graphs are equal to $K$.
Analogously, let $E'$ be an execution identical to $E$ except that the
communication graph at round~$t_0$ is $H$ instead of $G$.  
By inductive application of Lemma~\ref{lem:indis}, we show that for all $t
\geq t_0$, we have $C_t \sim_i C'_t $.  In particular, we  obtain
$ y_{\E}^i(t) = y_{\EP}^i(t) $.
Thus $y_{\E}^*= y_{\EP}^*$, which shows
that  $Y^*_{\cal N}(G.C)$ and $Y^*_{\cal N}(H.C)$ intersect.
}
\iftoggle{CONF}{

}{ 
\begin{proof}
\prooflemintersectalpha
\end{proof}
}

We next establish that for network models in which exact consensus is not solvable,
  asymptotic consensus algorithms must have initial configurations that can be extended
  to executions with different limit outputs.

\begin{lem}\label{lem:initial:noncons}
Let~$\N$ be a network model in which exact consensus is not
solvable.
Then for all asymptotic consensus algorithms~$\A$, there exists an initial
configuration~$C_0$ such that $Y^*_{\N}(C_0)$ is not a singleton.

More precisely, for every~$\Delta>0$, there exists an initial
configuration~$C_0$ such that $\Delta\big(y(0)\big)\leq \Delta$ and
$\delta_{\N}(C_0)\geq \Delta/n$.
\end{lem}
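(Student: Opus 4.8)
The plan is to exploit Theorem~\ref{thm:genCG}: since exact consensus is not solvable in~$\N$, there exists a $\beta_{\N}$-class $\N'\subseteq\N$ that is source-incompatible, i.e.\ $\bigcap_{G\in\N'}\R(G)=\emptyset$. By Lemma~\ref{lem:beta:class}, within this class any two graphs are related by $\alpha_{\N'}^*$, hence also by $\alpha_{\N}^*$, so I can connect graphs in $\N'$ by finite $\alpha_{\N,K}$-chains using only graphs from~$\N'$. The source-incompatibility gives me, for \emph{every} agent $i\in[n]$, a graph $G_i\in\N'$ with $i\notin\R(G_i)$; I want to leverage these to produce an initial configuration whose valency is spread out.

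\medskip
\noindent
\textbf{Construction of $C_0$.} First I would place the initial values on a line: assign agent $k$ the value $y^k(0)=\frac{k-1}{n-1}\Delta$ for $k\in[n]$, so that $\Delta(y(0))=\Delta$ and all values lie in $[0,\Delta]$. The goal is to show two distinct limits are reachable from $C_0$, separated by at least $\Delta/n$. The key mechanism is Lemma~\ref{lem:intersect:alpha}: whenever $G\alpha_{\N,K}H$, the valencies $Y^*_{\N}(G.C)$ and $Y^*_{\N}(H.C)$ intersect. Chaining this along an $\alpha_{\N}^*$-path connecting any two graphs of $\N'$, I obtain that the valencies of all one-step successors $G.C_0$ for $G\in\N'$ pairwise overlap in a connected way — their union $\bigcup_{G\in\N'}Y^*_{\N}(G.C_0)$ is ``linked'' through a chain of nonempty intersections.

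\medskip
\noindent
\textbf{Extracting the lower bound.} The crucial point is that source-incompatibility forces these linked valencies to nonetheless cover a wide spread. For each agent $i$ pick $G_i\in\N'$ with $i\notin\R(G_i)$; intuitively, in the execution that applies $G_i$ and then keeps $i$ isolated from influencing the roots, agent $i$'s value can be pinned, forcing the common limit to avoid the extreme contributed by $i$. Since no single agent is a root of every graph in $\N'$, no extreme value can be ``forced for all graphs simultaneously,'' and summing/averaging the constraints over the $n$ agents yields that the reachable limits must span a range of width at least $\Delta/n$. Concretely, I would argue that for the two extreme agents $1$ and $n$ there are graphs in $\N'$ (with $1\notin\R$, resp.\ $n\notin\R$) whose induced limits differ, and the pigeonhole spacing $\Delta/(n-1)$ of the initial points guarantees a gap of at least $\Delta/n$ among the achievable limits, giving $\delta_{\N}(C_0)=\diam\big(Y^*_{\N}(C_0)\big)\geq\Delta/n$ via Lemma~\ref{lem:safety:shrink}-style decomposition $Y^*_{\N}(C_0)=\bigcup_{G}Y^*_{\N}(G.C_0)$.

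\medskip
\noindent
\textbf{Main obstacle.} The hard part will be turning the qualitative statement ``$\bigcap\R(G)=\emptyset$'' into the quantitative separation $\Delta/n$. The intersection structure from Lemma~\ref{lem:intersect:alpha} shows valencies \emph{overlap}, which a priori pushes toward a \emph{small} diameter; the subtlety is that overlapping does not mean collapsing to a point, and I must show that the union still stretches across at least a $1/n$ fraction of $\Delta$. The argument must carefully use that \emph{different} agents fail to be roots in \emph{different} graphs, so that the valencies, while pairwise linked, anchor to distinct initial values — and it is precisely the failure of a \emph{common} root that prevents $Y^*_{\N}(C_0)$ from being a singleton. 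Making this counting precise, likely by choosing the two agents realizing the diameter of the initial values and tracking which graph isolates each, is the delicate step.
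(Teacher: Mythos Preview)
Your proposal has a genuine gap: you work with a \emph{single} initial configuration and try to show its valency has diameter at least $\Delta/n$, but the tools you invoke cannot deliver this. Lemma~\ref{lem:intersect:alpha} tells you that $Y^*_{\N}(G.C_0)$ and $Y^*_{\N}(H.C_0)$ intersect when $G\alpha_{\N,K}H$ --- that is, valencies of different one-step successors of the \emph{same} configuration overlap. This pushes toward a \emph{small} diameter of $Y^*_{\N}(C_0)$, not a large one, as you yourself note. Your attempt to recover a lower bound from ``anchoring to distinct initial values'' is not made precise, and in fact there is no mechanism in your argument that forces any element of $Y^*_{\N}(C_0)$ to lie near any particular $y^i(0)$; source-incompatibility only says that for each $i$ there is a graph where $i$ is not a root, which does not by itself pin a limit near $y^i(0)$.

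The paper's argument avoids this by considering a \emph{chain} of $n+1$ initial configurations $C_0^{(0)},\dots,C_0^{(n)}$ with binary values ($y_i^{(k)}(0)=\Delta$ for $i\leq k$ and $0$ otherwise), so that adjacent configurations differ only in agent~$k$'s value. Source-incompatibility furnishes a graph $G^{(k)}\in\N'$ with $k\notin\R(G^{(k)})$; since $C_0^{(k-1)}\sim_i C_0^{(k)}$ for all $i\in\R(G^{(k)})$, running $G^{(k)}$ forever from both configurations yields the same limit, so $Y^*_{\N'}(C_0^{(k-1)})$ and $Y^*_{\N'}(C_0^{(k)})$ \emph{themselves} intersect. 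Validity pins the endpoints of the chain to $\{0\}$ and $\{\Delta\}$, and a telescoping/pigeonhole step then produces some $k$ with $\delta_{\N'}(C_0^{(k)})\geq\Delta/n$. The key idea you are missing is to compare valencies across \emph{different} initial configurations via indistinguishability, rather than across different successors of one fixed configuration.
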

\newcommand{\proofleminitialnoncons}{
We assume without loss of generality that $d=1$.
If not, we embed the initial values in any $1$-dimensional affine subspace.

Let $\N' \subseteq \N$ be any source-incompatible $\beta_\N$-class, which
exists by Theorem~\ref{thm:genCG}.
Consider the $n+1$ initial configurations~$C_0^{(k)}$ where
$0\leq k\leq n$ with initial values
\[
y_i^{(k)}(0) =
\begin{cases}
\Delta & \text{if } i\leq k\\
0 & \text{if } i > k
\enspace.
\end{cases}
\]
For all these initial configurations, we have $\Delta\big(y^{(k)}(0)\big)\leq
\Delta$.
Define $a(k) = \inf Y^*_{\N'}\big(C_0^{(k)}\big)$
and $b(k) = \sup Y^*_{\N'}\big(C_0^{(k)}\big)$.
By Validity, $Y^*_{\N'}\big(C_0^{(0)}\big) = \{0\}$ and 
$Y^*_{\N'}\big(C_0^{(n)}\big) = \{\Delta\}$,
which means $a(0)=b(0) = 0$ and $a(n)=b(n) = \Delta$.
There exists some~$k$ with $1\leq k\leq n$ such that
$b(k-1) \leq b(k) - \Delta/n$ since otherwise $0=b(0) > b(n) - \Delta = 0$.
Because~$\N'$ is source-incompatible, for every agent~$k$, there exists
a communication graph $G^{(k)}\in\N'$ such that $k\not\in S\big(G^{(k)}\big)$.
Since $C_0^{(k-1)} \sim_i C_0^{(k)}$ for all $i\in S\big(G^{(k)}\big)$, 
choosing two executions with all communication graphs equal to~$G^{(k)}$ shows
that $Y^*_{\N'}\big(C_0^{(k-1)}\big) \cap Y^*_{\N'}\big(C_0^{(k)}\big)
\neq \emptyset$, which implies $a(k)\leq b(k-1)$.
Combining both inequalities gives $a(k) \leq b(k) - \Delta/n$ and shows that 
$\delta_{\N'}\big(C_0^{(k)}\big) = b(k)-a(k) \geq \Delta/n$.
We hence choose the initial configuration $C_0 = C_0^{(k)}$.

This shows $\delta_{\N}(C_0) \geq \delta_{\N'}(C_0) \geq \Delta/n$
by Lemma~\ref{lem:submodel} and concludes the proof.
}
\iftoggle{CONF}{

}{ 
\begin{proof}
\proofleminitialnoncons
\end{proof}
}  

This finally allows us to derive one of our main results of this section:
  a characterization of network models in which exact consensus is solvable by
  the topological structure of valencies of asymptotic consensus algorithms.


\begin{theorem}\label{thm:characterize}
Let $\N$ be a network model.
Exact consensus is solvable in $\N$ if and only if
there exists an asymptotic consensus algorithm~$\mathcal{A}$ for~$\N$ 
such that $Y^*_{\N',\mathcal{A}}(C_0)$ is either a singleton or disconnected
for all network models $\N'\subseteq \N$ and all initial configurations~$C_0$
of~$\mathcal{A}$.
\end{theorem}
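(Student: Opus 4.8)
The plan is to prove the two implications separately. The forward direction (solvability implies the topological property) is short and rests on the fact that exact consensus has only finitely many possible decision values; the reverse direction is the substantial part and hinges on proving that valencies are \emph{connected} whenever exact consensus fails.

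\textbf{Forward direction.} Suppose exact consensus is solvable in $\N$ and fix a solving algorithm. Turn it into an asymptotic consensus algorithm~$\mathcal{A}$ by having each agent output its current decision estimate and, once it decides a value~$v$, freeze its output at~$v$ forever. Then Convergence holds because outputs are eventually constant, Agreement because exact consensus forces a common decision, and Validity because the decided value is one of the proposed initial values. For every submodel $\N'\subseteq\N$ and every initial configuration~$C_0$, the valency $Y^*_{\N',\mathcal{A}}(C_0)$ is exactly the set of values decidable from~$C_0$, a subset of the finite set $\{y^1(0),\dots,y^n(0)\}$. Any subset of a finite point set in~$\IR^d$ is either a singleton or a finite set of at least two isolated points, hence disconnected, which is the desired property.

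\textbf{Reverse direction (setup).} I argue by contraposition: assuming exact consensus is not solvable, I produce for an \emph{arbitrary} asymptotic consensus algorithm~$\mathcal{A}$ a submodel~$\N'$ and an initial configuration~$C_0$ whose valency is connected and not a singleton, contradicting the property. By Theorem~\ref{thm:genCG} there is a source-incompatible $\beta_\N$-class~$\N'$; by Lemma~\ref{lem:beta:class}, $\beta_{\N'}=\N'\times\N'$ is a single source-incompatible class, so Theorem~\ref{thm:genCG} applied to~$\N'$ shows exact consensus is unsolvable in~$\N'$, and also $\alpha^*_{\N'}=\N'\times\N'$. Lemma~\ref{lem:initial:noncons} then supplies an initial configuration~$C_0$ with $\delta_{\N'}(C_0)>0$ and collinear initial values. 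By Validity together with Lemma~\ref{lem:shrink} (which gives $Y^*_{\N'}(G.C)\subseteq Y^*_{\N'}(C)$), every valency reachable in~$\N'$ from~$C_0$ lies on the line~$L$ through these initial values, reducing the problem to dimension one.

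\textbf{Reverse direction (core connectivity step).} The claim is that, on~$L$, $Y^*_{\N'}(C)=[a(C),b(C)]$ for every reachable~$C$, where $a(C)=\inf Y^*_{\N'}(C)$ and $b(C)=\sup Y^*_{\N'}(C)$, so $\diam Y^*_{\N'}(C)=b(C)-a(C)$. Since~$\N'$ is finite, $Y^*_{\N'}(C)=\bigcup_{G\in\N'}Y^*_{\N'}(G.C)$ by Lemma~\ref{lem:shrink}, whence $a(C)=\min_G a(G.C)$ and $b(C)=\max_G b(G.C)$; and because $\alpha^*_{\N'}=\N'\times\N'$, Lemma~\ref{lem:intersect:alpha} makes the intersection graph of these finitely many pieces connected. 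A finite family of real intervals with connected intersection graph has union equal to a single interval, here $[a(C),b(C)]$, so for every target $r\in[a(C),b(C)]$ there is some $G\in\N'$ with $a(G.C)\le r\le b(G.C)$. Iterating this selection builds an execution $C=D_0,D_1,\dots$ with $a(D_t)\le r\le b(D_t)$ for all~$t$; since $b(D_t)-a(D_t)=\delta_{\N'}(D_t)\to 0$ by Convergence and Agreement, both endpoints squeeze to~$r$, and the limit of this execution, which lies in every $Y^*_{\N'}(D_t)$, must equal~$r$. Thus $r\in Y^*_{\N'}(C)$, proving the claim; applied to~$C_0$ it yields a valency equal to an interval of positive length, i.e.\ connected and not a singleton.

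\textbf{Main obstacle.} The difficulty is exactly this connectivity step: since~$\mathcal{A}$ is arbitrary, its consensus function may be discontinuous and $Y^*_{\N'}(C)$ need not even be closed, so connectedness cannot be read off as a continuous image of a connected compact set. The interval-sandwiching construction avoids any compactness hypothesis by building, for each intermediate limit~$r$, an explicit witnessing execution from the finite chain-intersection structure guaranteed by $\alpha^*_{\N'}=\N'\times\N'$ and Lemma~\ref{lem:intersect:alpha}. The forward direction and the one-dimensional reduction are then routine.
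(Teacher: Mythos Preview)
Your proposal is correct and follows essentially the same approach as the paper: both directions match, and in the reverse direction you use the same source-incompatible $\beta_\N$-class $\N'$, the same $\alpha^*_{\N'}=\N'\times\N'$ fact from Lemma~\ref{lem:beta:class}, the same intersection lemma (Lemma~\ref{lem:intersect:alpha}), and the same squeeze construction of an execution converging to a prescribed target. The only cosmetic difference is that the paper argues by contradiction (pick a gap point $z\notin Y^*_{\N'}(C_0)$ and build an execution converging to~$z$), whereas you argue directly that every $r\in[a(C_0),b(C_0)]$ is attained; the underlying mechanism---walking along an $\alpha$-chain of overlapping successor-valency intervals to keep the target sandwiched---is identical.
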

\newcommand{\proofthmcharacterize}{
($\Rightarrow$):
Assume that exact consensus is solvable in~$\N$, an
let~$\A'$
be an algorithm that solves exact consensus in~$\N$.
Let~$\A$ be the algorithm derived from~$\A'$ in that deciding is
replaced by setting its output variable to the decision value of~$\A'$ and
not changing it anymore.
Before the decision of algorithm~${\cal A}'$, algorithm~$\A$ outputs its
initial value.
Then ${\cal A}$ is an asymptotic consensus algorithm in $\N$.
Further, from Validity of exact consensus, for any initial
configuration $C_0$, the valency $Y^*_{\N,\A}(C_0)$ is a subset of
the set of initial values in $C_0$.
As the set of initial values of~$C_0$ is finite, so is
$Y^*_{\N,\A}(C_0)$ and, by 
Lemma~\ref{lem:submodel}, also $Y^*_{\N',\A}(C_0)$
for all $\N'\subseteq\N$.
Since any finite set is either a singleton or disconnected, the claim follows.

\medskip

($\Leftarrow$):
We assume without loss of generality that $d=1$.
If not, we embed the initial values in any $1$-dimensional affine subspace.

We proceed by means of contradiction.
Assume that exact consensus is unsolvable in~$\N$.
We will show that for all asymptotic consensus algorithms~$\A$ for~$\N$, there
exists an initial configuration~$C_0$
and a network model $\N' \subseteq \N$ such that 
$Y^*_{\N',\A}(C_0)$ is a nontrivial interval.

By Theorem~\ref{thm:genCG}, there is a source-incompatible
$\beta_{\N}$-class.
Choose~$\N'$ to be equal to such a class.
We choose~$C_0$ via Lemma~\ref{lem:initial:noncons}
such that $Y^*_{\N'}(C_0)$ is not a singleton.

To show that $Y^*_{\N'}(C_0)$ is connected, we assume to the contrary that
it is not and derive a contradiction.
The set $Y^*_{\N'}(C_0)$ not being connected
means
the existence of some $z\not\in Y^*_{\N'}(C_0)$ such that
\begin{equation}\label{eq:gap:initial}
\exists
z_1, z_2\in Y^*_{\N'}(C_0)
\colon\ 
z_1 < z < z_2
\enspace.
\end{equation}
We will inductively construct an execution \\$E=C_0,G_1,C_1,G_2,\dots$
such that
\begin{equation}\label{eq:gap:execution}
\exists
z_1, z_2\in Y^*_{\N'}(C_t)
\colon\ 
z_1 < z < z_2
\end{equation}
for all $t\geq 0$.
Setting $m(t) = \inf Y^*_{\N'}(C_t)$ and 
$M(t) = \sup Y^*_{\N'}(C_t)$,
we then have $m(t)\leq z\leq M(t)$ by~\eqref{eq:gap:execution}
and $M(t)-m(t) = \delta_{\N'}(C_t) \to 0$ by Convergence and Agreement.
Hence $\displaystyle\lim_{t\to\infty} m(t) = \lim_{t\to\infty} M(t) = z$, which means
\[
\lim_{t\to\infty}
Y^*_{\N'}(C_t)
=
\bigcap_{t\geq 0}
Y^*_{\N'}(C_t)
=
\{z\}
\enspace,
\]
where the first equality follows from Lemma~\ref{lem:shrink}.
In particular
$z\in Y^*_{\N'}(C_0)$, which gives the desired contradiction.

It thus suffices to construct execution~$E$ 
satisfying~\eqref{eq:gap:execution}.
Assume that~\eqref{eq:gap:execution} holds for a given~$t\geq0$ and
let $z_1^{(t)}, z_2^{(t)}\in Y^*_{\N'}(C_t)$ with $z_1^{(t)} < z <
z_2^{(t)}$.
By Lemma~\ref{lem:shrink}, it follows that there are communication graphs
$G,H\in \N'$ with
$z_1^{(t)}\in Y^*_{\N'}(G.C)$ and $z_2^{(t)} \in Y^*_{\N'}(H.C)$.
By Lemma~\ref{lem:beta:class}, we have $G \alpha_{\N'}^* H$.
Thus there exists a chain $G  = H_0, H_1,\dots, H_q = H \in \N'$ and 
communication graphs $K_1,\dots,K_q\in\N'$ such that
$H_{r-1} \alpha_{\N',K_r} H_r$ for all $r\in [q]$.
From Lemma~\ref{lem:intersect:alpha} we thus know that 
\begin{equation}\label{eq:Hr:inter}
Y^*_{\N'}(H_{r-1}.C) \cap Y^*_{\N'}(H_r.C) \neq \emptyset
\end{equation}
for all $r\in [q]$.
Set $f(r) = \inf Y^*_{\N'}(H_r.C)$
and $g(r) = \sup Y^*_{\N'}(H_r.C)$
for $r\in \{0,\dots,q\}$,
and
\[
\hat{r} = \min\big\{ r\in\{0,\dots,q\} \mid g(r) > z\big\}
\enspace.
\]
Then $f(0) \leq z_1^{(t)} \leq g(0)$ and $f(q) \leq z_2^{(t)} \leq g(q)$.
The quantity~$\hat{r}$ is finite since $g(q)\geq z_2^{(t)} > z$.
We show $f(\hat{r}) < z$ 
by distinguishing two cases:
\begin{enumerate}
\item $\hat{r} = 0$: 
Then $f(\hat{r}) = f(0) \leq z_1^{(t)} < z$.
\item $\hat{r} \geq 1$:
Then, by~\eqref{eq:Hr:inter} and the definition of~$\hat{r}$, we have
$f(\hat{r})  \leq g(\hat{r}-1)  < z$.
\end{enumerate}
In both cases, we showed $f(\hat{r}) < z < g(\hat{r})$.
Choosing $G_{t+1} = H_{\hat{r}}$ and $C_{t+1} = G_{t+1}.C_t$, we hence
proved~\eqref{eq:gap:execution} for $t+1$.
This concludes the proof.
}

\begin{proof}
\proofthmcharacterize
\end{proof}

We next introduce the {\em $\alpha$-diameter\/} of a network model $\N$, which we will then
  (see Theorem~\ref{thm:diam} and Corollary~\ref{cor:diam}) show to be directly linked to a nontrivial lower
  bound on the contraction rate in $\N$ if exact consensus
  is not solvable in $\N$.
Note, that in case exact consensus is solvable in $\N$, the optimal contraction rate
  always is~$0$, obtained by a reduction argument to exact consensus.

\begin{definition}
Let~$\N$ be a network model.
The {\em $\alpha$-diameter\/} of~$\N$ is the smallest~$D\geq1$ such that for
all $G,H\in\N$ there 
exist communication graphs $H_0,\dots,H_q\in\N$ and 
$K_1,\dots,K_q\in\N$ with $q\leq D$ such that 
$G=H_0$, $H=H_q$, and
$H_{r-1}\alpha_{\N,K_r}H_r$ for all $r\in [q]$.
In case it does not exists we set $D = \infty$.
\end{definition}

Observe, that for the network model $\{H_0, H_1, H_2\}$ from Theorem~\ref{thm:2},
  it is $D=2$.
Further, for network model $\deaf(G)$, where $G$ is an arbitrary communication graph $G$,
  we have $D=1$.
The following theorem and corollary thus generalize Theorems~\ref{thm:2} and~\ref{thm:3} to
  arbitrary network models in which exact consensus is not solvable.
  
\begin{theorem}\label{thm:diam}
Let~$\N$ be a network model in which exact consensus is not
solvable.
The contraction rate of any asymptotic consensus algorithm
in~$\N$
is greater or equal to~$1/(D+1)$
where~$D$ is the $\alpha$-diameter of $\N$.
\end{theorem}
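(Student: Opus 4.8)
The plan is to adapt the inductive-construction strategy used in Theorems~\ref{thm:2} and~\ref{thm:3}, now driving the contraction with chains of length at most~$D$ rather than single graphs. I would prove the stronger statement that for every initial configuration~$C_0$ there is an execution $E=C_0,G_1,C_1,\dots$ with $\delta_{\N}(C_t)\ge \delta_{\N}(C_0)/(D+1)^{t}$ for all $t\ge 0$; applying this to an initial configuration with $\delta_{\N}(C_0)>0$, which exists by Lemma~\ref{lem:initial:noncons} since exact consensus is not solvable, then yields the claimed contraction rate of at least $1/(D+1)$. As before, by Lemma~\ref{lem:submodel} it suffices to work in a conveniently small submodel; here the natural choice is a source-incompatible $\beta_{\N}$-class $\N'$, provided by Theorem~\ref{thm:genCG}, on which $\alpha_{\N'}^*=\N'\times\N'$ holds by Lemma~\ref{lem:beta:class}, so every pair of graphs is connected by an $\alpha$-chain.

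The inductive step is where the $\alpha$-diameter enters. Assuming the bound holds at step~$t$, set $Y=Y^*_{\N'}(C_t)$. By Lemma~\ref{lem:finiteness} there exist $G,H\in\N'$ with $\diam(Y^*_{\N'}(G.C_t)\cup Y^*_{\N'}(H.C_t))=\diam(Y)$. Because $G\alpha_{\N'}^* H$, there is a chain $G=H_0,H_1,\dots,H_q=H$ with $q\le D$ and graphs $K_1,\dots,K_q$ such that $H_{r-1}\alpha_{\N',K_r}H_r$ for all $r\in[q]$, and Lemma~\ref{lem:intersect:alpha} gives consecutive intersections $Y^*_{\N'}(H_{r-1}.C_t)\cap Y^*_{\N'}(H_r.C_t)\neq\emptyset$. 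Writing $Y_r=Y^*_{\N'}(H_r.C_t)$, the goal is to show some $Y_{\hat r}$ has diameter at least $\diam(Y)/(D+1)$.

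The key inequality to extract is combinatorial. I would argue by contradiction: suppose $\diam(Y_r)<\diam(Y)/(D+1)$ for every $r\in\{0,\dots,q\}$. Since the $q+1\le D+1$ sets $Y_0,\dots,Y_q$ form an overlapping chain (each consecutive pair meets), their union has diameter bounded by $\diam(Y_0)+\cdots+\diam(Y_q)\le (q+1)\cdot\diam(Y)/(D+1)\le \diam(Y)$, but this must equal $\diam(Y)$ since $Y_0\cup Y_q\subseteq Y_0\cup\dots\cup Y_q\subseteq Y$ and $\diam(Y_0\cup Y_q)=\diam(Y)$ by the choice of $G,H$; with strict inequalities in each summand this forces $\diam(Y)<\diam(Y)$, a contradiction. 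Hence some $Y_{\hat r}$ satisfies $\diam(Y_{\hat r})\ge\diam(Y)/(D+1)$, and I set $C_{t+1}=H_{\hat r}.C_t$. The main obstacle is making the chained triangle-inequality argument fully rigorous: the bound $\diam(\bigcup_r Y_r)\le\sum_r\diam(Y_r)$ is only valid because consecutive sets overlap, so I must verify that overlapping (not arbitrary) unions telescope correctly, and that picking one graph $H_{\hat r}$ from the chain indeed lands in $\N'\subseteq\N$ so that $C_{t+1}$ is a legitimate successor. A subtlety worth checking is that the selected $H_{\hat r}.C_t$ lies on the realized execution prefix, which follows since each $H_r\in\N'\subseteq\N$ and the construction only commits to one successor per step.
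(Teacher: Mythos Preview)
Your overall strategy---inductive construction of an execution, choosing at each step a successor whose valency has diameter at least $\diam(Y)/(D+1)$ via a chained triangle-inequality contradiction---is exactly the paper's approach, and the combinatorial core is fine. However, your decision to restrict to a source-incompatible $\beta_{\N}$-class $\N'$ introduces a genuine gap.

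The issue is that $D$ in the theorem statement is the $\alpha$-diameter of~$\N$, not of~$\N'$. When you write ``there is a chain $G=H_0,\dots,H_q=H$ with $q\le D$ and $H_{r-1}\alpha_{\N',K_r}H_r$,'' you are conflating two different things. Lemma~\ref{lem:beta:class} does guarantee that any $G,H\in\N'$ are connected by an $\alpha_{\N'}$-chain inside~$\N'$, but it says nothing about the \emph{length} of that chain; there is no reason it should be bounded by~$D$. Conversely, the $\alpha$-diameter of~$\N$ does give you a chain of length $\le D$ between $G$ and $H$, but that chain lives in~$\N$: the intermediate $H_r$ and witnesses $K_r$ need not belong to~$\N'$, so you cannot take $C_{t+1}=H_{\hat r}.C_t$ as a step in~$\N'$, nor apply Lemma~\ref{lem:intersect:alpha} within~$\N'$. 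The reduction-to-submodel trick from Theorems~\ref{thm:2} and~\ref{thm:3} worked there because the target bound did not depend on any parameter of the ambient model; here the bound $1/(D+1)$ is tied to~$\N$, and passing to~$\N'$ would at best yield $1/(D'+1)$ for the $\alpha$-diameter $D'$ of~$\N'$, with no control on $D'$ versus~$D$.

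The fix is simply to drop the restriction to~$\N'$ and carry out the entire induction in~$\N$ itself. Lemma~\ref{lem:initial:noncons} already gives you an initial configuration with $\delta_{\N}(C_0)>0$; Lemma~\ref{lem:finiteness} and Lemma~\ref{lem:intersect:alpha} both apply directly in~$\N$; and when $D<\infty$ the definition of the $\alpha$-diameter hands you the chain of length $\le D$ in~$\N$ with $H_r,K_r\in\N$, so $H_{\hat r}.C_t$ is a legitimate successor. (The case $D=\infty$ is vacuous.) With that change, your argument matches the paper's proof.
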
\newcommand{\proofthmdiam}{
We show the stronger statement that for every initial configuration~$C_0$ there
is an execution $E=C_0,G_1,C_1,G_2,\dots$ starting at~$C_0$ such that
\begin{equation}\label{eq:thm:diam:rate}
\delta_{\N}(C_t) \ge \frac{1}{(D+1)^t}\delta_{\N}(C_0) 
\end{equation}
for all $t\geq 0$.
This, applied to an initial configuration with $\delta_{\N}(C_0)>0$, which
exists by
Lemma~\ref{lem:initial:noncons}, then shows the theorem. 

For the case $D=\infty$, the above statement follows trivially.
We hence suppose $D < \infty$.
The proof is by inductive construction of an execution $E = C_0, G_1,
C_1, G_2,\dots$ whose configurations~$C_t$
satisfy~\eqref{eq:thm:diam:rate}.

For $t=0$ the inequality trivially holds.

Now let $t$ be any nonnegative integer and assume that
Equation~\eqref{eq:thm:diam:rate} holds for $t$.
By Lemma~\ref{lem:finiteness}, 
there exist $G,H\in\N$ such that
$\diam\big( Y_{\N}^*(C_{t}) \big)
=
\diam\big(Y_{\N}^*(G.C_{t}) \cup Y_{\N}^*(H.C_{t}) \big)$.
Because the $\alpha$-diameter of~$\N$ is equal to~$D < \infty$, there
exist communication graphs $H_0,\dots,H_q\in\N$ and 
$K_1,\dots,K_q\in\N$ with $q\leq D$ such that 
$G=H_0$, $H=H_q$, and
$H_{r-1}\alpha_{\N,K_r}H_r$ for all $r\in [q]$.

Define
$Y = Y_{\N}^*(C_t)$ and
$Y_r = Y_{\N}^*(H_r.C_{t})$.
We have $\diam(Y) = \diam(Y_0 \cup Y_q)$ by choice of $G=H_0$ and $H=H_q$.
We show that there exists some $r\in\{0,\dots,q\}$ such that
$\diam(Y_r) \geq \diam(Y)/(q+1)$ and then 
set $G_{t+1} = H_r$
and $C_{t+1} = H_r.C_{t}$.
Then, by the induction hypothesis, we have 
\begin{equation}
\delta_{\N}(C_{t+1}) 
\geq \frac{\delta_{\N}(C_{t})}{q+1} 
\geq \frac{\delta_{\N}(C_{t})}{D+1} 
\geq \frac{1}{(D+1)^{t+1}}\delta_{\N}(C_{0})
\enspace,
\end{equation}
i.e., Equation~\eqref{eq:thm:diam:rate} holds for~$t+1$.

Assume by contradiction that for all 
$r\in \{0,\dots,q\}$ $\diam(Y_r) < \diam(Y)/(q+1)$.
By Lemma~\ref{lem:intersect:alpha}, we have $Y_{r-1} \cap Y_r\neq \emptyset$
for all $r\in [q]$.
Inductively, we prove
\begin{equation}
\diam \big( \bigcup_{s=0}^r Y_s \big) < \frac{r+1}{q+1} \cdot \diam(Y)
\end{equation}
for all $r\in\{0,\dots,q\}$.
In particular for $r=q$, which leads to $\diam(Y) \leq \diam(Y_0\cup Y_q) <
\diam(Y)$,
which is a contradiction and concludes the proof.
}
\begin{proof}
\proofthmdiam
\end{proof}

Direct application of Theorem~\ref{thm:diam} to a network model $\N$ in which
exact consensus is not solvable may yield a trivial bound
of $0$ in case its $\alpha$-diameter is $\infty$.
Indeed, we can, however, use Lemma~\ref{lem:submodel} to derive a strictly
positive bound for any $\N$ in which exact consensus is not solvable:

\begin{corollary}\label{cor:diam}
Let~$\N$ be a network model in which exact consensus is not solvable.
The contraction rate of any asymptotic consensus algorithm
  in~$\N$ is greater or equal to~$1/(D+1)$
  where~$D$ is the smallest $\alpha$-diameter of
  $\N' \subseteq \N$ in which exact consensus is not solvable.
\end{corollary}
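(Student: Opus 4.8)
The plan is to reduce the claim to Theorem~\ref{thm:diam} applied to a well-chosen subset of $\N$ and then lift the resulting bound back to $\N$ with Lemma~\ref{lem:submodel}. Let $\N'\subseteq\N$ be a network model in which exact consensus is not solvable and whose $\alpha$-diameter equals the minimal value~$D$ from the statement. Theorem~\ref{thm:diam}, applied directly to~$\N'$, shows that every asymptotic consensus algorithm has contraction rate at least~$1/(D+1)$ in~$\N'$. By Lemma~\ref{lem:submodel}(iv), the contraction rate in~$\N'$ is at most the contraction rate in~$\N$, so the lower bound~$1/(D+1)$ carries over to~$\N$. This is the entire argument, provided the minimum defining~$D$ is finite and attained.

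The main obstacle is precisely to rule out $D=\infty$, i.e.\ to exhibit at least one subset of~$\N$ that has finite $\alpha$-diameter and in which exact consensus is still not solvable; without this the family over which we minimize could consist only of subsets of infinite $\alpha$-diameter, degenerating the bound to~$0$. I would produce such a subset from the $\beta$-class machinery of Coulouma et al. Since exact consensus is not solvable in~$\N$, Theorem~\ref{thm:genCG} yields a source-incompatible $\beta_\N$-class~$\N_0$. Lemma~\ref{lem:beta:class} gives $\alpha_{\N_0}^*=\N_0\times\N_0$ and $\beta_{\N_0}=\N_0\times\N_0$, so~$\N_0$ forms a single $\beta_{\N_0}$-class, which is itself source-incompatible. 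Re-applying Theorem~\ref{thm:genCG}, now to the stand-alone network model~$\N_0$, shows that exact consensus is not solvable in~$\N_0$ either. I expect this reinterpretation of a $\beta_\N$-class as a network model on its own --- so that Theorem~\ref{thm:genCG} may be invoked a second time --- to be the delicate step.

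It then remains to observe that $\N_0$ has finite $\alpha$-diameter. Because there are only finitely many communication graphs on the fixed node set~$[n]$, the set~$\N_0$ is finite, and the identity $\alpha_{\N_0}^*=\N_0\times\N_0$ means that every ordered pair in $\N_0\times\N_0$ is joined by some finite $\alpha_{\N_0}$-chain; taking the longest such chain over the finitely many pairs bounds the $\alpha$-diameter of~$\N_0$. Hence the minimization defining~$D$ ranges over a nonempty family containing~$\N_0$, so~$D$ is finite and attained by some~$\N'$, and the reduction of the first paragraph delivers the bound $1/(D+1)$, concluding the argument.
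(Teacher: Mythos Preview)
Your proof is correct and its core reduction---apply Theorem~\ref{thm:diam} to a minimizing $\N'\subseteq\N$, then lift the bound to~$\N$ via Lemma~\ref{lem:submodel}(iv)---is exactly the paper's two-line argument. You additionally justify that~$D$ is finite and attained (by exhibiting a source-incompatible $\beta_\N$-class~$\N_0$, invoking Lemma~\ref{lem:beta:class} and Theorem~\ref{thm:genCG} to see exact consensus fails in~$\N_0$, and using finiteness of the graph set on~$[n]$ to bound its $\alpha$-diameter), which the paper's proof leaves implicit despite the surrounding prose promising a strictly positive bound; this extra care is a genuine improvement in rigor.
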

\begin{proof}  
Set $\N' \subseteq \N$ equal to the network model with the
  smallest $\alpha$-diameter in which exact consensus is not solvable.
Applying Theorem~\ref{thm:diam} to $\N'$,
  and Lemma~\ref{lem:submodel}~(iv) to $\N'$ and $\N$ yields the corollary.
\end{proof}

\section{Tight Bounds for Asynchronous Systems with Crashes: the Price of Rounds}

In this section we show that Corollary~\ref{cor:diam} provides a tool
  to clearly separate time complexities of algorithms that operate in rounds
  to general algorithms in the classical static fault model of
  asynchronous message passing systems with crashes.  
Our result applies to algorithms without any restriction:
  we do not make assumptions on the nature of the functions used by the agents, and
  agents are not required to be memoryless.

We start with recalling and adapting notation for the classical asynchronous message passing systems.
We consider a distributed system where agents perform receive-compute-broadcast steps.
An agent may crash, i.e., stop making steps.
Crashes can be unclean: the final broadcast message may be received by a proper subset of correct, i.e., non crashed, agents, only.
Since an agent that crashes stops to make steps, we require Convergence, Validity, and Agreement of asymptotic consensus
  to hold only for the set of correct agents.
Analogously, the consensus function $y^*$, and thus the valencies, are restricted to correct agents only.
Further, we apply the standard convention of measuring time in asynchronous systems, by normalizing
  to the longest end-to-end message delay from a broadcast to the respective receive in an execution.

\subsection{Round-based Algorithms}

An algorithm is said to operate in rounds if each agent waits for $n-f$ messages corresponding
  to the current round, updates its state based on the received messages and its previous state,
  and broadcasts the next round's messages.
Indeed algorithms that operate in rounds are widely used in asynchronous systems;
  see, e.g., \cite{Lyn96,DLPSW86,CS09}.

We next show that Corollary~\ref{cor:diam} can be applied to
  obtain new asymptotically tight bounds for round-based algorithms.
Specifically, we prove a lower bound for asynchronous systems of size $n \ge 3$ with up to $f < n/2$ crashes
  whose agents operate in rounds.

Let us construct the following network model:
Denote by~$\mathcal{G}_n$ the set of communication graphs with~$n$ nodes
and let
\begin{equation*}
\N_A
=
\big\{
G \in \mathcal{G}_n
\mid
\forall i\in[n]\colon
\lvert \In_i(G) \rvert \geq n-f
\big\}
\enspace,
\end{equation*}
for some $f < n/2$.

\begin{lemma}\label{lem:async:diam}
The $\alpha$-diameter of~$\N_A$ is at most~$\lceil n/f\rceil$.
\end{lemma}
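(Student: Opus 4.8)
The plan is to show directly that any two graphs $G,H\in\N_A$ can be joined by an $\alpha$-chain of length at most $q=\lceil n/f\rceil$. First I would fix a partition of $[n]$ into blocks $B_1,\dots,B_q$ of size at most $f$ each (for instance the consecutive blocks $\{1,\dots,f\},\{f+1,\dots,2f\},\dots$), and interpolate between $G$ and $H$ one block at a time. Concretely, define $H_r$ to be the graph whose in-neighborhoods agree with $H$ on the already-processed blocks and with $G$ on the rest: $\In_i(H_r)=\In_i(H)$ if $i\in B_1\cup\dots\cup B_r$, and $\In_i(H_r)=\In_i(G)$ otherwise. Then $H_0=G$ and $H_q=H$, each $H_r\in\N_A$ because every in-neighborhood is copied verbatim from $G$ or $H$ (so it still has size at least $n-f$ and still contains the self-loop), and consecutive graphs $H_{r-1}$ and $H_r$ differ only in the in-neighborhoods of the nodes in the single block $B_r$.

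The heart of the argument is to exhibit, for each $r$, a witness graph $K_r\in\N_A$ whose root set is disjoint from $B_r$; granting this, $\In_{\R(K_r)}(H_{r-1})=\bigcup_{j\notin B_r}\In_j(H_{r-1})=\bigcup_{j\notin B_r}\In_j(H_r)=\In_{\R(K_r)}(H_r)$, since $H_{r-1}$ and $H_r$ agree on every in-neighborhood indexed outside $B_r$, whence $H_{r-1}\,\alpha_{\N_A,K_r}\,H_r$. To build such a $K_r$ I would starve the nodes of $B_r$ of outgoing edges: let every node outside $B_r$ hear exactly $[n]\setminus B_r$, and let every node of $B_r$ hear all of $[n]$. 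Every node $i\notin B_r$ then reaches all nodes in one hop and is a root, whereas every node of $B_r$ sends only into $B_r$ and therefore cannot reach any node outside the block $B_r\neq[n]$; hence $\R(K_r)=[n]\setminus B_r$, as desired.

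It then remains to check that $K_r$ indeed lies in $\N_A$: nodes outside $B_r$ have in-degree $n-|B_r|\ge n-f$, nodes inside $B_r$ have in-degree $n$, and all self-loops are present, so the in-degree constraint holds. Chaining the $q=\lceil n/f\rceil$ steps $H_0\,\alpha_{\N_A,K_1}\,H_1\,\alpha_{\N_A,K_2}\cdots\alpha_{\N_A,K_q}\,H_q$ then witnesses that $G$ and $H$ are joined by an $\alpha$-chain of length $\lceil n/f\rceil$, which is exactly the claimed bound on the $\alpha$-diameter.

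The only delicate point, and the step I would treat most carefully, is the root-set computation for $K_r$: because $\N_A$ consists of dense graphs (in-degree exceeding $n/2$), one cannot turn a node into a non-root by deleting its incoming edges, and one must instead control its outgoing reach. The block-starvation construction sidesteps this by cutting all edges from $B_r$ into $[n]\setminus B_r$ simultaneously, which is feasible precisely because $|B_r|\le f<n$ leaves the complementary nodes' in-degrees at least $n-f$. Verifying that this single modification keeps $K_r\in\N_A$ while forcing every node of $B_r$ to be a non-root is exactly where the constraint $|B_r|\le f$ enters, and it is what pins the chain length to $\lceil n/f\rceil$.
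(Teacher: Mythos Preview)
Your proof is correct and follows essentially the same approach as the paper: partition $[n]$ into $\lceil n/f\rceil$ blocks of size at most~$f$, interpolate the in-neighborhoods of $G$ toward those of $H$ one block at a time, and witness each step with a graph $K_r$ whose root set is exactly the complement of the current block. Your choice of $K_r$ (letting nodes in $B_r$ hear all of $[n]$ rather than $[n]\setminus B_r$) is in fact slightly more careful than the paper's, since it explicitly preserves the self-loops that the model requires.
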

\begin{proof}
Let $G,H\in\N_A$.
Setting $q=\lceil n/f\rceil$, we choose the communication graphs~$H_r$
and~$K_r$ defined by
\begin{equation*}
\In_i(H_r) = 
\begin{cases}
\In_i(G) & \text{if } 1\leq i \leq rf\\
\In_i(H) & \text{if } rf+1 \leq i \leq n
\end{cases}
\end{equation*}
and
\begin{equation*}
\In_i(K_r) = [n] \setminus \{i \mid (r-1)f+1 \leq i\leq rf  \}
\end{equation*}

Clearly, it is $H_0=G$ and $H_q=H$.
Since we can write $R(K_r) = [n] \setminus \{i \mid (r-1)f+1 \leq i\leq rf  \}$
and $\In_i(H_{r-1}) = \In_i(H_r)$ for all $i\in K_r$, 
we also have
$H_{r-1} \alpha_{\N_A,K_r} H_r$.
Noting $H_r\in \N_A$ and $K_r\in \N_A$,
this concludes the proof.
\end{proof}

From Lemma~\ref{lem:async:diam} and Corollary~\ref{cor:diam} we immediately obtain
  the lower bound:
  
\begin{theorem}\label{thm:crash}
  The contraction rate for any asymptotic consensus algorithm for $n \ge 3$ agents and at most
  $f < n/2$ crashes that operates in rounds is greater or equal to
  $\frac{1}{\lceil n/f \rceil + 1}$.
\end{theorem}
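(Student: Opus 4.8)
The plan is to read the bound off the structural machinery already in place, once the correspondence between the asynchronous round-based model and the dynamic network model $\N_A$ is made explicit. First I would argue that every algorithm operating in rounds, run against a worst-case asynchronous scheduler, induces an asymptotic consensus algorithm in $\N_A$: in each round an agent commits to its new state as soon as it has collected $n-f$ round-$t$ messages, so the set of senders it uses has size at least $n-f$ and always contains the agent itself (its own state), whence the round-$t$ communication graph lies in $\N_A$. Conversely, since correctness is demanded even under $f$ crashes, the scheduler may delay the $f$ ``missing'' messages arbitrarily long without crashing anyone, so \emph{every} communication pattern over $\N_A$ is realized by a legal (fault-free, merely asynchronous) execution. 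Under the convention that normalizes time to the longest end-to-end message delay, one such round costs one time unit, so the per-round contraction rate of the induced algorithm in $\N_A$ equals the time-normalized contraction rate of the original algorithm, and a lower bound on the former transfers verbatim.

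Next I would verify the hypothesis of Corollary~\ref{cor:diam}, namely that exact consensus is not solvable in $\N_A$. The key observation is that $\N_A$ is source-incompatible: for each agent $k$ there is a graph $G^{(k)}\in\N_A$ in which $k$ has no outgoing edges except its self-loop --- this keeps every other agent's in-degree at $n-1\ge n-f$, using $f\ge 1$ --- so $k\notin\R(G^{(k)})$ and hence $\bigcap_{G\in\N_A}\R(G)=\emptyset$. Because Lemma~\ref{lem:async:diam} gives $\N_A$ a \emph{finite} $\alpha$-diameter, we have $\alpha_{\N_A}^{*}=\N_A\times\N_A$, and the chains witnessing this finiteness show that the full relation $\N_A\times\N_A$ already satisfies the Closure Property; being the coarsest equivalence relation included in $\alpha_{\N_A}^{*}$, it equals $\beta_{\N_A}$. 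Thus $\N_A$ forms a single, source-incompatible $\beta_{\N_A}$-class, and Theorem~\ref{thm:genCG} yields unsolvability of exact consensus.

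With these two facts the conclusion is immediate. Since exact consensus is unsolvable in $\N_A$ and, by Lemma~\ref{lem:async:diam}, its $\alpha$-diameter is at most $\lceil n/f\rceil$, the quantity $D$ in Corollary~\ref{cor:diam} --- the smallest $\alpha$-diameter of a submodel of $\N_A$ in which exact consensus is unsolvable --- satisfies $D\le\lceil n/f\rceil$. Corollary~\ref{cor:diam} then bounds the contraction rate in $\N_A$ from below by $1/(D+1)\ge 1/(\lceil n/f\rceil+1)$, which by the correspondence of the first paragraph is a lower bound on the time-normalized contraction rate of any round-based algorithm, as claimed.

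I expect the main obstacle to be the first paragraph: pinning down the reduction so that the adversarial asynchronous schedule realizes precisely the patterns of $\N_A$ rather than a proper submodel, and checking that the time normalization makes the per-round and time-normalized contraction rates coincide. If only a submodel $\N_A'\subseteq\N_A$ of graphs with in-degree exactly $n-f$ turns out to be realizable, I would instead apply the construction of Lemma~\ref{lem:async:diam} and the source-incompatibility argument directly to $\N_A'$, observing that the graphs $H_r,K_r$ built there retain the required in-degrees; then $\alpha\text{-diam}(\N_A')\le\lceil n/f\rceil$, exact consensus remains unsolvable in $\N_A'$, and the same computation goes through.
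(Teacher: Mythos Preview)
Your proposal is correct and follows essentially the same route as the paper: the paper's proof is the one-liner ``from Lemma~\ref{lem:async:diam} and Corollary~\ref{cor:diam} we immediately obtain the lower bound,'' and you have simply filled in the two steps the paper leaves implicit, namely the correspondence between round-based asynchronous executions and communication patterns over~$\N_A$, and the verification (via source-incompatibility plus finiteness of the $\alpha$-diameter forcing a single $\beta_{\N_A}$-class) that exact consensus is unsolvable in~$\N_A$. Your fallback for the case that only in-degree exactly $n-f$ is realizable is a reasonable precaution, though the paper works directly with~$\N_A$ as defined.
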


Note that the contraction rate in Theorem~\ref{thm:crash} is with respect to rounds.
However, we can easily construct an execution where a single round requires $1+\varepsilon$
time for arbitrarily small $\varepsilon > 0$: we assign all messages that are delivered
according to the communication graph of the respective round, delay $1$, and all others
  delay $1+\varepsilon$.
Theorem~\ref{thm:crash} thus also holds for a contraction rate with respect to time.

\subsection{General Algorithms}

We next show that there is an algorithm that does not operate in rounds that
  ensures that all agents' outputs are equal by time $f+1$.
This gives a contraction rate of $0$.

The following algorithm {\em MinRelay\/} is inspired by the exact consensus algorithm for
  synchronous systems with crash faults (see, e.g,.~\cite{Lyn96}), and is based on a non-terminating reliable broadcast protocol:
Initially, at time $0$, each agent $i$ sets $S^i$ to the set containing only its initial value, and broadcasts $S^i$.
Whenever an agent~$i$ receives a set $S \neq S^i$, it sets $S^i \gets S^i \cup S$, 
  $y^i \gets \min(S^i)$, and broadcasts $S^i$.

\begin{theorem}\label{thm:async_algo}
The MinRelay algorithm solves asymptotic consensus in asynchronous message passing systems
  with up to $f < n$ crashes.
Specifically, all correct agents' sets $S^i$, and thus $y^i$, are equal by time $f+1$, and the algorithm's
  contraction rate is $0$.
\end{theorem}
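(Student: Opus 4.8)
The plan is to prove the two claimed properties—termination by time $f+1$ and the resulting zero contraction rate—by analyzing the propagation of the sets $S^i$ in the reliable broadcast. First I would establish the crucial monotonicity and stabilization structure of the algorithm: each $S^i$ is nondecreasing over time, and whenever a correct agent first adds a new value to its set it immediately rebroadcasts, so every value that is ever known to \emph{some} correct agent is eventually known to \emph{all} correct agents. The heart of the argument is to turn this ``eventually'' into the explicit deadline $f+1$, using the normalized-time convention (longest end-to-end delay equals $1$).

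\begin{proof}
We first show that all correct agents' sets $S^i$ coincide by time $f+1$.
Since each agent rebroadcasts $S^i$ every time it strictly grows, and since correct agents' broadcasts are received by all correct agents within one time unit, we track the ``frontier'' of value propagation.
Consider the set $U(t)$ of values $v$ for which, by time $t$, at least one correct agent~$i$ has $v\in S^i$.
By Validity of broadcast, $U(0)$ is exactly the set of distinct initial values.
We claim that if $U(t)$ is not yet shared by all correct agents, then some new sharing event must occur within one time unit: if a correct agent~$i$ holds a value $v$ that some correct agent~$j$ does not, then~$i$ has broadcast a set containing~$v$, and~$j$ receives it within time~$1$, whereupon $v\in S^j$.
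Define a potential equal to the number of correct agents not yet holding the full set of values known to correct agents.
Each unit of time either this potential strictly decreases or the configuration is already stabilized.
The key counting step is that the propagation can be ``delayed'' at most once per crash: an adversary can hide a value from the others only by having its holder crash before its final message reaches everyone, and crashes can happen at most $f$ times.
More precisely, a value~$v$ that is held by a correct agent is shared with all correct agents within one time unit, while a value held only by agents that subsequently crash contributes at most one additional time unit of delay per crash, bounding the stabilization time by $f+1$.
Once all correct agents hold identical sets $S^i = S$, each sets $y^i \gets \min(S)$, so all outputs are equal.

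It remains to verify the three asymptotic consensus conditions and deduce the contraction rate.
Convergence and Agreement are immediate: after time $f+1$ every correct agent's value equals $\min(S)$ and never changes again, so each sequence $\big(y^i(t)\big)_t$ is eventually constant and all limits coincide.
Validity holds because $S^i$ only ever contains initial values, so $\min(S)$ lies in the convex hull of the initial values.
Finally, since all correct outputs are equal from time $f+1$ onward, the valency of any reachable configuration at or after that time is a singleton, giving $\delta(C_t)=0$ for all $t\ge f+1$.
Hence $\limsup_{t\to\infty}\sqrt[t]{\delta(C_t)}=0$, so the contraction rate is~$0$.
\end{proof}

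I expect the main obstacle to be the precise justification of the $f+1$ deadline.
The monotonicity and eventual-agreement parts are routine, but converting the reliable-broadcast liveness into the exact bound requires carefully formalizing how each crash can delay the universal sharing of a value by at most one normalized time unit; the cleanest route is an induction on the number of crashes, showing that after the $k$-th crash either all correct agents already agree or a fresh value becomes universally known within the next time unit, which caps the process at $f+1$.
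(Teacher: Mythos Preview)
Your overall structure---prove stabilization of the sets $S^i$ by time $f+1$, then read off Convergence, Agreement, Validity, and contraction rate~$0$---matches the paper. The second half of your proof (deducing the asymptotic-consensus properties once the sets coincide) is fine. The gap is in your argument for the $f+1$ bound itself.

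First, your claim that ``$U(0)$ is exactly the set of distinct initial values'' is false under your own definition of $U(t)$: a value initially held only by an agent that will crash is \emph{not} in $U(0)$, since no correct agent holds it at time~$0$. More importantly, this means $U(t)$ is a \emph{moving target}: a value can pass through a chain of crashing relays and only enter some correct agent's set at time~$f$, so $U$ can strictly grow as late as time~$f$. Your potential (number of correct agents missing some value in~$U$) therefore does not decrease monotonically---it can jump back up whenever $U$ grows---so the ``each time unit strictly decreases the potential'' step fails. Your proposed induction on crashes (``after the $k$-th crash \dots a fresh value becomes universally known within the next time unit'') does not repair this, because crashes are not spaced one per time unit and nothing forces a new value to become universal after each crash.

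The paper sidesteps the moving-target problem with a direct causal-chain argument. Suppose some $x\in S^i\setminus S^j$ for correct $i,j$ after time $f+1$. If $x$ entered $S^i$ by time~$f$, then $i$ (correct) broadcast a set containing $x$ at that moment and $j$ receives it by $f+1$. Otherwise $x$ entered $S^i$ after time~$f$; trace the causal chain of relays back to the originating broadcast at time~$0$. With maximum delay~$1$, this chain contains at least $f+1$ broadcasts in $[0,f]$, so at least one of them was by an agent that did \emph{not} crash during that broadcast; that broadcast therefore reached~$j$ within one time unit, putting $x\in S^j$ by time $f+1$. This pigeonhole on the chain length is the missing idea in your sketch; it replaces your potential argument cleanly and gives the exact bound.
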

\newcommand{\proofthmasyncalgo}{
We first show equality of sets $S^i$ by time $f+1$.
Assume by means of contradiction that there exist two correct agents $i,j$ with
$S^i \neq S^j$ after time $f+1$.
Then there exists an $x$ in $S^i$ that is not in $S^j$.
We distinguish between two cases:

\medskip

\noindent Case i: $x$ was added to $S^i$ at latest by time $f$.
  By the algorithm and the maximum message delay of $1$, $x$ is added to $S^j$ by time $f+1$;
  a contradiction.

\medskip
  
\noindent Case ii: Otherwise, $x$ was added to $S^i$ after time $f$.
Consider the causal chain of messages that lead to adding $x$ at agent $i$.
By the algorithm, its origin must be a message broadcast at time $0$.
Together with the maximum message delay of $1$, the chain must contain at least $f+1$ broadcasts
  during the time $[0,f]$.
At most $f$ of these broadcasts may be ones where an agent crashed and stopped making steps.
Thus there is at least one broadcast among them that happened at an agent that did not crash during the
  broadcast.
By the maximum message delay $1$, node $j$ received this message by time $f+1$, adding $x$ to $S^j$;
  a contradiction to the assumption.
The claim follows.
  
\medskip

Convergence, Agreement, and Validity follow from equality of all correct agents' $S^i$ after time $f+1$,
  the fact all elements in $S^i$ are initial values, and the properties of the function $\min$.
}
\begin{proof}
\proofthmasyncalgo
\end{proof}

\section{Approximate Consensus}\label{sec:approx}

Alternatively to asymptotic consensus, one may also consider the  {\em approximate consensus\/} problem, in which
	convergence is replaced by a decision in a finite number of rounds and where
	agreement should be achieved with an arbitrarily small error tolerance (see, e.g., \cite{Lyn96}). 
Formally, the local state of~$i$ is augmented with a  variable~$d^i$ initialized to~$\bot$.
Agent~$i$ is allowed to set~$d^i$ to some value $v\neq \bot$ only once, in which case
we say that~$i$ {\em decides}~$v$.
In addition to the initial values~$y^i(0)$, agents initially receive the error
tolerance~$\varepsilon$ and an upper bound~$\Delta$ on the maximum distance of
initial values.
An algorithm {\em solves approximate consensus\/} in $\N$ if for all
$\varepsilon >0$ and all~$\Delta$, 
each execution $E$ with a communication pattern in $\N$ with initial diameter at most~$\Delta$  satisfies:
	\begin{itemize}
	\item{\em Termination.\/} Each agent eventually decides.
	
        \item{\em $\varepsilon$-Agreement.\/} If agents $i$ and $j$ decide $v$
            and $v'$, then we have $\lVert v - v' \lVert \leq \varepsilon$.
	
	\item{\em Validity.\/} If agent $i$ decides $v$, then $v$ is in the convex hull of 
	initial values $y_{\E}^1(0), \dots, y_{\E}^n(0)$.
	
	\end{itemize}

Asymptotic consensus and approximate consensus are clearly closely related.
However, the $\varepsilon$-Agreement condition does not preclude the decisions of a given agent,
	as a function of the error tolerance  parameter $\varepsilon$, to diverge, i.e., 
	a priori may lead to unstable decisions with respect to this parameter.

We next extend our lower bounds on the contraction rate of
  asymptotic consensus to lower bounds on the decision time of approximate
  consensus.
In particular, we show optimality of the decision times of the algorithms presented by
  Charron-Bost et al.~\cite{CBFN16}:
For $n=2$, running Algorithm~1 and deciding $y^i$ after $\lceil \log_3 \frac{\Delta}{\varepsilon}\rceil$
  rounds is optimal (Theorem~\ref{thm:2:approx}).
For $n\ge 3$ and the network model of all non-split graphs, running the midpoint
  algorithm and deciding after
  $\lceil \log_2 \frac{\Delta}{\varepsilon}\rceil$ rounds is optimal (Theorem~\ref{thm:3:approx}).
For $n\ge 4$ and the weakest network model of all rooted graphs, running the amortized midpoint
  algorithm and deciding after
  $(n-1)\lceil \log_2 \frac{\Delta}{\varepsilon}\rceil$ rounds is optimal within a multiplicative term of
  at most $\frac{n-1}{n-2}$ (Theorem~\ref{thm:4:approx}).

We start with the case of two agents in Theorem~\ref{thm:2:approx}.
The proof is by reducing asymptotic consensus to approximate consensus,
  arriving at a contradiction with Theorem~\ref{thm:2} for too fast approximate
  consensus algorithms.

\begin{theorem}\label{thm:2:approx}
Let $\Delta>0$ and $\varepsilon>0$.
In a network model of $n=2$ agents that includes the three communication
graphs~$H_0$, $H_1$,
and~$H_2$, all approximate consensus algorithms have an execution with
initial diameter $\Delta(y(0))\leq \Delta$ and decision
time greater or equal to $\log_3 \frac{\Delta}{\varepsilon}$.
\end{theorem}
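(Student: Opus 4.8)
The plan is to prove the contrapositive via a reduction from approximate consensus to asymptotic consensus, leveraging the lower bound on contraction rate already established in Theorem~\ref{thm:2}. Specifically, suppose toward a contradiction that there is an approximate consensus algorithm~$\A$ in the network model $\N \supseteq \{H_0,H_1,H_2\}$ that, on \emph{every} execution with initial diameter at most~$\Delta$, decides strictly before round $\log_3\frac{\Delta}{\varepsilon}$. I would then build from~$\A$ an asymptotic consensus algorithm whose contraction rate is strictly smaller than~$1/3$, contradicting Theorem~\ref{thm:2}.

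First I would recall from Theorem~\ref{thm:2} (and its proof) the key quantitative fact: for every initial configuration~$C_0$ there is an execution $E = C_0, G_1, C_1, \dots$ with $\delta_\N(C_t) \ge 3^{-t}\,\delta_\N(C_0)$, and moreover by Lemma~\ref{lem:initial} there is an initial configuration with $\delta_\N(C_0) = \Delta(y(0)) > 0$. The strategy is to instantiate this at initial diameter~$\Delta$, so $\delta_\N(C_0) = \Delta$. If the approximate consensus algorithm decides by some round $t^\star$ on all executions from~$C_0$, then along the particular hard execution~$E$ furnished by Theorem~\ref{thm:2}, the decided values of the two agents must differ by at most~$\varepsilon$ (by $\varepsilon$-Agreement). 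The next step is to connect the decision values to the valency diameter $\delta_\N(C_{t^\star})$: since decisions are irrevocable and lie in the convex hull of initial values by Validity, an approximate consensus algorithm that has decided by round~$t^\star$ effectively fixes the limits, so the spread of reachable decisions (hence of reachable limits) from $C_{t^\star}$ is bounded by~$\varepsilon$, giving $\delta_\N(C_{t^\star}) \le \varepsilon$.

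Combining these two bounds yields $3^{-t^\star}\,\Delta \le \delta_\N(C_{t^\star}) \le \varepsilon$, whence $t^\star \ge \log_3 \frac{\Delta}{\varepsilon}$. This is exactly the claimed lower bound on the decision time of the execution~$E$, which has initial diameter $\Delta(y(0)) \le \Delta$ as required. I expect the main obstacle to be making precise the passage from the approximate consensus algorithm to a statement about valencies: strictly speaking, an approximate consensus algorithm need not be an asymptotic consensus algorithm (its decision values as a function of $\varepsilon$ may diverge, as the text warns), so I cannot directly invoke $\delta_\N$ for~$\A$ itself. The clean way around this is to argue directly at the level of executions rather than through a formal reduction: fix the hard execution~$E$ from Theorem~\ref{thm:2} and a deaf-graph continuation witnessing, via Lemma~\ref{lem:intersect}-style reasoning, that from $C_{t^\star}$ the two agents can be driven to two reachable limits at distance at least $\delta_\N(C_{t^\star}) \ge 3^{-t^\star}\Delta$; indistinguishability then forces the decisions along these continuations to inherit this spread, which must be at most~$\varepsilon$.

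Concretely, I would finish by observing that the two agents, having decided by round~$t^\star$, fix their outputs; feeding $C_{t^\star}$ into the pair of continuations that realize the valency diameter produces two executions in which the agents are indistinguishable up through their decision rounds yet whose reachable limits differ by at least $3^{-t^\star}\Delta$. Since decided values are committed and must satisfy $\varepsilon$-Agreement, we get $3^{-t^\star}\Delta \le \varepsilon$, i.e.\ a decision time of at least $\log_3\frac{\Delta}{\varepsilon}$ on~$E$. This mirrors the structure the authors presumably use for the analogous Theorems~\ref{thm:3:approx} and~\ref{thm:4:approx}, where the base of the logarithm tracks the corresponding contraction rate.
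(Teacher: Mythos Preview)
Your high-level plan---reduce to the quantitative statement inside the proof of Theorem~\ref{thm:2} and derive a contradiction---is exactly the paper's strategy, and you even correctly identify the one real obstacle: valencies $Y^*_\N(\cdot)$ and $\delta_\N(\cdot)$ are defined only for \emph{asymptotic} consensus algorithms, so you cannot apply \eqref{eq:thm:2:goal} or Lemma~\ref{lem:initial} to the approximate consensus algorithm~$\A$ directly. But your proposed fix does not actually close this gap. You say you will ``argue directly at the level of executions rather than through a formal reduction,'' yet your final paragraph still speaks of ``the hard execution~$E$ from Theorem~\ref{thm:2},'' the quantity $\delta_\N(C_{t^\star})$, ``the pair of continuations that realize the valency diameter,'' and ``reachable limits''---all of which are only meaningful once an asymptotic consensus algorithm has been fixed. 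The construction of the hard execution in Theorem~\ref{thm:2} chooses each~$G_{t+1}$ by comparing the diameters of the valencies $Y^*_\N(H_k.C_t)$; without an asymptotic consensus algorithm in hand, those sets do not exist, so there is no execution~$E$ to fix.

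The paper resolves this with a single clean construction that you gesture at in your opening sentence but never carry out: it defines a hybrid algorithm~$\tilde\A$ that runs~$\A$ for~$T$ rounds, sets $y^i(T)=d^i$, and from round~$T+1$ on runs Algorithm~\ref{algo:2:procs}. This~$\tilde\A$ \emph{is} an asymptotic consensus algorithm, so \eqref{eq:thm:2:goal} and Lemma~\ref{lem:initial} apply to it verbatim, giving $\delta_\N(C_T)\ge 3^{-T}\delta_\N(C_0)=3^{-T}\Delta$. The upper bound $\delta_\N(C_T)\le\varepsilon$ then follows because Algorithm~\ref{algo:2:procs} is a convex combination algorithm (so its limits lie in the convex hull of $y^1(T),\dots,y^n(T)$) and $\varepsilon$-Agreement of~$\A$ bounds $\Delta(y(T))\le\varepsilon$. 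Your alternative---redoing the valency machinery for decision values of~$\A$ itself---could in principle be made to work, but it requires reproving analogs of Lemmas~\ref{lem:shrink}--\ref{lem:initial} for a new ``decision valency,'' which you do not do; the hybrid construction avoids all of that.
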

\newcommand{\proofthmtwoapprox}{
Assume to the contrary that algorithm~$\A$ solves approximate consensus in some
  network model~$\N \supseteq \{H_0,H_1,H_2\}$ that
  decides in $T <  \log_3 \frac{\Delta}{\varepsilon}$ rounds
  for all vectors of initial values~$y(0)$ with $\Delta(y(0))\leq \Delta$
  and some $\varepsilon > 0$.

Choose any~$y(0)$ with $\Delta(y(0))=\Delta$.
Define algorithm~$\tilde{\A}$ by running algorithm~$\A$, updating~$y$ to the
  agents' decision values in round~$T$, and then running
  Algorithm~\ref{algo:2:procs} with the initial values $y^i(T)=d^i$
  from round $T+1$ on.
Because Algorithm~\ref{algo:2:procs} is an asymptotic consensus algorithm and
  the decision values~$y(T)$ of~$\A$ satisfy the Validity condition of approximate
  consensus, algorithm~$\tilde{\A}$ is an asymptotic consensus algorithm.

Let~$C_0$ be an initial configuration of~$\tilde{\A}$ with initial
  values~$y(0)$.
By the proof of Theorem~\ref{thm:2}, namely~\eqref{eq:thm:2:goal}, there is an
  execution $E=C_0,G_1,C_1,G_2,\dots$ starting from~$C_0$ such that 
\begin{equation*}
\delta_{\N}(C_T) \geq \frac{1}{3^T}\cdot \delta_{\N}(C_0)
\enspace.
\end{equation*}
We have $\delta_{\N}(C_0) = \Delta(y(0)) = \Delta$ by Lemma~\ref{lem:initial}
  and $\delta_{\N}(C_T) \leq \Delta(y(T)) \leq \varepsilon$
  by Validity of Algorithm~\ref{algo:2:procs} and
  $\varepsilon$-Agreement of algorithm~$\A$.
But this means $T\geq \log_3 \frac{\Delta}{\varepsilon}$, a contradiction.
}
\begin{proof}
\proofthmtwoapprox
\end{proof}

With a similar proof, 
  we also get the lower bound for approximate
  consensus with $n\geq 3$ agents:

\begin{theorem}\label{thm:3:approx}
Let $\Delta>0$ and $\varepsilon>0$.
In a network model of $n\geq3$ agents that includes the communication
graphs~$\deaf(G)$,
all approximate consensus algorithms have an execution with
initial diameter $\Delta(y(0))\leq \Delta$ and decision
time greater or equal to $\log_2 \frac{\Delta}{\varepsilon}$.
\end{theorem}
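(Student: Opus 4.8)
The plan is to mirror the reduction used for Theorem~\ref{thm:2:approx}, replacing the role of Theorem~\ref{thm:2} by Theorem~\ref{thm:3} and the base~$3$ by base~$2$. Suppose, for contradiction, that some algorithm~$\A$ solves approximate consensus in a network model $\N \supseteq \deaf(G)$ and decides in $T < \log_2\frac{\Delta}{\varepsilon}$ rounds for all initial vectors $y(0)$ with $\Delta(y(0)) \leq \Delta$. If no approximate consensus algorithm exists the statement is vacuous, so we may assume one does; then $\N$ is rooted, and by Theorem~\ref{thm:CBFN15} there is a convex combination algorithm~$\B$ that solves asymptotic consensus in~$\N$ (in the non-split special case one may simply take the midpoint Algorithm~\ref{algo:mid}).

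Next I would define a hybrid algorithm~$\tilde{\A}$ that runs~$\A$ for the first~$T$ rounds, then overwrites each agent's value~$y^i$ by its decision value~$d^i$, and from round~$T+1$ on runs~$\B$ with these values as initial values. Because~$\B$ is an asymptotic consensus algorithm and the decision values~$y(T)$ of~$\A$ satisfy the Validity condition of approximate consensus (so they lie in the convex hull of the initial values), $\tilde{\A}$ is an asymptotic consensus algorithm in~$\N$.

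Applying the inductive construction from the proof of Theorem~\ref{thm:3}, namely~\eqref{eq:thm:7:rate}, to~$\tilde{\A}$ and to an initial configuration~$C_0$ carrying any~$y(0)$ with $\Delta(y(0)) = \Delta$ yields an execution with $\delta_{\N}(C_T) \geq \frac{1}{2^T}\,\delta_{\N}(C_0)$. By Lemma~\ref{lem:initial} we have $\delta_{\N}(C_0) = \Delta(y(0)) = \Delta$, whose hypothesis holds because agent~$i$ is deaf in $F_i \in \deaf(G) \subseteq \N$. On the other hand, from round~$T+1$ the tail algorithm~$\B$ is a convex combination algorithm, so every limit reachable from~$C_T$ lies in the convex hull of $\{y^1(T),\dots,y^n(T)\}$; combined with the $\varepsilon$-Agreement of~$\A$ this gives $\delta_{\N}(C_T) \leq \Delta(y(T)) \leq \varepsilon$. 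Chaining the two bounds produces $\varepsilon \geq \Delta/2^T$, i.e.\ $T \geq \log_2\frac{\Delta}{\varepsilon}$, contradicting the choice of~$T$ and proving the theorem.

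The only genuine departure from the two-agent case is the choice of the tail algorithm~$\B$: unlike Algorithm~\ref{algo:2:procs}, which is tailor-made for two agents, here I must supply a convex combination asymptotic consensus algorithm valid in~$\N$. The two properties actually used are that~$\B$ solves asymptotic consensus in~$\N$ (so~$\tilde{\A}$ is a well-defined asymptotic consensus algorithm to which Theorem~\ref{thm:3} applies) and that it is a convex combination algorithm (so the valency bound $\delta_{\N}(C_T) \leq \Delta(y(T))$ holds). Both follow from rootedness of~$\N$, itself a consequence of solvability of approximate consensus, so the existence of~$\B$ is a point requiring care rather than a true obstacle.
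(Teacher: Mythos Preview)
Your proof is correct and follows essentially the same reduction the paper intends by ``with a similar proof'': run~$\A$ for~$T$ rounds, switch to a convex combination asymptotic consensus algorithm as tail, apply~\eqref{eq:thm:7:rate} together with Lemma~\ref{lem:initial}, and bound~$\delta_{\N}(C_T)$ by~$\varepsilon$ via Validity of the tail and $\varepsilon$-Agreement of~$\A$. Your added care in justifying the existence of the tail algorithm~$\B$ (via rootedness of~$\N$, which follows from solvability of approximate consensus) is appropriate, since the midpoint algorithm only suffices when~$\N$ is non-split, whereas the theorem is stated for arbitrary~$\N\supseteq\deaf(G)$.
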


Analogously, for network models with rooted $\Psi$ graphs, 
using~\eqref{eq:thm:8:rate},
we obtain:
\begin{theorem}\label{thm:4:approx}
Let $\Delta>0$ and $\varepsilon>0$.
In a network model of $n\geq 4$ agents that includes the $\Psi$ communication
graphs,
all approximate consensus algorithms have an execution with
initial diameter $\Delta(y(0))\leq \Delta$ and decision
time greater or equal to $(n-2)\log_2 \frac{\Delta}{\varepsilon}$.
\end{theorem}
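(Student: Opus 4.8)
The plan is to reuse the reduction template of Theorems~\ref{thm:2:approx} and~\ref{thm:3:approx}, now built on the lower-bound construction behind Theorem~\ref{thm:4}. Suppose for contradiction that an algorithm~$\A$ solves approximate consensus in a network model~$\N$ containing the $\Psi$ graphs and decides, for every input with $\Delta(y(0))\le\Delta$, strictly before round $(n-2)\log_2\frac{\Delta}{\varepsilon}$. Fix an input $y(0)$ with $\Delta(y(0))=\Delta$ and let~$T$ be the round by which all agents have decided. I would define an asymptotic consensus algorithm~$\tilde{\A}$ that runs~$\A$ up to round~$T$, then overwrites each $y^i$ by its decision value $d^i$, and from round $T+1$ on runs the amortized midpoint algorithm of~\cite{CBFN16} (any asymptotic consensus algorithm for rooted graphs would do). Since that algorithm solves asymptotic consensus on all rooted graphs and the decision values obey the Validity condition of approximate consensus, $\tilde{\A}$ is a genuine asymptotic consensus algorithm in~$\N$.

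Next I would restrict the adversary to the property $\P_\mathrm{seq}$ of concatenations of $\sigma_i$-blocks, all of which are communication patterns of~$\N$, and apply the construction proving Theorem~\ref{thm:4} to~$\tilde{\A}$. By Lemma~\ref{lem:safety:initial} there is an initial snapshot~$S_0$ with $\delta_{\P_\mathrm{seq}}(S_0)=\Delta$, and inequality~\eqref{eq:thm:8:rate} supplies a trace $S_0,S_1,\dots$ with
\begin{equation*}
\delta_{\P_\mathrm{seq}}(S_T)\ \ge\ \frac{1}{2^{\lceil T/(n-2)\rceil}}\,\delta_{\P_\mathrm{seq}}(S_0)\ =\ \frac{\Delta}{2^{\lceil T/(n-2)\rceil}}\,.
\end{equation*}
On the other hand, along this trace every agent of~$\tilde{\A}$ holds its $\A$-decision value $y^i(T)=d^i$ at round~$T$, and from then on~$\tilde{\A}$ only forms convex combinations, so $Y^*_{\P_\mathrm{seq}}(S_T)$ is contained in the convex hull of $y(T)$; hence $\delta_{\P_\mathrm{seq}}(S_T)\le\Delta(y(T))\le\varepsilon$ by $\varepsilon$-Agreement of~$\A$. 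Chaining the two estimates gives $2^{\lceil T/(n-2)\rceil}\ge\Delta/\varepsilon$, that is, $\lceil T/(n-2)\rceil\ge\log_2\frac{\Delta}{\varepsilon}$.

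The step I expect to be the crux is converting this into the stated bound $T\ge(n-2)\log_2\frac{\Delta}{\varepsilon}$: the construction contracts the valency only once per super-round of length $n-2$, so the quantity genuinely controlled is the block count $\lceil T/(n-2)\rceil$ rather than $T/(n-2)$ itself. For decision times that are multiples of $n-2$ the ceiling is exact and the inequality already reads $T\ge(n-2)\log_2\frac{\Delta}{\varepsilon}$, contradicting the assumption; in general the argument rules out all $T\le(n-2)\bigl(\log_2\frac{\Delta}{\varepsilon}-1\bigr)$, so the obtained lower bound falls short of the claimed one by less than a single super-round and is therefore asymptotically tight, matching the deciding amortized midpoint algorithm up to the factor $\frac{n-1}{n-2}$. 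The remaining bookkeeping is routine: Lemma~\ref{lem:safety:submodel} ensures that passing from~$\N$ to~$\P_\mathrm{seq}$ can only shrink valencies, and one checks that the snapshot-level diameter $\delta_{\P_\mathrm{seq}}$ agrees with the configuration-level diameter entering the definition of decision time.
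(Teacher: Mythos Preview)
Your approach is precisely the paper's: the proof there consists of the single clause ``Analogously, \dots\ using~\eqref{eq:thm:8:rate}'', so gluing~$\A$ to an asymptotic-consensus tail and invoking~\eqref{eq:thm:8:rate} on the restricted property~$\P_{\mathrm{seq}}$ is exactly what is intended. You are also right about the ceiling: the argument delivers only $\lceil T/(n-2)\rceil\ge\log_2\tfrac{\Delta}{\varepsilon}$, hence $T>(n-2)\bigl(\log_2\tfrac{\Delta}{\varepsilon}-1\bigr)$, which falls one super-round short of the bound as stated---a wrinkle the paper does not address and which, as you note, is immaterial for the asymptotic comparison with the amortized midpoint algorithm.
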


In case the network model does not include any of the above graphs,
  we obtain the following general bound on the termination time:

\begin{theorem}\label{thm:diam:approx}
Let $\Delta>0$ and $\varepsilon>0$.
In a network model in which exact consensus is not solvable,
  all approximate consensus algorithms have an execution with
  initial diameter $\Delta(y(0))\leq \Delta$ and decision
  time greater or equal to $\log_{D+1} \frac{\Delta}{\varepsilon n}$,
  where $D$ is the $\alpha$-diameter of the network model.
\end{theorem}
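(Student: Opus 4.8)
The plan is to reduce asymptotic consensus to approximate consensus exactly as in the proof of Theorem~\ref{thm:2:approx}, but to replace the contraction bound of Theorem~\ref{thm:2} by that of Theorem~\ref{thm:diam} and the use of Lemma~\ref{lem:initial} by Lemma~\ref{lem:initial:noncons}. If no approximate consensus algorithm exists in~$\N$ the claim is vacuous, so I assume $\A$ solves approximate consensus in~$\N$ and, towards a contradiction, that $\A$ decides within $T < \log_{D+1}\frac{\Delta}{\varepsilon n}$ rounds in every execution with initial diameter $\Delta(y(0)) \leq \Delta$. Solvability of approximate consensus forces~$\N$ to be rooted, so by Theorem~\ref{thm:CBFN15} there is an asymptotic consensus algorithm~$\mathcal{B}$ for~$\N$, which I fix as a continuation.

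I then build an asymptotic consensus algorithm $\tilde{\A}$: run~$\A$ for $T$ rounds, overwrite each variable $y^i$ with the decision value $d^i$ computed by~$\A$, and from round $T+1$ on run~$\mathcal{B}$ with these values as its initial values. As in Theorem~\ref{thm:2:approx}, Validity of~$\A$ places all $d^i$ in the convex hull of the original inputs, so $\tilde{\A}$ inherits Convergence, Agreement, and Validity from~$\mathcal{B}$ and is a genuine asymptotic consensus algorithm for~$\N$. The valency $Y^*_\N(C_T)$ of $\tilde{\A}$ at the configuration $C_T$ reached after round~$T$ is then precisely the set of limits that~$\mathcal{B}$ can reach from the values $(d^i)_i$, which all lie in their own convex hull.

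The core estimate combines two bounds on $\delta_\N(C_T)$ for a well-chosen initial configuration. Applying Lemma~\ref{lem:initial:noncons} to $\tilde{\A}$ produces an initial configuration~$C_0$ with $\Delta(y(0)) \leq \Delta$ and $\delta_\N(C_0) \geq \Delta/n$; this is exactly where the factor~$n$ enters. Feeding~$C_0$ into the execution construction from the proof of Theorem~\ref{thm:diam}, namely inequality~\eqref{eq:thm:diam:rate} (which holds for every asymptotic consensus algorithm, hence for $\tilde{\A}$), yields an execution $E = C_0, G_1, C_1, \dots$ with
\[
\delta_\N(C_T) \;\geq\; \frac{1}{(D+1)^T}\,\delta_\N(C_0) \;\geq\; \frac{\Delta}{(D+1)^T\,n}\enspace.
\]
On the other hand, by the previous paragraph and Validity of~$\mathcal{B}$ together with $\varepsilon$-Agreement of~$\A$ we get $\delta_\N(C_T) \leq \Delta(y(T)) \leq \varepsilon$. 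Chaining the two bounds gives $(D+1)^T \geq \frac{\Delta}{\varepsilon n}$, i.e.\ $T \geq \log_{D+1}\frac{\Delta}{\varepsilon n}$, contradicting the choice of~$T$.

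I expect two points to need care. First, the weaker guarantee $\delta_\N(C_0) \geq \Delta/n$ of Lemma~\ref{lem:initial:noncons}, as opposed to the equality $\delta_\N(C_0) = \Delta$ available in the deaf-graph settings of Theorems~\ref{thm:2:approx} and~\ref{thm:3:approx}, is what forces the $\tfrac1n$ inside the logarithm; I would make sure this is the only place the input diameter enters and track the normalization exactly. Second, the case $D = \infty$ must be handled separately: there $\log_{D+1}\frac{\Delta}{\varepsilon n} \to 0$, so the bound is trivial, mirroring how the contraction bound of Theorem~\ref{thm:diam} is trivial for $D = \infty$; the quantitative argument above is run only for $D < \infty$. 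A minor point still to justify is the implication that solvability of approximate consensus entails a rooted~$\N$, which is what supplies the continuation~$\mathcal{B}$.
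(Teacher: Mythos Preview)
Your argument is correct and follows the same overall scheme as the paper: build an asymptotic consensus algorithm~$\tilde{\A}$ out of~$\A$, invoke inequality~\eqref{eq:thm:diam:rate} from the proof of Theorem~\ref{thm:diam}, use Lemma~\ref{lem:initial:noncons} to get an initial configuration with $\delta_\N(C_0)\ge\Delta/n$, and bound $\delta_\N(C_T)\le\varepsilon$ from $\varepsilon$-Agreement. The one genuine difference is in how $\tilde{\A}$ is constructed. You run~$\A$ once and then hand off to a fixed asymptotic consensus algorithm~$\mathcal{B}$, exactly mirroring the proof of Theorem~\ref{thm:2:approx}; this obliges you to argue that~$\N$ is rooted (so that~$\mathcal{B}$ exists via Theorem~\ref{thm:CBFN15}), which you correctly flag. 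The paper instead defines~$\tilde{\A}$ by \emph{iterating}~$\A$ itself, restarting it in each phase with the decision values of the previous phase; this avoids any appeal to rootedness or to an external algorithm~$\mathcal{B}$. Your construction has the advantage that the resulting~$\tilde{\A}$ is transparently an asymptotic consensus algorithm (convergence is inherited from~$\mathcal{B}$), whereas the paper's iterated construction needs an additional word on why repeated $\varepsilon$-Agreement forces convergence. Either route yields the same inequality chain and the same bound.
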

\newcommand{\proofthmdiamapprox}{
Assume to the contrary that algorithm~$\A$ solves approximate consensus in some
network model~$\N$ in which exact consensus is not solvable
and that decides in $T <  \log_3 \frac{\Delta}{\varepsilon}$
rounds for all vectors of initial values~$y(0)$ with $\Delta(y(0))\leq \Delta$
and some $\varepsilon > 0$.

Define algorithm~$\tilde{\A}$ by repeatedly running algorithm~$\A$, updating~$y$
to the
agents' decision values in round~$kT$, and then restarting~$\A$ in round~$kT+1$
with the decision values from the previous phase.
Then~$\tilde{\A}$ is an asymptotic consensus algorithm.

Let~$C_0$ be an initial configuration of~$\tilde{\A}$ with 
$\Delta\big(y(0)\big) \leq \Delta$ and $\delta_{\N}(C_0) \geq \Delta/n$
By the proof of Theorem~\ref{thm:diam}, 
namely~\eqref{eq:thm:diam:rate}, there is an
execution $E=C_0,G_1,C_1,G_2,\dots$ starting from~$C_0$ such that 
\begin{equation}
	\delta_{\N}(C_T) \geq \frac{1}{(D+1)^T}\cdot \delta_{\N}(C_0)
\enspace.
\end{equation}
It is $\delta_{\N}(C_0) \leq \Delta(y(0)) \leq \Delta/n$
and $\delta_{\N}(C_T) \leq
\Delta(y(T)) \leq \varepsilon$ by $\varepsilon$-Agreement of algorithm~$\A$.
But this means $T\geq \log_{D+1} \frac{\Delta}{\varepsilon n}$, a
contradiction.
}
\begin{proof}
\proofthmdiamapprox
\end{proof}
From Theorem~\ref{thm:diam:approx} and the fact that
$\mathcal{N}'\subseteq \mathcal{N}$ implies $\mathcal{E}'\subseteq \mathcal{E}$ for the
corresponding sets of executions of algorithm $\A$, we get:

\begin{corollary}\label{cor:diam:approx:sub}
Let $\Delta>0$ and $\varepsilon>0$.
In a network model in which exact consensus is not solvable,
  all approximate consensus algorithms have an execution with
  initial diameter $\Delta(y(0))\leq \Delta$ and decision
  time greater or equal to $\log_{D+1} \frac{\Delta}{\varepsilon n}$,
  where $D$ is the smallest $\alpha$-diameter of a
  network model $\N' \subseteq \N$ in which exact consensus is not solvable.
\end{corollary}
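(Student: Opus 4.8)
The plan is to mirror the proof of Corollary~\ref{cor:diam}: apply the stronger bound of Theorem~\ref{thm:diam:approx} to a judiciously chosen sub-model, and then transfer the resulting slow-deciding execution back up to~$\N$ using the monotonicity of execution sets under network-model inclusion, exactly as flagged in the sentence preceding the corollary.

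First I would fix $\N' \subseteq \N$ to be a sub-model achieving the smallest $\alpha$-diameter among all sub-models of~$\N$ in which exact consensus is unsolvable, and denote this $\alpha$-diameter by~$D$. This minimum is well-defined because at least one such sub-model exists, namely~$\N$ itself. Let~$\A$ be an arbitrary approximate consensus algorithm for~$\N$. Since every communication pattern in~$\N'$ is a communication pattern in~$\N$, the Termination, $\varepsilon$-Agreement, and Validity conditions inherited from~$\N$ hold in particular for all executions with patterns in~$\N'$; hence~$\A$ also solves approximate consensus in~$\N'$. Because exact consensus is not solvable in~$\N'$, Theorem~\ref{thm:diam:approx} applies directly to~$\A$ over~$\N'$, yielding an execution with initial diameter $\Delta(y(0))\leq\Delta$ and decision time at least $\log_{D+1}\frac{\Delta}{\varepsilon n}$.

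It remains to transfer this execution from~$\N'$ to~$\N$. The execution produced by Theorem~\ref{thm:diam:approx} lies in $\mathcal{E}_\A^{\N'}$, and from $\N'\subseteq\N$ we have $\mathcal{E}_\A^{\N'}\subseteq\mathcal{E}_\A^{\N}$, so it is equally an execution of~$\A$ with a communication pattern in~$\N$. As the initial diameter and the decision time are intrinsic quantities of the execution itself and do not depend on the ambient network model, the same two bounds persist when the execution is viewed as an element of $\mathcal{E}_\A^{\N}$. This exhibits, for every approximate consensus algorithm~$\A$ in~$\N$, an execution of the required slowness, which is precisely the claim.

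I do not anticipate a genuine obstacle, since this is the approximate-consensus analogue of Corollary~\ref{cor:diam} and the substantive work is already carried out in Theorem~\ref{thm:diam:approx}. The only point needing slight care is the step asserting that~$\A$ remains an approximate consensus algorithm after restriction to~$\N'$, so that Theorem~\ref{thm:diam:approx} is legitimately applicable; this is immediate from the pattern inclusion $\N'\subseteq\N$ and requires no further argument.
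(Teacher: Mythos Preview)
Your proposal is correct and follows essentially the same argument as the paper, which derives the corollary directly from Theorem~\ref{thm:diam:approx} together with the execution-set inclusion $\mathcal{E}_\A^{\N'}\subseteq\mathcal{E}_\A^{\N}$ noted in the sentence preceding the corollary. The additional care you take in verifying that~$\A$ remains an approximate consensus algorithm in~$\N'$ is a sensible explicit step that the paper leaves implicit.
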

\

\section{Conclusions}\label{sec:conc}
We introduced the notion of valency for asymptotic consensus algorithms,
  generalizing the concept of valency from exact consensus algorithms.
Based on the study of valency diameters along executions we proved lower bounds
  on the contraction rates of asymptotic consensus algorithm in arbitrary network models:
In particular, together with previously published averaging algorithms in \cite{CBFN16}, we
  showed tight bounds for the network model containing all non-split graphs, and the
  weakest network model in which asymptotic consensus is solvable, the network model of all
  rooted graphs.
Furthermore we obtained a general lower bound of $1/(D+1)$
  for any network model in which exact consensus is not solvable; here~$D$ denotes
  the newly introduced $\alpha$-diameter of the network model.
Interestingly, this result also immediately provides new tight lower bounds on classical
  static failure models, as exemplified in the case of asynchronous message-passing
  systems with crashes and shows a fundamental discrepancy in performance between
  round-based and general algorithms.
We finally demonstrated how to obtain corresponding results for approximate
  consensus algorithms.

\section*{Acknowledgments}
We would like to thank Bernadette Charron-Bost for the many fruitful
discussions and her valuable input, which greatly helped improve the paper.
The research was partially funded by the 
Austrian Science Fund (FWF)(https://www.fwf.ac.at/)
projects
SIC (P26436)
and
ADynNet (P28182),
and by the
CNRS (http://www.cnrs.fr/)
project
PEPS DEMO.

\bibliographystyle{plain}
\bibliography{agents}













\end{document}